 \newcommand{\pcon}{\ensuremath{ {\rho^\mu_{conv}} }}
 \newcommand{\A}{\ensuremath{\mathbb{A} }}
 \newcommand{\Iu}{\ensuremath{ {\mathcal{I}_\mu }}}
 \newcommand{\I}{\ensuremath{ {\mathcal{I}} }}
 \newcommand{\p}{\ensuremath{ {\rho^i} }}
\author{Marcelo Brutti Righi$^{a,}$\footnote{Corresponding author. We thank the Lead Guest Editor, Professor Claudio Fontana, and the two anonymous Reviewers for their constructive comments, which have been
		useful to improve both the technical quality and presentation of the manuscript. We are grateful for the financial support of FAPERGS (Rio Grande do Sul State
		Research Council) project number 17/2551-0000862-6 and CNPq (Brazilian Research Council) projects number
		302369/2018-0 and 407556/2018-4.}\\  \small{ \href{mailto:marcelo.righi@ufrgs.br}{marcelo.righi@ufrgs.br}} 
	\and Marlon Ruoso Moresco$^{a}$\\\small{\href{mailto:marlon.moresco@ufrgs.br}{marlon.moresco@ufrgs.br}}}
\date{\small{$^{a}$\textit{Business School, Federal University of Rio Grande do Sul, Washington Luiz, 855, Porto Alegre, Brazil, zip 90010-460}}}
 \title{Inf-convolution and optimal risk sharing with countable sets of risk measures}
 \newtheorem{Def}{Definition}[section]
 \newtheorem{Thm}[Def]{Theorem}
 \newtheorem{Prp}[Def]{Proposition}
 \newtheorem{Lmm}[Def]{Lemma}
 \newtheorem{Crl}[Def]{Corollary}
 \theoremstyle{definition}
 \newtheorem{Exm}[Def]{Example}
 \theoremstyle{remark}
 \newtheorem{Rmk}[Def]{Remark}
 \numberwithin{equation}{section}
\begin{document}

 	\maketitle
 	\begin{abstract}
  		The inf-convolution of risk measures is directly related to risk sharing and general equilibrium, and it has attracted considerable attention in mathematical finance and insurance problems. However, the theory is restricted to finite sets of risk measures. This study extends the inf-convolution of risk measures in its convex-combination form to a countable (not necessarily finite) set of alternatives. The intuitive meaning of this approach is to represent a generalization of the current finite convex weights to the countable case. Subsequently, we extensively generalize known properties and results to this framework. Specifically, we investigate the preservation of properties, dual representations, optimal allocations, and self-convolution.
 	\end{abstract}
 	\smallskip
 	\noindent \textbf{Keywords}: Risk measures, Inf-convolution, Risk sharing, Representations, Optimal allocations. 
 	\section{Introduction}
 	
 The theory of risk measures has attracted considerable attention in mathematical finance and insurance since the seminal paper by \cite{Artzner1999}. The books by \cite{Pflug2007}, \cite{Delbaen2012}, \cite{Ruschendor2013}, and \cite{Follmer2016} are comprehensive expositions of this subject. In these studies, a key topic is the inf-convolution of risk measures directly related to risk sharing and general equilibrium. These problems may be connected with regulatory
 capital reduction, risk transfer in insurance--reinsurance
 contracts, and several other applications in  classic studies such as \cite{Borch1962}, \cite{Arrow1963}, \cite{Gerber1978}, and \cite{Buhlmann1982}, as well as more recent research as in \cite{Landsberger1994}, \cite{Dana2003}, and \cite{Heath2004}. 
 
 Formally, the inf-convolution of risk measures is defined as \[\square_{i=1}^n\rho^i(X)=\inf\left\lbrace \sum_{i=1}^{n}\rho^i(X^i)\colon\sum_{i=1}^{n}X^i=X\right\rbrace,\] where $X$ and $X^i,\:i=1,\cdots,n$, belong to some linear space of random variables over a probability space, and $\rho^i,\:i=1,\cdots,n$, are risk measures, which are functionals on this linear space.

 Convex risk measures, as initially proposed by \cite{Follmer2002} and \cite{Frittelli2002}, have recently attracted considerable attention in the context of inf-convolutions, as in several other areas of risk management. This subject is explored in \cite{Barrieu2005}, \cite{Burgert2006}, \cite{Burgert2008}, \cite{Jouini2008}, \cite{Filipovic2008}, \cite{Ludkovski2008}, \cite{Ludkovski2009},  \cite{Acciaio2009}, \cite{Acciaio2009b}, \cite{Tsanakas2009}, \cite{Dana2010}, \cite{Delbaen2012}, and \cite{Kazi-Tani2017}. These studies present a detailed investigation of the properties of inf-convolution as a risk measure per se, as well as optimality conditions for the resulting allocations. 
 
 Beyond the usual approach of convex risk measures, some studies have been concerned with inf-convolution in relation to specific properties, as in \cite{Acciaio2007}, \cite{Grechuk2009}, \cite{Grechuk2012}, \cite{Carlier2012}, \cite{Mastrogiacomo2015}, and \cite{Liu2019}, particular risk measures, as the recent quantile risk sharing in \cite{Embrechts2018}, \cite{Embrechts2018b}, \cite{Weber2018}, \cite{Wang2018}, and \cite{Liu2019b}, or even specific topics, as in \cite{Liebrich2019}. Other recent papers dealing, directly or indirectly, with (classical) inf-convolutions are \cite{Bellini2021}, \cite{Burzoni2022}, \cite{Castagnoli2021}, \cite{Kirilyuk2021}, \cite{Liebrich2021} and \cite{Wang2021}.
 
 However, these studies are restricted to finite sets of risk measures. As observed by \cite{Tsanakas2009}, for convex but not positively homogeneous risk measures and without market frictions like transaction costs, risk can usually be reduced arbitrarily by introducing more subsidiaries, and hence there is no incentive to stop this splitting procedure.  Under uncertainty and without limitations on which lines to open for risk sharing, using the inf-convolution with $n \to \infty$ is helpful to get the best allocation and then open the relevant business lines, as it considers all possible divisions.
 
 Thus, in this study, we seek to extend the inf-convolution of risk measures to an infinite countable set of alternatives. Specifically, we consider a collection of risk measures $\rho_\mathcal{I}=\{\rho^i,\:i\in\mathcal{I}\}$, where $\mathcal{I}$ is a nonempty countable set.  In this sense, we can think into considering a functional  as 	\[
 \rho_{conv}(X)=\inf\left\lbrace \sum_{i\in \mathcal{I}}\rho^i(X^i)\colon\sum_{i\in \mathcal{I}}X^i=X\right\rbrace. \]  It is the minimum amount of risk, which may represent capital requirement, for instance, among all possible ways
 of dividing a risk $X$ into countable fragments and distributing the capital into countable units. Such units could be business lines, agents, lotteries, liquidation times, etc. Thus, such a formulation arises naturally when the agent can pulverize its position in as many fragments as desired. The critical point to be noted is that the split of the position $X$ can be taken at any number of fragments as desired instead of a fixed finite one as in the usual approach. Thus, the value generated for the resulting risk measure is expected to be smaller than the one resulting from any fixed finite inf-convolution.
 
 However,  such adaptation is not possible from a technical point of view since it is ill definite 
 since we cannot assure the convergence of the infinite series $\sum_{i\in \mathcal{I}}\rho^i(X^i)$, which will typically diverge. This is the case in \cite{Tsanakas2009}, \cite{Wang2016} and \cite{Liebrich2019}, where one can easily ends with a pathological $-\infty$ value for risk. Hence, it is necessary to make adjustments in order to guarantee the well definiteness of the functional. More specifically, we have to make three adjustments: (i) to consider weights in the risk summation, (ii) to consider weights in the allocations, and (iii) to consider bounded allocations. In what follows, we explain the mathematical and financial/economic reasoning of such adjustments.
 
We thus use a slightly modified version, with convex combinations instead of simple sum, representing weighting schemes. In the finite case it may be considered as  $\mu=\{\mu_1,\cdots,\mu_n\}\in[0,1]^n,\:\sum_{i=1}^{n}\mu_i=1$; this modified version is defined as \[\rho^{\mu,n}_{conv}(X)=\inf\left\lbrace \sum_{i=1}^{n}\mu_i\rho^i(X^i)\colon\sum_{i=1}^{n}X^i=X\right\rbrace .\]
 Letting $\hat{\rho}^i=\mu_i\rho^i,\:i=1,\cdots,n$ immediately implies that $\rho^{\mu,n}_{conv}(X)$ is a special case of the standard $\square_{i=1}^n \hat{\rho}^i$. \cite{Starr2011} and \cite{Ravanelli2014} show that one needs to solve such formulation in general equilibrium theory in order to obtain all Pareto-optimal  allocations. In this context, $\mu$ are also called Negishi weights, which represent the importance of each $\rho^i$ in the global decision.
 
 Nonetheless, this map is not a monetary risk measure since it fails the usual cash additivity (Translation invariance) property. In \cite{Starr2011} this is not a problem since it deals with usual utilities that do not fulfill this property in general. However, this property is crucial for risk measures. In order to circumvent this issue without jeopardizing the convergence, we consider weights on both summations of risk measures and allocations. In financial matters, such a weighting scheme in the allocations is akin to portfolio weights.  In fact, in the traditional approach, we have that allocations are not divisible. However, this indivisibility assumption is not a natural one as we could
 consider portfolios. 
 Furthermore, notice that the ordering of weighting in the allocations $X^i$ is more flexible than the risk measures since we can freely vary the partition for indicator $i$, while the risk measures are fixed for each $i$. 
 
  We recognize that, despite the previously exposed reasoning, the weights in both risk summation and allocations are introduced in order to ensure nice mathematical properties while their financial and economical motivations are still elusive. In this sense, we left more concrete economical discussion and applications for future research. 
  
 Furthermore, in the countable case, we also need to consider bounded allocations to obtain the desired convergence to guarantee the well definiteness of the convolution.  Note that this is always the case for finite $\mathcal{I}$. Nonetheless, it also has a financial explanation. There may be either regulatory or legal impediments for pooling all risks, which creates bounds on $\{X_i\}$. Such bounds can represent the limit loss a position may support without collapsing the whole institution. Moreover, it may be difficult to create some portfolios, especially those containing risks traded in illiquid markets. Our side constraints are connected with a similar idea in portfolio theory, where one considers bounded admissible positions to exclude strategies that allow arbitrage. Moreover, the consideration of constrained allocations is not new in the literature. For instance, \cite{Burgert2006} and \cite{Burgert2008} study a constrained inf-convolution where $X\geq 0$ (or  $X\leq 0$) implies that  $X^i\geq 0 $ (or $ X^i\leq 0$)  for any $i$ in the allocation. \cite{Wang2020} consider bounded positions for their portfolios an extreme-aggregation measure. Of course, such constraints would hold for the convex weighted allocations.

 Under all such reasoning, in this study, we extend the convex combination-based inf-convolution of risk measures to an infinite countable set of alternatives as follows:
 \[
 \rho^\mu_{conv}(X)=\inf\left\lbrace \sum_{i\in \mathcal{I}}\rho^i(X^i)\mu_i\colon\sum_{i\in \mathcal{I}}X^i\mu_i=X,\:\{X^i\}\:\text{is bounded}\right\rbrace.\] 
 The intuitive principle of this approach is to regard $\mu=\{\mu_i\}_{i\in\mathcal{I}}\subset[0,1]$ such that $\sum_{i\in \mathcal{I}}\mu_i=1$ as a generalization of convex weights in the finite case.We extensively generalize known properties and results to this framework. More specifically, we investigate the preservation of properties of $\rho_{\mathcal{I}}$, dual representations, optimal allocations, and self-convolution. Of course, we do not intend to be exhaustive due to the extent of the related literature. To the best of our knowledge, there is no study in this direction. Furthermore, we can represent our functional in the same vein as the traditional summation under some suitable choices for the conversion. In particular, the two frameworks coincide and are well suited under Positive homogeneity.

 Regarding the related literature, the work of \cite{Wang2016} considers a countable allocation, but with $\rho^i=\rho$ for any $i\in\mathcal{I}$, i.e., a fixed risk measure.
 {In his paper,  regulatory arbitrage occurs when dividing a position into several fragments results in a reduced capital requirement to the joint position.}
 Similar discussions are provided in \cite{Tsanakas2009} and \cite{Liebrich2019}, where it is proved that transaction costs should not be neglected in such a setting since one can quickly end up with a pathological $-\infty$ value for risk. In this sense, these authors study the problem of finding cost-optimal portfolio splits under market frictions. In our framework, we generalize such reasoning by allowing different risk measures. Thus, our approach identifies the limit case regarding all possibilities for the agent concerning the division of a position. We also allow for weights in order to allow for distinct degrees of importance for each risk measure. Furthermore, our approach is well defined and directly related to the usual summation approach, even without transaction costs. Thus, for ease of discussion, we assume that no cost occurs in dividing risks.

 The study by \cite{Righi2019} considers an arbitrary set of risk measures and investigates the properties of combinations of the form $\rho=f(\rho_\mathcal{I})$, where $f$ is a combination function over a linear space generated by the outcomes of $\rho_{\mathcal{I}}(X)=\{\rho^i(X),\:i\in\mathcal{I}\}$. Under the lack of a universal choice of the best risk measure from a set of alternatives, one can consider the use of many candidates to benefit from distinct qualities. The inf-convolution is not suitable for such a framework {with} fixed $X$. Under our approach for a countable set of candidates, it is possible to split the position to obtain the best allocation.
 
 From a mathematical point of view,  the countable $\mathcal{I}$ is a limiting case. The consideration of an arbitrary (not necessarily finite or countable) set of risk measures could be  done by considering a measure (probability) $\mu$ over a suitable sigma algebra $\mathcal{G}$ of $\mathcal{I}$. The problem then becomes
 \[
 \rho^\mu_{conv}(X)=\inf\left\lbrace \int_{\mathcal{I}}\rho^i(X^i)d\mu\colon\int_{\mathcal{I}}X^id\mu=X\right\rbrace.\] 	However, it would be necessary to impose assumptions on $\mathcal{G}$ in order to avoid measurability issues. One of those assumptions is that the maps $i\rightarrow X^i(\omega)$ are measurable for any $\omega\in\Omega$ and every family $\{X^i\in L^\infty,\:i\in\mathcal{I}\}$, this would imply that $\mathcal{G}$ is the power set, which would leave us without meaningful choices for probability measures. In fact, as a consequence of Ulam's theorem, every probability measure on the power set of a set with cardinality as the power set of $\mathbb{N}$ is a discrete probability measure. In this sense, our approach can be considered the closest connection to the limiting integral case.
 
 
 The remainder of this paper is organized as follows.  In Section \ref{sec:prop}, we present the proposed approach and results regarding the preservation of financial and continuity properties from the set of risk measures. In Section \ref{sec:dual}, we prove results regarding dual representations for the convex, coherent, law-invariant, and comonotonic cases. In Section \ref{sec:alloc}, we explore optimal allocations by considering general results regarding the existence, comonotonic improvement, and law invariance of solutions, as well as the comonotonicity and flatness of distributions. In Section \ref{sec:special}, we explore the particular topic of self-convolution and its relation to regulatory arbitrage. In Section \ref{sec:exm} we expose concrete examples of our results on specific choices for families of risk measures.

 	\section{Proposed approach}\label{sec:prop}
 	
 		We consider a probability space $(\Omega,\mathcal{F},\mathbb{P})$. All equalities and inequalities are in the $\mathbb{P}-a.s.$ sense.  Let $L^0=L^0(\Omega,\mathcal{F},\mathbb{P})$ and $L^{\infty}=L^{\infty}(\Omega,\mathcal{F},\mathbb{P})$ be the spaces of (equivalence classes under $\mathbb{P}-a.s.$ equality of) finite and essentially bounded random variables, respectively. When not explicit, we consider in  $L^\infty$ its strong topology. We define $1_A$ as the indicator function for an event $A\in\mathcal{F}$. We identify constant random variables with real numbers. A pair $X,Y\in L^0$ is called comonotone if $\left( X(w)-X(w^{\prime})\right)\left( Y(w)-Y(w^{\prime}) \right)\geq0,\:\:w,w^{'}\in\Omega$ holds $\mathbb{P}\otimes\mathbb{P}-a.s.$ We denote by $X_n\rightarrow X$ convergence in the $L^\infty$ essential supremum norm $\lVert \cdot\rVert_{\infty}$, whereas $\lim\limits_{n\rightarrow\infty}X_n=X$ indicates $\mathbb{P}-a.s.$ convergence. The notation $X\succeq Y$, for $X,Y\in L^\infty$, indicates second-order stochastic dominance, that is, $E[f(X)]\leq E[f(Y)]$ for any increasing convex function $f\colon\mathbb{R}\rightarrow\mathbb{R}$. In particular, $E[X|\mathcal{F}^\prime]\succeq X$ for any $\sigma$-algebra $\mathcal{F}^\prime\subseteq\mathcal{F}$.

 	Let $\mathcal{P}$ be the set of all probability measures on $(\Omega,\mathcal{F})$. We denote, by $E_{\mathbb{Q}}[X]=\int_{\Omega}Xd\mathbb{Q}$, $F_{X, \mathbb{Q}}(x)=\mathbb{Q}(X\leq x)$, and $F_{X, \mathbb{Q}}^{-1}(\alpha)=\inf\left\lbrace x:F_{X, \mathbb{Q}}(x)\geq\alpha\right\rbrace $, the expected value, the (increasing and right-continuous) probability function, and its left quantile for $X\in L^\infty$ with respect to $\mathbb{Q}\in\mathcal{P}$. We write $X\overset{\mathbb{Q}}\sim Y$ when $F_{X,\mathbb{Q}}=F_{Y,\mathbb{Q}}$. We drop subscripts indicating probability measures when $\mathbb{Q}=\mathbb{P}$. Furthermore, let $\mathcal{Q}\subset\mathcal{P}$ be the set of probability measures $\mathbb{Q}$ that are absolutely continuous with respect to $\mathbb{P}$, with Radon--Nikodym derivative $\frac{d\mathbb{Q}}{d\mathbb{P}}$.  We denote the topological dual $(L^\infty)^*$ of $L^\infty$ by $ba$, which is defined as the space of finitely additive signed measures (with finite total variation norm $\lVert\cdot\rVert_{TV}$) that are absolutely continuous with respect to $\mathbb{P}$; moreover, we let $ba_{1,+}=\{m\in ba\colon m\geq0,m(\Omega)=1\}$ and by abuse of notation, we define $E_m[X]=\int_\Omega Xdm$ as the bilinear-form integral of $X\in L^\infty$ with respect to $m\in ba_{1,+}$.
 	
 	We begin with the definition of risk measures and the financial properties we consider in this paper. For more details regarding these properties, we refer to the classic books mentioned in the introduction.
 	
 	\begin{Def}\label{def:risk}
 		A functional $\rho:L^\infty\rightarrow\mathbb{R}$ is called a risk measure. It may have the following properties: 
 		
 		\begin{enumerate}
 			\item Monotonicity: If $X \leq Y$, then $\rho(X) \geq \rho(Y),\:\forall\: X,Y\in L^\infty$.
 			\item Translation invariance: $\rho(X+C)=\rho(X)-C,\:\forall\: X\in L^\infty,\:\forall\:C \in \mathbb{R}$.
 			\item Convexity: $\rho(\lambda X+(1-\lambda)Y)\leq \lambda \rho(X)+(1-\lambda)\rho(Y),\:\forall\: X,Y\in L^\infty,\:\forall\:\lambda\in[0,1]$.
 			\item Positive homogeneity: $\rho(\lambda X)=\lambda \rho(X),\:\forall\: X\in L^\infty,\:\forall\:\lambda \geq 0$.
 			\item Law invariance: If $F_X=F_Y$, then $\rho(X)=\rho(Y),\:\forall\:X,Y\in L^\infty$.
 			\item Comonotonic additivity: $\rho(X+Y)= \rho(X)+\rho(Y),\:\forall\: X,Y\in L^\infty$ with $X,Y$ comonotone.
 			\item Loadedness: $\rho(X)\geq -E[X],\:\forall\:X\in L^\infty$.
 			\item Limitedness: $\rho(X)\leq -\operatorname{ess}\inf X,\:\forall\:X\in L^\infty$. 
 		\end{enumerate}
 		
 		A risk measure  $\rho$ is called monetary if it satisfies (i) and (ii), convex if it is monetary and satisfies (iii), coherent if it is convex and satisfies (iv), law invariant if it satisfies (v), comonotone if it satisfies (vi), loaded if it satisfies (vii), and limited if it satisfies (viii). Unless otherwise stated, we assume that risk measures are normalized in the sense that $\rho(0)=0$. The acceptance set of $\rho$ is defined as $\mathcal{A}_\rho=\left\lbrace X\in L^\infty:\rho(X)\leq 0 \right\rbrace $.
 	\end{Def}

 	Let $\rho_\mathcal{I}=\{\rho^i\colon L^\infty\rightarrow\mathbb{R},\:i\in\mathcal{I}\}$ be some (a priori specified) collection of normalized monetary risk measures, where $\mathcal{I}$ is a nonempty infinite countable set. We define the set of weighting schemes $\mathcal{V}=\left\lbrace \{\mu_i\}_{i\in\mathcal{I}}\subset[0,1]\colon\:\sum_{i\in \mathcal{I}}\mu_i=1\right\rbrace $. Otherwise stated we fix $\mu\in\mathcal{V}$ and denote $\mathcal{I}_\mu=\{i\in\mathcal{I}\colon\mu_i>0\}$. We could consider $\mu\subset(0,1]$ and then $\mathcal{I}=\mathcal{I}_\mu$ without any harm for our results. We use the notations $\{X^i\in L^\infty,\:i\in\mathcal{I}\}=\{X^i,\:i\in\mathcal{I}\}=\{X^i\}_{\:i\in\mathcal{I}}=\{X^i\}$ for families indexed over $\mathcal{I}$; these families should be understood as generalizations of $n$-tuples. For any $X\in L^\infty$, we define its allocations as \[\mathbb{A}(X)=\left\lbrace\{X^i\}_{\:i\in\mathcal{I}}\colon\sum_{i\in\mathcal{I}}X^i\mu_i=X,\:\{X^i\}\:\text{is bounded}\right\rbrace .\] Evidently,  $\omega\rightarrow\sum_{i\in\mathcal{I}}X^i(\omega)\mu_i$ defines a random variable in $L^\infty$ for any $\{X^i\}_{i\in\mathcal{I}}\in\mathbb{A}(X),\:X\in L^\infty$. We note that the identity $\sum_{i\in \mathcal{I}}X^i\mu_i=X$ should then be understood in the $\mathbb{P}-a.s.$ sense.

 	The  countable case we study can be regarded as $\mathcal{I}=\mathbb{N}$ where the set of allocations $\mathbb{A}(X)$ consists of all sequences $\{X^i\}_{i\in\mathbb{N}}\subset L^\infty$ such that the associated sequence
 	$\sum_{i=1}^n \mu_iX^i$ converges to $X$ in the $\mathbb{P}-a.s.$ sense. Note that if $\sum_{i=1}^nX^i\mu_i=X$ for some $n\in\mathbb{N}$, then $\sum_{i=1}^{n+k}X^i\mu_i=X$ for any $k\in\mathbb{N}$ by taking $X^i=0$ for $i>n$. In particular $\{X^1,\dots,X^n,0,\dots\}\in\mathbb{A}(X)$. We have that $\mathbb{A}(X)\not=\emptyset$ for any $X$ because we can select $X^i=X,\:\forall\:i\in\mathcal{I}$. We also note that $\{X^i\}_{i\in\mathcal{I}}\in\mathbb{A}(X+Y)$ is equivalent to $\{X^i-Y\}_{i\in\mathcal{I}}\in\mathbb{A}(X)$ for any $X,Y\in L^\infty$. Furthermore, if  $\{X^i\}_{i\in\mathcal{I}}\in\mathbb{A}(X)$ and  $\{Y^i\}_{i\in\mathcal{I}}\in\mathbb{A}(Y)$, then $\{aX^i+bY^i\}\in\mathbb{A}(aX+bY)$ for any $a,b\in\mathbb{R}$ and $X,Y\in L^\infty$. We now define the core functional in our study.

 	\begin{Def}
 		Let $\rho_\mathcal{I}=\{\rho^i\colon L^\infty\rightarrow\mathbb{R},\:i\in\mathcal{I}\}$ be a collection of monetary risk measures and $\mu\in\mathcal{V}$. The $\mu$-weighted inf-convolution risk measure is a functional $\rho^\mu_{conv}\colon L^\infty\rightarrow\mathbb{R}\cup\{-\infty\}$ defined as
 		\begin{equation}\label{eq:conv}
 		\rho^\mu_{conv}(X)=\inf\left\lbrace\sum_{i\in \mathcal{I}}\rho^i(X^i)\mu_i\colon\{X^i\}_{i\in\mathcal{I}}\in\mathbb{A}(X)\right\rbrace. 
 		\end{equation}
 	\end{Def}

 	\begin{Rmk}
 	 We defined risk measures as functionals that only assume finite values. By abuse of notation, we will also consider $\rho^\mu_{conv}$ to be a risk measure, and we will provide conditions whereby it is finite. Since $\rho_{\mathcal{I}}$ consists of monetary risk measures, then $\rho^\mu_{conv}(X)<\infty$ because for any $X\in L^\infty$ we have that $\rho^\mu_{conv}(X)\leq\sum_{i\in\mathcal{I}}\rho^i(X)\mu_i\leq \lVert X\rVert_\infty<\infty$. Moreover, we note that normalization is not directly inherited from $\rho_\mathcal{I}$; indeed, $\rho^\mu_{conv}(0)\leq 0$. When $\rho^\mu_{conv}$ is convex it is finite if and only if $\rho^\mu_{conv}(0)>-\infty$, which is a well-known fact from convex analysis that a convex function that does not assume $\infty$ is either finite or $-\infty$ point-wise (see Lemma 16 of \cite{Delbaen2012} for instance). 

 	\end{Rmk}

 	The following proposition provides useful results regarding well definiteness,  representations, and properties of $\rho^\mu_{conv}$. 
 	
 	\begin{Prp}\label{prp:bound}
 		We have that
 		\begin{enumerate}
 		\item $\rho^\mu_{conv}$ is well defined.
 			\item For any $X\in L^\infty$ it holds that \begin{align*}
 		\rho^\mu_{conv}(X)&=\inf\left\lbrace \sum_{i\in\mathcal{I}}\rho^i(X-X^i)\mu_i\colon\{X^i\}_{i\in\mathcal{I}}\in\mathbb{A}(0)\right\rbrace\\
 		&=\lim\limits_{n\to\infty}\inf\left\lbrace \sum_{i=1}^n\rho^i(X^i)\mu_i\colon\sum_{i=1}^nX^i\mu_i=X \right\rbrace\\
 			&=\inf\left\lbrace \sum_{i=1}^n\rho^i(X^i)\mu_i\colon n\in\mathbb{N},\sum_{i=1}^nX^i\mu_i=X \right\rbrace.
 				\end{align*}
 			\item If $\rho_\mathcal{I}$ consists of risk measures satisfying positive homogeneity, then $\rho^\mu_{conv}(X)\leq\rho^i(X)\:\forall\:i\in \mathcal{I}_\mu,\:\forall\:X\in L^\infty$. 
 		\end{enumerate}  
 	\end{Prp}
 	\begin{proof}
 		\begin{enumerate}
 			\item We must to show that $\sum_{i\in \mathcal{I}}\rho^i(X^i)\mu_i $ converges for any $\{X^i\}\in\mathbb{A}(X)$.  As $\{X^i\}$ is bounded, there is $y,z \in \mathbb{R}$ such that $y  \leq X^i\leq z, \forall\:i\in\mathcal{I}$. By monotonicity and translation invariance it follows that $-z=\rho^i(z) \leq \rho^i(X^i) \leq \rho^i(y)=-y,\:\forall\:i\in\mathcal{I}$. Thus,
 			\[
 			-z=-z \sum_{i\in\mathcal{I}}\mu_i=- \sum_{i\in\mathcal{I}}z\mu_i\leq \sum_{i\in\mathcal{I}}\rho^i(X^i)\mu_i \leq -\sum_{i\in\mathcal{I}}y \mu_i = -y \sum_{i\in\mathcal{I}}\mu_i = -y.
 			\]
 		Hence, the claim follows by dominated convergence.
 			\item For the first relation, we note that $\sum_{i\in\mathcal{I}}X^i\mu_i=X$ if and only if $\sum_{i\in\mathcal{I}}(X-X^i)\mu_i=\sum_{i\in\mathcal{I}}(X^i-X)\mu_i=0$. Thus, by letting $Y^i=X-X^i,\:\forall\:i\in\mathcal{I}$, we have that  \[\rho^\mu_{conv}(X)=\inf\left\lbrace \sum_{i\in\mathcal{I}}\rho^i(X-Y^i)\mu_i\colon\{Y^i\}_{i\in\mathcal{I}}\in\mathbb{A}(0)\right\rbrace. \] Regarding the second relation,  
 			first notice that
 			\[
 			\lim\limits_{n\rightarrow\infty} \inf  \left\lbrace  \sum_{i=1}^n \rho(X^i)\mu_i : \sum_{i=1}^n X^i\mu_i = X \right\rbrace = \inf \bigcup_{n \in \mathbb{N}}   \left\lbrace  \sum_{i=1}^n \rho(X^i)\mu_i :  \sum_{i=1}^n X^i\mu_i = X \right\rbrace.
 			\]
 			Let $B_n :=  \left\lbrace  \sum_{i=1}^n \rho(X^i)\mu_i :   \sum_{i=1}^n X^i\mu_i = X \right\rbrace$ and $B :=   \left\lbrace  \sum_{i=1}^\infty  \rho(X^i)\mu_i :   \{X^i\}\in\mathbb{A}(X)\right\rbrace$. Such sets depend on $X$ and $B_n \subseteq B_{n+1} \subseteq B \subseteq \mathbb{R}, \forall \; n \in \mathbb{N}$. As any convergent infinite sum is the limit of finite sums, we obtain $B \subseteq cl (\cup_{n \in \mathbb{N}} B_n)$. Thus, $ cl(B) =  cl (\cup_{n \in \mathbb{N}} B_n)$. If both $B$ and $\cup_{n \in \mathbb{N}} B_n$ are unbounded from below then their infimum coincide to $-\infty$. If they both are bounded from below, we have that $ \rho^\mu_{conv} (X)  = \inf B = \inf cl(B) = \inf ( cl (\cup_{n \in \mathbb{N}} B_n)) = \inf  \cup_{n \in \mathbb{N}} B_n = \lim_n \inf \left\lbrace  \sum_{i=1}^n \rho(X^i)\mu_i : \sum_{i=1}^n X^i\mu_i = X \right\rbrace$. Therefore, we only need to show that $B$ is unbounded from below if and only if  $\cup_{n \in \mathbb{N}} B_n$ is unbounded from below. Since $\cup_{n \in \mathbb{N}} B_n \subseteq B$ we clearly have that if $\cup_{n \in \mathbb{N}} B_n$ is unbounded from below so is $B$. For the converse, let $B$ be unbounded from below. Then there is a sequence $\{b_j\} \subseteq B$ such that $b_j \downarrow - \infty$. As any $b_j$ is a limit point of a sequence (in $n$) $\{a_n^j\}\subseteq\cup_{n \in \mathbb{N}} B_n$, we can find another sequence (in $j$) $\{a^j_{n(j)}\} \subseteq \cup_{n \in \mathbb{N}} B_n$, where $n(j)$ is a sufficiently large natural number, such that $a_{n(j)}^j \rightarrow -\infty $. This fact implies that $\cup_{n \in \mathbb{N}} B_n$ is also unbounded from below.
 			
 			For the third relation, we show that $n\to\inf\left\lbrace \sum_{i=1}^n\rho^i(X^i)\mu_i\colon \sum_{i=1}^nX^i\mu_i=X \right\rbrace$ is decreasing. We have that \begin{align*}
 			&\inf\left\lbrace \sum_{i=1}^{n+1}\rho^i(X^i)\mu_i\colon \sum_{i=1}^{n+1}X^i\mu_i=X \right\rbrace\\
 			\leq&\inf\left\lbrace \sum_{i=1}^{n+1}\rho^i(X^i)\mu_i\colon \sum_{i=1}^nX^i\mu_i=X,X^{n+1}=0 \right\rbrace\\
 			=&\inf\left\lbrace \sum_{i=1}^n\rho^i(X^i)\mu_i\colon \sum_{i=1}^nX^i\mu_i=X \right\rbrace.
 			\end{align*}
 			Hence, the infimum with respect to $n\in\mathbb{N}$ can be replaced by a limit.
 			\item We assume, toward a contradiction, that there is $X\in L^\infty$ such that $\rho^\mu_{conv}(X)>\rho^j(X)$ for some $j\in\mathcal{I}_\mu$. Let $\{Y^i\}_{i\in\mathcal{I}}$ be such that $Y^i=(\mu_j)^{-1}X$ for $i=j$, and $Y^i=0$ otherwise. Then, $\{Y^i\}_{i\in\mathcal{I}}\in\mathbb{A}(X)$.  Thus, by positive homogeneity and the definition of $\rho^\mu_{conv}$  we have that \[\rho^j(X)<\rho^\mu_{conv}(X)\leq\sum_{i\in\mathcal{I}}\rho^i(Y^i)\mu_i=\rho^j(X),\] which is a contradiction. In this case, for any $X\in L^\infty$, we have that $\rho^\mu_{conv}(X)\leq\inf_{i\in\mathcal{I}_\mu}\rho^i(X)<\infty$.
 		\end{enumerate}
 		
 	\end{proof}
 	
 	\begin{Rmk}
 	We have that our convex approach is related to the countable summation risk sharing problem without weights. Let
 	\[\A^*(X) :=\left\lbrace\{X^i\}_{\:i\in\mathcal{I}}\colon\sum_{i\in\mathcal{I}}X^i=X,\:\{X^i\}\:\text{is bounded}\right\rbrace.\]
 	note that for each $\{ X^i \} \in \A(X) $ we have that $\{Y^i=\mu^iX^i\}\in\A^{*}(X)$. Similarly, if $\{ X^i \} \in \A^{*}(X) $ we have that $\left\lbrace Y^i=\frac{X^i}{\mu^i},\:i\in\mathcal{I}_\mu\right\rbrace \in\A(X)$. It is clear that there exist a one to one correspondence between $\A(X) $ and $\A^*(X)$ for all $X \in L^\infty$. Moreover, let
 	\[ \rho_\I^* := \left\lbrace \rho_*^i(X)=\p \left(\dfrac{X}{\mu_i}\right) \mu_i\:\forall\:X\in L^\infty, i\in \Iu \right\rbrace\]and\[\rho_{conv} (X) := \inf\left\lbrace\sum_{i\in \Iu}\rho_*^i(X^i) \colon\{X^i\}_{i\in\Iu}\in \A^*(X) \right\rbrace. \]
 	Note that any $\p_* \in \rho^*_\I$ inherits all relevant properties of $\p \in \rho_\I$ used in this study. Directly from those definitions we have that \begin{align*}
 	\rho_{conv}(X) &=   \inf\left\lbrace\sum_{i\in \Iu}\rho^i\left( \dfrac{X^i}{\mu_i}\right) \mu_i \colon\{ X^i\}\in \A^*(X) \right\rbrace
 	\\ &=  \inf\left\lbrace\sum_{i\in \I}\rho^i\left( X^i \right) \mu_i \colon\{ \mu_iX^i\}_{i\in\I}\in \A^{*}(X) \right\rbrace
 	\\ &=  \inf\left\lbrace\sum_{i\in \I}\rho^i\left( X^i \right) \mu_i \colon\{ X^i\}_{i\in\I}\in \A(X)  \right\rbrace
 	= \pcon (X).
 	\end{align*}
 	Of course, under positive homogeneity $\rho_\mathcal{I}=\rho^{*}_\mathcal{I}$ for any $i\in\mathcal{I}_\mu$ and, consequently, our approach becomes the additive risk sharing.
 	 	\end{Rmk}

  We now present a result regarding the preservation by $\rho^\mu_{conv}$ of financial properties of $\rho_{\mathcal{I}}$.
 	
 	\begin{Prp}\label{prp:propconv}
 		$\rho^\mu_{conv}$ is monetary. Moreover, if $\rho_\mathcal{I}$ consists of risk measures with  convexity, positive homogeneity, law invariance, loadedness, or limitedness, then each property is inherited by $\rho^\mu_{conv}$.
 	\end{Prp}
 	 	
 	\begin{proof}
 		Monotonicity, translation invariance,  convexity, positive homogeneity, loadedness and limitedness are directly obtained from definition of $\rho^\mu_{conv}$.  For law invariance, we begin by showing that $\rho^\mu_{conv}$ inherits law invariance on the sub-domain \[L^\infty_{\perp}:=\{X\in L^\infty\colon\exists\:\text{uniform on}\:[0,1]\:\text{r.v. independent of}\:X\}.\] To that, let $X,Y\in L^\infty_{\perp}$ with $X\sim Y$ and take some arbitrary $\{X^i\}_{i\in\mathcal{I}}\in\mathbb{A}(X)$. Note that we can have a countable set $\{U^i,\:i\in\mathcal{I}\}$ of i.i.d. uniform on $[0,1]$ random variables independent of $Y$ because our probability space is atomless, see Theorem 1 of \cite{Delbaen2012} for instance. Now take $X^0=X$, $Y^0=Y$ and let $Y^i=F^{-1}_{X^i|X^{i-1},\cdots,X^0}(U^i|Y^{i-1},\cdots,Y^0)\:\forall\:i\in\mathcal{I}$, which is the conditional quantile function. We thus get that $(Y,Y^1,\cdots,Y^n)\sim (X,X^1,\cdots,X^n)\:\forall\:n\in\mathcal{I}$ and $\{Y^i\}_{i\in\mathcal{I}}\in\mathbb{A}(Y)$. In this sense we obtain $\rho^\mu_{conv}(Y)\leq\sum_{i\in\mathcal{I}}\rho^i(Y^i)\mu_i=\sum_{i\in\mathcal{I}}\rho^i(X^i)\mu_i$. Taking the infimum over $\mathbb{A}(X)$ we get $\rho^\mu_{conv}(Y)\leq\rho^\mu_{conv}(X)$. By reversing roles of $X$ and $Y$, we obtain Law Invariance on $L^\infty_{\perp}$. Now, let $X,Y\in L^\infty$ with $X\sim Y$ and define $\{X_n\}\subset L^\infty$ as $X_n=\frac{1}{n}\lfloor nX\rfloor$ and $\{Y_n\}\subset L^\infty$ as $Y_n=\frac{1}{n}\lfloor nY\rfloor$, where $\lfloor\cdot\rfloor$ is the floor function. Moreover, it is easy to show that $X_n\sim Y_n,\:\forall\:n\in\mathbb{N}$. By Lemma 3 in \cite{Liu2019} we have that if $X\in L^\infty$ takes values in a countable set, then $X\in L^\infty_{\perp}$. Thus $\{X_n\}\subset L^\infty_\perp$ and $\rho^\mu_{conv}(X_n)=\rho^\mu_{conv}(Y_n),\:\forall\:n\in\mathbb{N}$. We have that $\rho^\mu_{conv}$ is Lipschitz continuous since it is monetary. In particular, it possesses continuity in $\lVert\cdot\rVert_\infty$ norm. Note that $X_{2^n}\rightarrow X$. Thus $|\rho^\mu_{conv}(X)-\rho^\mu_{conv}(Y)|\leq\lim\limits_{n\rightarrow\infty}|\rho^\mu_{conv}(X)-\rho^\mu_{conv}(X_{2^n})|+\lim\limits_{n\rightarrow\infty}|\rho^\mu_{conv}(Y)-\rho^\mu_{conv}(Y_{2^n})|=0$. Hence $\rho^\mu_{conv}(X)=\rho^\mu_{conv}(Y)$.
 	\end{proof}
 	
 	\begin{Rmk}\label{rmk:prop}
 		\begin{enumerate}
 			\item Concerning the preservation of subadditivity, that is, $\rho(X+Y)\leq\rho(X)+\rho(Y)$, the result follows by an argument analogous to that for convexity, but with $X+Y$ instead of $\lambda X+(1-\lambda)Y$. We note that in this case, we have normalization because $\rho^\mu_{conv}(0)\leq0$, whereas $\rho^\mu_{conv}(X)\leq\rho^\mu_{conv}(X)+\rho^\mu_{conv}(0)$, which implies $\rho^\mu_{conv}(0)\geq0$. If the risk measures of $\rho_{\mathcal{I}}$ are loaded, we also have normalization because $0=\rho(0)\geq\rho^\mu_{conv}(0)\geq E[-0]=0$. Of course, in the case of positive homogeneity, we also obtain normalization. 
 			\item	Regarding the preservation of comonotonic additivity, let $X,Y\in L^\infty$ be a comonotone pair. Then, $\lambda X$,$(1-\lambda)Y$ is also comonotone for any $\lambda\in[0,1]$. We note that for any monetary risk measure $\rho$, comonotonic additivity implies positive homogeneity. Then, we have that \[\rho^\mu_{conv}(X+Y)=2\rho^\mu_{conv}\left(\frac{X}{2}+\frac{Y}{2} \right)\leq2\left(\frac{1}{2}\rho^\mu_{conv}(X)+\frac{1}{2}\rho^\mu_{conv}(Y) \right) =\rho^\mu_{conv}(X)+\rho^\mu_{conv}(Y).\] Thus, we obtain subadditivity for comonotone pairs. If we have, additionally, convexity (and hence coherence) for $\rho_{\mathcal{I}}$, then comonotonic additivity is preserved, as shown in Theorem \ref{Thm:dualLI}.
 		\end{enumerate}
 		
 	\end{Rmk}

 In addition to the usual norm-based continuity notions, $\mathbb{P}-a.s.$ pointwise continuity notions are relevant in the context of risk measures. In the following, we focus on the preservation by $\rho^\mu_{conv}$ of continuity properties of $\rho_{\mathcal{I}}$
 	
 	\begin{Def}\label{def:cont}
 		A risk measure $\rho:L^\infty\rightarrow\mathbb{R}$ is called
 		\begin{enumerate}
 			\item Fatou continuous:  If $\lim\limits_{n\rightarrow\infty}X_n=X$ implies that $\rho(X) \leq \liminf\limits_{n\rightarrow\infty} \rho( X_{n})$, $\forall\:\{X_n\}_{n=1}^\infty$ bounded in $L^\infty$ norm and for any $X\in L^\infty$.
 			\item Continuous from above: If $\lim\limits_{n\rightarrow\infty}X_n=X$, with $\{X_n\}$ being decreasing, implies that $\rho(X)= \lim\limits_{n\rightarrow\infty} \rho( X_{n})$, $\:\forall\:\{X_n\}_{n=1}^\infty,X\in L^\infty$.
 			\item Continuous from below: If $\lim\limits_{n\rightarrow\infty}X_n=X$, with $\{X_n\}$ being increasing, implies that $\rho(X)= \lim\limits_{n\rightarrow\infty} \rho( X_{n})$, $\:\forall\:\{X_n\}_{n=1}^\infty,X\in L^\infty$.
 			\item Lebesgue continuous: If $\lim\limits_{n\rightarrow\infty}X_n=X$ implies that $\rho(X)= \lim\limits_{n\rightarrow\infty} \rho( X_{n})$, $\:\forall\:\{X_n\}_{n=1}^\infty$ bounded  in $L^\infty$ norm and $X\in L^\infty$.
 		\end{enumerate}
 	\end{Def}
 	
 	\begin{Prp}\label{prp:propconv2}
 		 We have that
 		\begin{enumerate}
 			\item  If $\rho_\mathcal{I}$ consists of Lipschitz continuous risk measures in relation to some pseudo metric $d$, then $\rho^\mu_{conv}$ is  Lipschitz continuous in that pseudo-metric.
 			\item If $\rho_\mathcal{I}$ consists of continuous from below risk measures, then $\rho^\mu_{conv}$ is continuous from below.
 		\end{enumerate}
 	\end{Prp}
 	
 	\begin{proof}
 		\begin{enumerate}
 			\item For each $i\in\mathcal{I}$, we have that $|\rho^i(X)-\rho^i(Y)|\leq C d( X,Y),\:C>0$.  Thus,
 			\begin{align*}
 			&\left|\rho^\mu_{conv}(X)-\rho^\mu_{conv}(Y)\right|\\
 			=&\left| \inf\left\lbrace \sum_{i\in\mathcal{I}}\rho^i(X-X^i)\mu_i\colon\{X^i\}_{i\in\mathcal{I}}\in\mathbb{A}(0)\right\rbrace-\inf\left\lbrace \sum_{i\in\mathcal{I}}\rho^i(Y-X^i)\mu_i\colon\{X^i\}_{i\in\mathcal{I}}\in\mathbb{A}(0)\right\rbrace\right|\\
 			\leq&\sup\left\lbrace\left|\sum_{i\in\mathcal{I}}\left[\rho^i(X-X^i)-\rho^i(Y-X^i) \right]\mu_i  \right|\colon\{X^i\}_{i\in\mathcal{I}}\in\mathbb{A}(0)  \right\rbrace\leq C d( X,Y).
 			\end{align*}
 			\item 	Let $\{X_n\}_{n=1}^\infty\subset L^\infty$ be increasing such that $\lim\limits_{n\rightarrow\infty}X_n=X\in L^\infty$. By the monotonicity of $\rho_{\mathcal{I}}$ we have that each $\rho^i(X_n-X^i) $ is decreasing in $n$. Moreover, $i\to \rho^i(X_n-X^i)$ is bounded above by $\sup_{i\in\mathcal{I}}\lVert X_1-X^i\rVert_\infty<\infty$. Thus, by the monotone convergence Theorem we have that
 			\begin{align*}
 			\lim\limits_{n\rightarrow\infty}\rho^\mu_{conv}(X_n)&=\inf\limits_{n}\left\lbrace \inf\limits_{\{X^i\}\in\mathbb{A}(0)} \sum_{i\in\mathcal{I}}\rho^i(X_n-X^i)\mu_i\right\rbrace\\
 			&=\inf\limits_{\{X^i\}\in\mathbb{A}(0)}\left\lbrace \inf\limits_{n}\sum_{i\in\mathcal{I}}\rho^i(X_n-X^i)\mu_i\right\rbrace \\
 			&=\inf\limits_{\{X^i\}\in\mathbb{A}(0)}\left\lbrace \sum_{i\in\mathcal{I}}\left[\inf\limits_{n}\rho^i(X_n-X^i)\right] \mu_i\right\rbrace =\rho^\mu_{conv}(X).
 			\end{align*}
 		\end{enumerate}
 	\end{proof}
 	
 	\begin{Rmk}\label{rmk:prop2}
 		\begin{enumerate}
 			\item It is important to note that Fatou continuity is not preserved even when $\mathcal{I}$ is finite, as $\lim\limits_{n\rightarrow\infty}X_n=X$ does not imply the existence of $\{X^i_n\}_{i\in\mathcal{I}}\in\mathbb{A}(X_n)\:\forall\:n\in\mathbb{N}$ and $\{X^i\}_{i\in\mathcal{I}}\in\mathbb{A}(X)$ such that $\lim\limits_{n\rightarrow\infty}X^i_n=X^i,\:\forall\:i\in\mathcal{I}$. See Example 9 in \cite{Delbaen2000}, for instance. Accordingly, one should be careful when dual representations that depends of such continuity property are considered. 
 		\item 	 If $\rho_{\mathcal{I}}$ consists of convex risk measures that are continuous from below, then $\rho^\mu_{conv}$ is convex Lebesgue continuous. This is true because continuity from below is equivalent to Lebesgue continuity for convex risk measures. See Theorem 4.22 in \cite{Follmer2016} for instance.
 	\item 		 Robustness is a key concept in the presence of model uncertainty. It implies a small variation in the output functional when there is bad specification. See, for instance, \cite{Cont2010}, \cite{Kratschmer2014}, and \cite{Kiesel2016}. Formally, if $d$ is a pseudo-metric on $L^\infty$, then a risk measure $\rho\colon L^\infty\rightarrow\mathbb{R}$ is called $d$-robust if it is continuous with respect to $d$.	 In light of Proposition \ref{prp:propconv2}, we have that the continuity of risk measures in $\rho_{\mathcal{I}}$ with respect to $d$ is not generally preserved by $\rho^\mu_{conv}$. Consequently, the same is true for $d$-robustness. Nonetheless, under Lipschitz continuity, we have the preservation of robustness.
 \end{enumerate}
 		\end{Rmk}



 	\section{Dual representations}\label{sec:dual}

 	We now present the main results regarding the representation of $\rho^\mu_{conv}$ for convex cases. To that, we need the following fundamental result.
 	
 	 	\begin{Thm}[Theorem 2.3 in \cite{Delbaen2002}, Theorem 4.33 in \cite{Follmer2016}]\label{the:dual}
 		Let $\rho : L^\infty\rightarrow \mathbb{R}$ be a risk measure. Then,
 		\begin{enumerate}
 			\item $\rho$ is a  convex risk measure if and only if it can be represented as
 			\begin{equation}\label{eq:dual}
 			\rho(X)=\max\limits_{m\in ba_{1,+}}\left\lbrace E_m[-X]-\alpha^{min}_\rho(m) \right\rbrace,\:\forall\:X\in L^\infty,
 			\end{equation}
 			where  $\alpha^{min}_\rho : ba_{1,+}\rightarrow\mathbb{R}_+\cup\{\infty\}$, defined as $\alpha^{min}_\rho(m)=\sup\limits_{X\in L^\infty}\left\lbrace E_m[-X]-\rho(X) \right\rbrace= \sup\limits_{X\in\mathcal{A}_\rho}E_m[-X]$, is a lower semi-continuous (in the total-variation norm) convex function that is called penalty term. 
 			\item 	$\rho$ is a coherent risk measure if and only if it can be represented as
 			\begin{equation}\label{eq:cohdual}
 			\rho(X)=\max\limits_{m\in\mathcal{Q}_\rho} E_m[-X],\:\forall\:X\in L^\infty,
 			\end{equation} where $\mathcal{Q}_\rho\subseteq ba_{1,+}$ is a nonempty, closed, and convex set that is called the dual set of $\rho$.
 		\end{enumerate}
 	\end{Thm}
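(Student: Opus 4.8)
The plan is to derive both parts from the Fenchel--Moreau biconjugation theorem applied to the dual pair $(L^\infty, ba)$, where $ba = (L^\infty)^*$, treating the ``if'' implications as routine and concentrating on the ``only if'' implications and on the attainment of the suprema. For sufficiency: if $\rho(X) = \sup_{m \in ba_{1,+}}\{E_m[-X] - \alpha(m)\}$ with $\alpha \geq 0$ and $\inf_m \alpha(m) = 0$, then $\rho$ is convex as a supremum of affine maps; monotone because $m \geq 0$ forces $E_m[-X] \geq E_m[-Y]$ when $X \leq Y$; and translation invariant because $m(\Omega) = 1$ forces $E_m[-(X+C)] = E_m[-X] - C$. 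If in addition $\alpha$ is the $\{0,\infty\}$-indicator of a set $\mathcal{Q}_\rho$, positive homogeneity is immediate, so $\rho$ is coherent.

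For necessity in (i), the key preliminary remark is that monotonicity together with translation invariance makes $\rho$ norm-Lipschitz: from $X \leq Y + \lVert X - Y\rVert_\infty$ one gets $\rho(X) \geq \rho(Y) - \lVert X - Y\rVert_\infty$, and symmetrically. Hence $\rho$ is norm-continuous, in particular proper and convex with norm-closed epigraph, hence lower semicontinuous for $\sigma(L^\infty, ba)$, so $\rho = \rho^{**}$. Writing a generic element of $(L^\infty)^*$ as $X \mapsto \int X \, dm$ with $m \in ba$ and substituting $m \rightsquigarrow -m$ to match the sign convention $E_m[-X]$, the convex conjugate of $\rho$ is exactly $\alpha^{min}_\rho(m) = \sup_{X \in L^\infty}\{E_m[-X] - \rho(X)\}$, so $\rho(X) = \sup_{m \in ba}\{E_m[-X] - \alpha^{min}_\rho(m)\}$. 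One then restricts the supremum to $ba_{1,+}$: if $m$ is not positive there is $X \geq 0$ with $E_m[-X] > 0$, and monotonicity gives $\rho(tX) \leq \rho(0) = 0$, whence $\alpha^{min}_\rho(m) \geq \sup_{t>0} t\, E_m[-X] = \infty$; if $m(\Omega) \neq 1$, translation invariance gives $\alpha^{min}_\rho(m) \geq \sup_{t \in \mathbb{R}} t(1 - m(\Omega)) = \infty$. The identity $\alpha^{min}_\rho(m) = \sup_{X \in \mathcal{A}_\rho} E_m[-X]$ follows from cash-invariance, since $X + \rho(X) \in \mathcal{A}_\rho$ for every $X$. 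Finally $\alpha^{min}_\rho$ is $\sigma(ba, L^\infty)$-lower semicontinuous as a supremum of weak*-continuous affine maps, and $ba_{1,+}$ is weak*-compact by Banach--Alaoglu, so for each fixed $X$ the weak*-upper semicontinuous map $m \mapsto E_m[-X] - \alpha^{min}_\rho(m)$ attains its maximum; the supremum is a maximum.

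For (ii), positive homogeneity makes $\alpha^{min}_\rho(m) = \sup_{t>0} t\bigl(\sup_{X}(E_m[-X] - \rho(X))\bigr)$, so $\alpha^{min}_\rho$ takes only the values $0$ and $+\infty$; putting $\mathcal{Q}_\rho = \{m \in ba_{1,+} : \alpha^{min}_\rho(m) = 0\} = \{m \in ba_{1,+} : E_m[-X] \leq \rho(X)\ \forall X\}$ turns the representation into $\rho(X) = \max_{m \in \mathcal{Q}_\rho} E_m[-X]$, with $\mathcal{Q}_\rho$ nonempty (properness), convex (obvious), and weak*-closed (a sublevel set of a weak*-lsc function). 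It remains to see that Fatou continuity is precisely what lets the representing functionals be chosen countably additive and absolutely continuous with respect to $\mathbb{P}$, i.e. $\mathcal{Q}_\rho \subseteq \mathcal{Q}$: the route is to show Fatou continuity of $\rho$ is equivalent to $\mathcal{A}_\rho$ being $\sigma(L^\infty, L^1)$-closed, after which separating a point outside a translate of $\mathcal{A}_\rho$ from this convex cone by an $L^1$-functional produces the representation over $\mathcal{Q}$, and one checks the resulting dual set is still closed and convex and recovers $\rho$.

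The main obstacle is the single nontrivial equivalence ``$\rho$ Fatou continuous $\iff \mathcal{A}_\rho$ is $\sigma(L^\infty, L^1)$-closed'' (equivalently, $\rho$ continuous from above). One direction is a diagonal/subsequence argument with dominated convergence; the other requires the Krein--\v{S}mulian theorem, because weak*-closedness of a convex subset of $L^\infty$ can be tested on each norm ball $\{\lVert X\rVert_\infty \leq r\}$, and on such a ball $\sigma(L^\infty, L^1)$-convergence can be routed through $\mathbb{P}$-a.s.\ convergent subsequences, where the Fatou property applies. Everything else is bookkeeping with the conjugate and the order/translation structure; as the statement is classical, I would simply invoke \cite{Delbaen2002} and \cite{Follmer2016} for the Krein--\v{S}mulian step and the $ba$-versus-$L^1$ dichotomy.
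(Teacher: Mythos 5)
This theorem is imported by the paper from \cite{Delbaen2002} and \cite{Follmer2016} without an in-paper proof, and your part (i) reconstructs essentially the standard argument behind those references: Lipschitz continuity from monotonicity plus translation invariance, Fenchel--Moreau biconjugation for the pair $(L^\infty,ba)$, the computation showing the conjugate is $+\infty$ off $ba_{1,+}$, the identity $\alpha^{min}_\rho(m)=\sup_{X\in\mathcal{A}_\rho}E_m[-X]$ via $X+\rho(X)\in\mathcal{A}_\rho$, and attainment of the maximum from weak$^*$ compactness of $ba_{1,+}$ (Banach--Alaoglu) combined with weak$^*$ upper semicontinuity of $m\mapsto E_m[-X]-\alpha^{min}_\rho(m)$. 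That part is correct and complete at the level of a sketch.

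Part (ii) contains a genuine error in how you handle Fatou continuity. You assert that Fatou continuity ``is precisely what lets the representing functionals be chosen countably additive\ldots, i.e.\ $\mathcal{Q}_\rho\subseteq\mathcal{Q}$.'' That is false: Fatou continuity does not shrink the $ba$-dual set to countably additive elements. Take $\rho=ML$, $ML(X)=-\operatorname{ess}\inf X$: it is coherent and Fatou continuous, yet $\alpha^{min}_{ML}(m)=\sup_{X\geq 0}E_m[-X]=0$ for every $m\in ba_{1,+}$, so its dual set inside $ba_{1,+}$ is all of $ba_{1,+}$, purely finitely additive elements included. What Fatou continuity actually gives (via continuity from above, equivalently $\sigma(L^\infty,L^1)$-closedness of $\mathcal{A}_\rho$ through Krein--\v{S}mulian, as you indicate) is that the supremum over the countably additive members $\mathcal{Q}_\rho\cap\mathcal{Q}$ already recovers $\rho$ --- but then it is a supremum, not a maximum, which is exactly the caveat in the paper's Remark~\ref{rmk:dual}. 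So you cannot simultaneously keep the maximum and force $\mathcal{Q}_\rho\subseteq\mathcal{Q}$. Relatedly, in the sufficiency direction of (ii) you only verify coherence from the representation and never verify Fatou continuity; with a representing set that is merely a closed convex subset of $ba_{1,+}$ this direction in fact fails (a single purely finitely additive $m$ yields a coherent $\rho(X)=E_m[-X]$ that is not Fatou continuous), and the correct sufficiency argument needs countably additive representing measures so that dominated convergence applies. The fix is to separate the two claims: (a) every coherent $\rho$ admits an attained representation over a nonempty weak$^*$-closed convex subset of $ba_{1,+}$, with no continuity hypothesis; and (b) $\rho$ is Fatou continuous if and only if $\rho(X)=\sup_{\mathbb{Q}\in\mathcal{Q}_\rho\cap\mathcal{Q}}E_{\mathbb{Q}}[-X]$ for all $X$, with no claim of attainment.
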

 	
 	\begin{Rmk}\label{rmk:dual}
 		With the assumption of Fatou continuity, the representations in the previous theorem could be considered over $\mathcal{Q}$ instead of $ba_{1,+}$, but with the supremum not necessarily being attained. Moreover, for convex risk measures, we can define certain subgradients using Legendre--Fenchel duality (i.e., convex conjugates), as follows:
 		\begin{align*}
 		\partial \rho(X)&=\left\lbrace m\in ba_{1,+}\colon \rho(Y)-\rho(X)\geq E_m[-(Y-X)]\:\forall\:Y\in L^\infty\right\rbrace\\
 		&=\left\lbrace m\in ba_{1,+}\colon E_m[-X]-\alpha^{min}_\rho(m)\geq\rho(X)\right\rbrace,\\
 		&\\
 		\partial \alpha^{min}_\rho(m)&=\left\lbrace X\in L^\infty\colon \alpha^{min}_\rho(n)-\alpha^{min}_\rho(m)\geq E_{(n-m)}[-X]\:\forall\:n\in ba\right\rbrace\\
 		&=\left\lbrace X\in L^\infty\colon E_m[-X]-\rho(X)\geq\alpha^{min}_\rho(m)\right\rbrace,
 		\end{align*}
 		The negative sign in the expectation above is used to maintain the (anti) monotonicity pattern of risk measures. We note that these subgradient sets could be empty if we consider only $\mathcal{Q}$ instead of $ba_{1,+}$. Moreover, by Theorem \ref{the:dual}, we could replace the inequalities in the definition of sub-gradients by equalities. Further, it is immediate that $X\in\partial \alpha^{min}_\rho(m)$ if and only if $m\in \partial \rho(X)$. 
 	\end{Rmk}
 	\begin{Thm}\label{Thm:dualconv}
 		Let $\rho_\mathcal{I}$ be a collection of convex risk measures. We have that
 		\begin{enumerate}
 			\item The acceptance set of $\rho^\mu_{conv}$ is \begin{equation}\mathcal{A}_{\rho^{\mu}_{conv}}= cl(\mathcal{A}_\mu),\end{equation} where $\mathcal{A}_\mu=\left\lbrace X\in L^\infty\colon \exists\:\{X^i\}_{i\in\mathcal{I}}\in\mathbb{A}(X)\:s.t.\:X^i\in\mathcal{A}_{\rho^i}\:\forall\:i\in\mathcal{I}_\mu\right\rbrace=\sum_{i\in \mathcal{I}}\mathcal{A}_{\rho^i}\mu_i$. Moreover, $\mathcal{A}_\mu$ is not dense in $L^\infty$ if and only if $\mathcal{A}_\mu\not=L^\infty$.
 			\item The minimal penalty term of $\rho^\mu_{conv}$ is
 			\begin{equation}\label{eq:penalty}
 			\alpha^{min}_{\rho^{\mu}_{conv}}(m)=\sum_{i\in\mathcal{I}}\alpha^{min}_{\rho^i}(m)\mu_i,\:\forall\:m\in ba_{1,+}.
 			\end{equation} 
 			\end{enumerate}
 	\end{Thm}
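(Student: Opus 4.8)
\emph{Part (i).} The set equality $\mathcal{A}_\mu=\sum_{i\in\mathcal{I}}\mathcal{A}_{\rho^i}\mu_i$ is just unwinding definitions: if $X\in\mathcal{A}_\mu$ is witnessed by $\{X^i\}_{i\in\mathcal{I}}\in\mathbb{A}(X)$ with $X^i\in\mathcal{A}_{\rho^i}$ for $i\in\mathcal{I}_\mu$, then replacing $X^i$ by $0\in\mathcal{A}_{\rho^i}$ (normalization) for $i\notin\mathcal{I}_\mu$ leaves the family bounded and in $\mathbb{A}(X)$ (since $\mu_i=0$ there) and makes every component acceptable; the reverse inclusion is immediate. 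The heart of the first claim is the identity
$$\rho^\mu_{conv}(X)=\inf\{m\in\mathbb{R}\colon X+m\in\mathcal{A}_\mu\},\qquad X\in L^\infty.$$
For ``$\le$'' I would first record $\mathcal{A}_\mu\subseteq\mathcal{A}_{\rho^\mu_{conv}}$ (if $\{X^i\}\in\mathbb{A}(X)$ has $X^i\in\mathcal{A}_{\rho^i}$ for $i\in\mathcal{I}_\mu$, then $\rho^\mu_{conv}(X)\le\sum_{i\in\mathcal{I}}\rho^i(X^i)\mu_i\le0$, the series converging by Proposition~\ref{prp:bound}) and then apply translation invariance. For ``$\ge$'', given any $\{X^i\}\in\mathbb{A}(X)$ I pass to $\widetilde{X}^i:=X^i+\rho^i(X^i)$, so $\rho^i(\widetilde{X}^i)=0$ and hence $\widetilde{X}^i\in\mathcal{A}_{\rho^i}$; the family $\{\widetilde{X}^i\}$ is still bounded because $|\rho^i(X^i)|\le\lVert X^i\rVert_\infty$ for a monetary risk measure, so $\lVert\widetilde{X}^i\rVert_\infty\le2\lVert X^i\rVert_\infty$; and $\sum_{i=1}^n\widetilde{X}^i\mu_i\to X+\delta$ with $\delta:=\sum_{i\in\mathcal{I}}\rho^i(X^i)\mu_i$, so $\{\widetilde{X}^i\}\in\mathbb{A}(X+\delta)$ and $X+\delta\in\mathcal{A}_\mu$; thus $\inf\{m\colon X+m\in\mathcal{A}_\mu\}\le\sum_{i\in\mathcal{I}}\rho^i(X^i)\mu_i$, and taking the infimum over $\mathbb{A}(X)$ finishes it.

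Two structural facts then close part (i). First, $\mathcal{A}_\mu$ is upward closed: if $X\in\mathcal{A}_\mu$ via $\{X^i\}$ and $Z\ge X$, then $\{X^i+(Z-X)\}\in\mathbb{A}(Z)$ and each component stays in $\mathcal{A}_{\rho^i}$ by monotonicity. Second, $\mathcal{A}_{\rho^\mu_{conv}}$ is closed, since $\rho^\mu_{conv}$ is Lipschitz continuous by Proposition~\ref{prp:propconv2} (or, in the degenerate convex case, $\rho^\mu_{conv}\equiv-\infty$ and $\mathcal{A}_{\rho^\mu_{conv}}=L^\infty$). The second fact together with $\mathcal{A}_\mu\subseteq\mathcal{A}_{\rho^\mu_{conv}}$ gives $cl(\mathcal{A}_\mu)\subseteq\mathcal{A}_{\rho^\mu_{conv}}$. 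For the reverse, take $X$ with $\rho^\mu_{conv}(X)\le0$; by the displayed identity, for every $\varepsilon>0$ there is $m<\varepsilon$ with $X+m\in\mathcal{A}_\mu$, and either some such $m\le0$, whence $X\ge X+m\in\mathcal{A}_\mu$ and $X\in\mathcal{A}_\mu$ by upward closedness, or $\inf\{m\colon X+m\in\mathcal{A}_\mu\}=0$ and the witnesses converge to $X$ in $\lVert\cdot\rVert_\infty$, so $X\in cl(\mathcal{A}_\mu)$. For the ``moreover'' statement, one implication is trivial; conversely, if $\mathcal{A}_\mu$ is dense then $\mathcal{A}_{\rho^\mu_{conv}}=cl(\mathcal{A}_\mu)=L^\infty$, so $\rho^\mu_{conv}\le0$ on $L^\infty$, which by translation invariance forces $\rho^\mu_{conv}\equiv-\infty$, and then the displayed identity shows every $X$ dominates some constant lying in $\mathcal{A}_\mu$, so $\mathcal{A}_\mu=L^\infty$ by upward closedness.

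\emph{Part (ii).} I would compute the conjugate directly from the finite-allocation form of $\rho^\mu_{conv}$ in Proposition~\ref{prp:bound}, namely $\rho^\mu_{conv}(X)=\inf\{\sum_{i=1}^n\rho^i(X^i)\mu_i\colon n\in\mathbb{N},\ \sum_{i=1}^nX^i\mu_i=X\}$. Then, for $m\in ba_{1,+}$ and with $\alpha^{min}$ as in Theorem~\ref{the:dual},
\begin{align*}
\alpha^{min}_{\rho^\mu_{conv}}(m)&=\sup_{X\in L^\infty}\left\{E_m[-X]-\rho^\mu_{conv}(X)\right\}\\
&=\sup_{n\in\mathbb{N}}\ \sup_{X^1,\dots,X^n\in L^\infty}\left\{\sum_{i=1}^nE_m[-X^i]\mu_i-\sum_{i=1}^n\rho^i(X^i)\mu_i\right\}\\
&=\sup_{n\in\mathbb{N}}\ \sum_{i=1}^n\sup_{X^i\in L^\infty}\left\{E_m[-X^i]-\rho^i(X^i)\right\}\mu_i\\
&=\sup_{n\in\mathbb{N}}\ \sum_{i=1}^n\alpha^{min}_{\rho^i}(m)\mu_i=\sum_{i\in\mathcal{I}}\alpha^{min}_{\rho^i}(m)\mu_i,
\end{align*}
where the second equality eliminates the decomposition constraint by writing $X=\sum_{i=1}^nX^i\mu_i$ and uses linearity of $m$ over a finite sum, the third uses that the variables $X^i$ are decoupled and $\mu_i\ge0$, and the last uses that $\alpha^{min}_{\rho^i}\ge0$ so the partial sums increase to the (possibly infinite) series, together with the convention $0\cdot\infty=0$ for the indices $i\notin\mathcal{I}_\mu$.

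The main obstacle is part (i): establishing $\mathcal{A}_{\rho^\mu_{conv}}=cl(\mathcal{A}_\mu)$ requires both the translation trick $\widetilde{X}^i=X^i+\rho^i(X^i)$ with its uniform-boundedness check and the upward closedness of $\mathcal{A}_\mu$, and some care is needed around the closure operation and the degenerate case $\rho^\mu_{conv}\equiv-\infty$. Part (ii) is then essentially bookkeeping, the only delicate point being that the interchange of the supremum with the countable sum is licensed precisely by the reduction to finite allocations in Proposition~\ref{prp:bound}.
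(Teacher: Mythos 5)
Your proof is correct, and it reorganizes the argument along a genuinely different route from the paper's in both parts, even though the key trick is shared. For part (i), the paper obtains $cl(\mathcal{A}_\mu)\subseteq\mathcal{A}_{\rho^{\mu}_{conv}}$ exactly as you do, but gets the reverse inclusion by showing $int(\mathcal{A}_{\rho^{\mu}_{conv}})\subseteq\mathcal{A}_\mu$ via the recentering $Y^i=X^i+\rho^i(X^i)-k$ with $k=\sum_{i\in\mathcal{I}}\rho^i(X^i)\mu_i<0$, and then invoking $cl(int(\mathcal{A}_{\rho^{\mu}_{conv}}))=\mathcal{A}_{\rho^{\mu}_{conv}}$ (a fact about closed convex sets with nonempty interior that the paper uses without comment); you instead use the same recentering, $\widetilde{X}^i=X^i+\rho^i(X^i)$, to prove the identity $\rho^{\mu}_{conv}(X)=\inf\left\lbrace m\in\mathbb{R}\colon X+m\in\mathcal{A}_\mu\right\rbrace$, and finish with upward closedness of $\mathcal{A}_\mu$ plus closedness of $\mathcal{A}_{\rho^{\mu}_{conv}}$ (Lipschitz continuity from Proposition~\ref{prp:propconv2}, with the degenerate case $\rho^{\mu}_{conv}\equiv-\infty$ treated separately), which sidesteps the interior/closure step and handles the boundary case $\rho^{\mu}_{conv}(X)=0$ explicitly; your checks (uniform boundedness of $\{\widetilde{X}^i\}$ via $|\rho^i(X^i)|\leq\lVert X^i\rVert_\infty$, convergence of $\sum_i\rho^i(X^i)\mu_i$ from Proposition~\ref{prp:bound}) are the right ones. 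For part (ii), the paper argues by two inequalities: the upper bound through $\sup\lim\leq\lim\sup$ together with the known finite-case penalty formula, and the lower bound through $\alpha^{min}_{\rho^{\mu}_{conv}}(m)=\sup_{X\in\mathcal{A}_{\rho^{\mu}_{conv}}}E_m[-X]$, which makes (ii) depend on (i); your single direct conjugate computation from the finite-allocation representation in Proposition~\ref{prp:bound}, decoupling the suprema over $X^1,\dots,X^n$, yields both bounds at once and makes (ii) independent of (i), the only delicate points being the ones you flag (each individual supremum is nonnegative, so no $\infty-\infty$ issue, and $\mu_i=0$ terms are read with the convention $0\cdot\infty=0$). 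What the paper's route buys in addition is the verification that the resulting penalty is convex and lower semicontinuous in the total-variation norm, which you omit but which is not part of the stated claim; your route is shorter and more self-contained.
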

 	
 	\begin{proof}
 We have that $\rho^\mu_{conv}$ is a convex risk measure that is either finite or identically $-\infty$. Moreover, it is well known for any $n\in\mathbb{N}$ that \[\sum_{i=1}^n\alpha^{min}_{\rho^i}\mu_i=\sup\limits_{X\in L^\infty}\left\lbrace E_m[-X]-\inf\left\lbrace \sum_{i=1}^n\rho^i(X^i)\mu_i\colon\sum_{i=1}^n X^i\mu_i=X  \right\rbrace \right\rbrace,\] which generates the acceptance set $ cl\left(\sum_{i=1}^ n\mathcal{A}_{\rho^i}\mu_i\right)$ with $\{\mu_1,\dots,\mu_n\}\subset[0,1]^n$. We now demonstrate the claims.
 		\begin{enumerate}
 	\item Let $X\in\mathcal{A}_\mu$. Then, there is $\{X^i\}_{i\in\mathcal{I}}\in\mathbb{A}(X)$ such that $X^i\in\mathcal{A}_{\rho^i}\:\forall\:i\in\mathcal{I}_\mu$. Thus, $\rho^\mu_{conv}(X)\leq\sum_{i\in \mathcal{I}}\rho^i(X^i)\mu_i\leq 0$. This implies $X\in\mathcal{A}_{\rho^{\mu}_{conv}}$. By taking closures we get $cl(\mathcal{A}_\mu)\subseteq\mathcal{A}_{\rho^{\mu}_{conv}}$. We note that $\mathcal{A}_\mu$ is not necessarily closed (for reasons similar to those for which $\rho^\mu_{conv}$ does not inherit Fatou continuity). For the converse relation, let $X\in int(\mathcal{A}_{\rho^\mu_{conv}})$. Then there is $\{X^i\}\in\mathbb{A}(X)$ such that $k=\sum_{i\in \mathcal{I}}\rho^i(X^i)\mu_i<0$. Since $\rho^i(X^i+\rho^i(X^i)-k)<0$ for any $i\in\mathcal{I}$, we have that $Y^i=X^i+\rho^i(X^i)-k\in int(\mathcal{A}_{\rho^i})$ for any $i\in\mathcal{I}$. Moreover, $\sum_{i\in \mathcal{I}}Y^i\mu_i=X+k-k=X$. Then $\{Y^i\}\in\mathbb{A}(X)$, which implies $X\in \mathcal{A}_\mu\subseteq cl(\mathcal{A}_\mu)$. Thus, $int(\mathcal{A}_{\rho^\mu_{conv}}) \subseteq \mathcal{A}_\mu$. By taking closures we get $cl(int(\mathcal{A}_{\rho^\mu_{conv}})) =  \mathcal{A}_{\rho^\mu_{conv}} \subseteq cl(\mathcal{A}_\mu)$, which gives the required equality. Moreover, let $\mathcal{A}_\mu$ be norm-dense in $L^\infty$. Thus, for any $X\in L^\infty$ and $k>0$, there is $Y\in\mathcal{A}_\mu$ such that $\lVert X-Y\rVert_\infty\leq k$. Thus, $X+k\geq Y$ and, as $\mathcal{A}_\mu$ is monotone, we obtain $X+k\in\mathcal{A}_\mu$. As both $X$ and $k$ are arbitrary, $\mathcal{A}_\mu=L^\infty$. The converse relation is trivial.
 			\item We have for any $m\in ba_{1,+}$ that 	
 		\begin{align*}
 	\alpha_{\rho^\mu_{conv}}^{min}(m)&= \sup\limits_{X\in L^\infty}\left\lbrace E_m[-X]-\lim\limits_{n\rightarrow\infty}\inf\left\lbrace \sum_{i=1}^n\rho^i(X^i)\mu_i\colon\sum_{i=1}^n X^i\mu_i=X  \right\rbrace \right\rbrace\\
 	&\leq \lim\limits_{n\rightarrow\infty} \sup\limits_{X\in L^\infty}\left\lbrace E_m[-X]-\inf\left\lbrace \sum_{i=1}^n\rho^i(X^i)\mu_i\colon\sum_{i=1}^n X^i\mu_i=X  \right\rbrace \right\rbrace\\
 &=\lim\limits_{n\rightarrow\infty}\sum_{i=1}^n\alpha_{\rho^i}^{min}(m)\mu_i= \sum_{i\in\mathcal{I}}\alpha_{\rho^i}^{min}(m)\mu_i.		\end{align*}

For the converse, by Theorem \ref{the:dual} and Remark \ref{rmk:dual}, we have that \[\alpha^{min}_{\rho^{\mu}_{conv}}(m)=\sup\limits_{X\in\mathcal{A}_{\rho^{\mu}_{conv}}}E_m[-X],\:\mathcal{A}_{\rho^{\mu}_{conv}}=\left\lbrace X\in L^\infty\colon\alpha^{min}_{\rho^{\mu}_{conv}}(X)\geq E_m[-X]\:\forall\:m\in ba_{1,+}\right\rbrace.\] 	
Moreover, note that if $X\in \sum_{i=1}^n\mathcal{A}_{\rho^i}\mu_i$, then $X=\sum_{i=1}^n X^i\mu_i$ with $X^i\in\mathcal{A}_{\rho^i}$ for $i=1,\dots,n$. Thus, $X\in\mathcal{A}_\mu$ since $\{X^1,\dots,X^n,0,0,\dots\}\in\mathbb{A}(X)$ and $0$ is acceptable for any $\rho^i$. Then, for any $m\in ba_{1,+}$ we have

\begin{align*}
\sum_{i\in\mathcal{I}}\alpha_{\rho^i}^{min}(m)\mu_i
&=\lim\limits_{n\rightarrow\infty}\sum_{i=1}^n\sup \left\lbrace  E_m[-X]\mu_i \colon {X\in \mathcal{A}_{\rho^i}} \right\rbrace \\
	&= \lim\limits_{n\rightarrow\infty}\sup \left\lbrace \sum_{i=1}^n E_m[-X^i]\mu_i  \colon  {\{X^i\in \mathcal{A}_{\rho^i}\}_{i=1,\dots,n}} \right\rbrace  \\
	&=\lim\limits_{n\rightarrow\infty}\sup  \left\lbrace  E_m[-X]  \colon {X\in \sum_{i=1}^n\mathcal{A}_{\rho^i}\mu_i}  \right\rbrace \\
	&\leq \lim\limits_{n\rightarrow\infty}\sup  \left\lbrace E_m[-X] \colon  {X\in \mathcal{A}_{\mu}}\right\rbrace 
	\\ &= \sup \left\lbrace E_m[-X] \colon  {X\in \mathcal{A}_{\rho^\mu_{conv}}}      \right\rbrace=\alpha_{\rho^\mu_{conv}}^{min}(m).
		\end{align*}

	Hence, $\alpha^{min}_{\rho^{\mu}_{conv}}=\sum_{i\in\mathcal{I}}\alpha^{min}_{\rho^i}\mu_i$. Regarding the properties of $m\rightarrow\sum_{i\in\mathcal{I}}\alpha^{min}_{\rho^i}(m)\mu_i$, non-negativity is straightforward, whereas convexity follows from the monotonicity of the integral and the convexity of each $\alpha^{min}_{\rho_i}$ because for any $\lambda\in[0,1]$ and  $m_1,m_2\in ba_{1,+}$, we have that \begin{align*}
 			\sum_{i\in\mathcal{I}}\alpha^{min}_{\rho^i}(\lambda m_1+(1-\lambda)m_2)\mu_i&\leq\sum_{i\in\mathcal{I}}\left[\lambda\alpha^{min}_{\rho^i}(m_1)+(1-\lambda)\alpha^{min}_{\rho^i}(m_2) \right]\mu_i\\
 			&=\lambda\sum_{i\in\mathcal{I}}\alpha^{min}_{\rho^i}( m_1)\mu_i+(1-\lambda)\sum_{i\in\mathcal{I}}\alpha^{min}_{\rho^i}(m_2)\mu_i.
 			\end{align*}
 			Furthermore, by Fatou's lemma (which can be used because each $\alpha^{min}_{\rho^i}$ is bounded from below by $0$) and by the lower semi-continuity of each $\alpha^{min}_{\rho^i}$ with respect to the total variation norm on $ba$, for any $\{m_n\}$ such that $m_n\rightarrow m$, we have that
 			\[\sum_{i\in\mathcal{I}}\alpha^{min}_{\rho^i}(m)\mu_i\leq\sum_{i\in\mathcal{I}}\liminf\limits_{n\rightarrow\infty}\alpha^{min}_{\rho^i}(m_n)\mu_i\leq\liminf\limits_{n\rightarrow\infty}\sum_{i\in\mathcal{I}}\alpha^{min}_{\rho^i}(m_n)\mu_i.\]

 		\end{enumerate}
 	\end{proof}

 	\begin{Rmk}\label{rmk:fatou}
 		\begin{enumerate}
 		\item Under the assumption of Fatou continuity for both the risk measures in $\rho_{\mathcal{I}}$ and $\rho^\mu_{conv}$, the claims in Theorem \ref{Thm:dualconv} could be adapted by replacing the finitely additive measures $m\in ba_{+,1}$ by probabilities $\mathbb{Q}\in\mathcal{Q}$. Moreover, weak$^*$ topological concepts could replace the corresponding strong (norm) topological concepts.
 			\item The weighted risk measure is a functional $\rho^\mu:L^\infty\rightarrow\mathbb{R}$ defined as
 			$\rho^\mu(X)=\sum_{i\in\mathcal{I}}\rho^{i}(X)\mu_i$. Theorem 4.6 in \cite{Righi2019} states that, assuming Fatou continuity, $\rho^\mu$ can be represented using a convex (not necessarily minimal) penalty  defined as 
 			\[\alpha_{\rho^\mu}(\mathbb{Q})=
 			\inf\left\lbrace \sum_{i\in\mathcal{I}}\alpha^{min}_{\rho^i}\left(\mathbb{Q}^i\right)\mu_i\colon \sum_{i\in\mathcal{I}}\mathbb{Q}^i\mu_i=\mathbb{Q},\:\mathbb{Q}^i\in\mathcal{Q}\:\forall\:i\in\mathcal{I}\right\rbrace.\] By the duality of convex conjugates, $\alpha_{\rho^\mu}=\alpha_{\rho^\mu}^{min}$ if and only if $\alpha_{\rho^\mu}$ is lower semi-continuous.  Nonetheless, this represents a connection between weighted and inf-convolution functions for countable $\mathcal{I}$, as in the traditional finite case.
 		\end{enumerate}
 		
 	\end{Rmk}


 	The following corollary provides interesting properties regarding the normalization, finiteness, preservation, dominance, and sub-gradients of $\rho^\mu_{conv}$. 
 
 	\begin{Crl}\label{crl:hull}
 		Let $\rho_\mathcal{I}$ be a collection of  convex risk measures. Then
 		\begin{enumerate}
 			\item $\left\lbrace m\in ba_{1,+}\colon\sum_{i\in\mathcal{I}}\alpha^{min}_{\rho^i}(m)\mu_i<\infty\right\rbrace \not=\emptyset$ if and only if  $\rho^{\mu}_{conv}$ is finite. In this case $A_\mu$ is not dense in $L^\infty$.
 			\item $\rho^{\mu}_{conv}$ is normalized if and only if $\left\lbrace m\in ba_{1,+} \colon\alpha^{min}_{\rho^i}(m)=0\:\forall\:i\in\mathcal{I}_\mu\right\rbrace \not=\emptyset$.
 			\item If $\rho\colon L^\infty\rightarrow\mathbb{R}$ is a convex risk measure with $\rho(X)\leq (\geq\:\text{or}\:=)\rho^i(X)\:\forall\:i\in\mathcal{I}_\mu,\:\forall\:X\in L^\infty$, then $\rho(X)\leq(\geq\:\text{or}\:=)\rho^\mu_{conv}(X),\:\forall\:X\in L^\infty$.
 			\item  $\left\lbrace m\in ba_{1,+}\colon m\in\partial\rho^i(X^i)\:\forall\:i\in\mathcal{I}_\mu \right\rbrace=\bigcap_{i\in\mathcal{I}_\mu}\partial\rho^i(X^i)\subseteq\partial\rho^\mu_{conv}(X)$ for any $\{X^i\}_{i\in\mathcal{I}}\in\mathbb{A}(X)$ and $X\in L^\infty$.
 		\item  $\left\lbrace X\in L^\infty\colon \exists\:\{X^i\}_{i\in\mathcal{I}}\in\mathbb{A}(X)\:s.t.\:X^i\in\partial\alpha_{\rho^i}^{min}(m)\:\forall\:i\in\mathcal{I}_\mu\right\rbrace=\sum_{i\in \mathcal{I}}\partial\alpha_{\rho^i}^{min}(m)\mu_i\subseteq\partial\alpha_{\rho^\mu_{conv}}^{min}(m)$ for any $m\in ba_{1,+}$.
 		 		\end{enumerate}
 	\end{Crl}
 	\begin{proof}
 		\begin{enumerate}
 			\item 
 			By Proposition \ref{prp:propconv}, we have $\rho^\mu_{conv}<\infty$. If $\left\lbrace m\in ba_{1,+}\colon\sum_{i\in\mathcal{I}}\alpha^{min}_{\rho^i}(m)\mu_i<\infty\right\rbrace \not=\emptyset$, then there exists $m\in ba_{1,+}$ such that \[-\infty<E_m[-X]-\sum_{i\in\mathcal{I}}\alpha^{min}_{\rho^i}(m)\mu_i\leq E_m[-X]-\alpha^{min}_{\rho^\mu_{conv}}(m)\leq \rho^\mu_{conv}(X).\] 
		If $\left\lbrace m\in ba_{1,+}\colon\sum_{i\in\mathcal{I}}\alpha^{min}_{\rho^i}(m)\mu_i<\infty\right\rbrace =\emptyset$, then $\alpha^{min}_{\rho^\mu_{conv}}(m)=\sum_{i\in\mathcal{I}}\alpha^{min}_{\rho^i}(m)\mu_i=\infty,\:\forall\:m\in ba_{1,+}$. Hence, $\rho^\mu_{conv}(X)=-\infty,\:\forall\:X\in L^\infty$. Furthermore, if $\mathcal{A}_{\mu}$ is dense in $L^\infty$, then by (i) in Theorem \ref{Thm:dualconv}, we have $\mathcal{A}_{\mu}=L^\infty$. Thus, we have $\rho^{\mu}_{conv}(X)=\inf\{m\in\mathbb{R}\colon X+m\in L^\infty\}=-\infty,\:\forall\:X\in L^\infty$.
 			\item Let $\left\lbrace m\in ba_{1,+}\colon\alpha^{min}_{\rho^i}(m)=0\:\forall\:i\in\mathcal{I}_\mu\right\rbrace \not=\emptyset$ and $m^\prime$ in this set. Then, $\alpha^{min}_{\rho^{\mu}_{conv}}(m^\prime)= \sum_{i\in\mathcal{I}}\alpha^{min}_{\rho^i}(m^\prime)\mu_i=0$. Hence, 
 			 \[\rho^{\mu}_{conv}(0)=-\min_{m\in ba_{1,+}}\alpha^{min}_{\rho^{\mu}_{conv}}(m)=0.\] For the converse relation, let  $\left\lbrace m\in ba_{1,+}\colon\alpha^{min}_{\rho^i}(m)=0\:\forall\:i\in\mathcal{I}_\mu\right\rbrace =\emptyset$. Thus, for any $m\in ba_{1,+}$, we have that $ \left\lbrace i\in\mathcal{I}_\mu\colon \alpha_{\rho^i}^{min}(m)>0 \right\rbrace \not=\emptyset$. Then, \[\rho^\mu_{conv}(0)=-\min_{m\in ba_{1,+}}\alpha^{min}_{\rho^\mu_{conv}}(m)<0,\] which implies that $\rho^\mu_{conv}$ is not normalized. 
 			\item We prove for the $\leq$ relation since the others are quite similar. By Theorem \ref{the:dual} and and Remark \ref{rmk:dual}, we have that $\alpha^{min}_{\rho}(m)\geq\alpha^{min}_{\rho^i}(m)\:\forall\:i\in\mathcal{I}_\mu$  for any $m\in ba_{1,+}$. Thus, by Theorem \ref{Thm:dualconv}, we obtain that \[\alpha^{min}_{\rho}(m)\geq\sum_{i\in\mathcal{I}}\alpha^{min}_{\rho^i}(m)\mu_i=\alpha^{min}_{\rho^\mu_{conv}}(m).\] Hence, $\rho(X)\leq\rho^\mu_{conv}(X),\:\forall\:X\in L^\infty$. 
 			
 	\item  Let $m^\prime\in\left\lbrace m\in ba_{1,+}\colon m\in\partial\rho^i(X^i)\:\forall\:i\in\mathcal{I}_\mu \right\rbrace$. Then, \[E_{m^\prime}[-X]-\alpha^{min}_{\rho^\mu_{conv}}(m^\prime)\geq\sum_{i\in\mathcal{I}}\left(E_{m^\prime}[-X^i]-\alpha^{min}_{\rho^i}(m^\prime)\right)\mu_i\geq\sum_{i\in\mathcal{I}}\rho^i(X^i)\mu_i\geq\rho^\mu_{conv}(X),\] which implies $m^\prime\in\partial\rho^\mu_{conv}(X)$. 
 	\item We fix $m\in ba_{1,+}$. If $X\in \left\lbrace X\in L^\infty\colon \exists\:\{X^i\}_{i\in\mathcal{I}}\in\mathbb{A}(X)\:s.t.\:X^i\in\partial\alpha_{\rho^i}^{min}(m)\:\forall\:i\in\mathcal{I}_\mu\right\rbrace$, let $\{X^i\}_{i\in\mathcal{I}}\in\mathbb{A}(X)$ such that  $X^i\in\partial\alpha_{\rho^i}^{min}(m)\:\forall\:i\in\mathcal{I}_\mu$. Then, \[E_m[-X]-\rho^\mu_{conv}(X)\geq\sum_{i\in\mathcal{I}}\left(E_m[-X^i]-\rho^i(X^i) \right)\mu_i=\sum_{i\in\mathcal{I}}\alpha_{\rho^i}^{min}(m)\mu_i\geq\alpha_{\rho^\mu_{conv}}^{min}(m). \] Thus, $X\in\partial\alpha_{\rho^\mu_{conv}}^{min}(m)$.
 		\end{enumerate}
 	\end{proof}
 
 \begin{Rmk}
 In the context of item (iii) and under coherence of $\rho_{\mathcal{I}}$,  by Proposition \ref{prp:bound} we have that $\rho(X)\leq\rho^\mu_{conv}(X)\leq\rho^i(X)\:\forall\:i\in\mathcal{I}_\mu,\:\forall\:X\in L^\infty$ for any convex risk measure $\rho$ such that $\rho(X)\leq \rho^i(X)\:\forall\:i\in\mathcal{I}_\mu,\:\forall\:X\in L^\infty$. In this sense, we can understand $\rho^\mu_{conv}$ as the ``lower-convexification" of the non-convex risk measure $\inf_{i\in\mathcal{I}_\mu}\rho^i$ in the sense that the former is the largest convex risk measure that is dominated by the latter. 
 	\end{Rmk}

 	We now present the main results regarding the representation of $\rho^\mu_{conv}$ for coherent cases. 
 	
 	\begin{Thm}\label{Thm:dualconv2}
 		Let $\rho_\mathcal{I}$ be a collection of coherent risk measures. Then,
 		\begin{enumerate}
 			
 			\item $\rho^{\mu}_{conv}$  is finite, and its dual set is  \begin{equation}
 			\mathcal{Q}_{\rho^{\mu}_{conv}} =\left\lbrace m\in ba_{1,+}\colon m\in\mathcal{Q}_{\rho^i}\:\forall\:i\in\mathcal{I}_\mu\right\rbrace=\bigcap_{i\in\mathcal{I}_\mu}\mathcal{Q}_{\rho^i}.
 			\end{equation}
 			In particular, $\mathcal{Q}_{\rho^{\mu}_{conv}}$ is non empty. 
 			\item  The acceptance set of $\rho^{\mu}_{conv}$ is  \begin{equation}
 			\mathcal{A}_{\rho^{\mu}_{conv}}=clconv\left(\mathcal{A}_\cup\right)=cl(\mathcal{A}_\mu),\:\mathcal{A}_\cup=\bigcup_{i\in\mathcal{I}_\mu}\mathcal{A}_{\rho^i}, 
 			\end{equation} where $clconv$ denotes the closed convex hull.
 		\end{enumerate}

 	\end{Thm}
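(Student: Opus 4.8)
The plan is to reduce everything to the convex case already handled in Theorem~\ref{Thm:dualconv}. Since a coherent risk measure is in particular convex, that theorem applies to $\rho_\mathcal{I}$, and by Proposition~\ref{prp:propconv} the functional $\rho^\mu_{conv}$ is convex and positively homogeneous. For each $i$, positive homogeneity of $\rho^i$ forces $\alpha^{min}_{\rho^i}$ to take only the values $0$ and $+\infty$, and by Theorem~\ref{the:dual} it vanishes exactly on $\mathcal{Q}_{\rho^i}$. Substituting this into $\alpha^{min}_{\rho^\mu_{conv}}(m)=\sum_{i\in\mathcal{I}}\alpha^{min}_{\rho^i}(m)\mu_i$ from Theorem~\ref{Thm:dualconv}(ii), and using the convention $0\cdot\infty=0$ for the indices outside $\mathcal{I}_\mu$, I get $\alpha^{min}_{\rho^\mu_{conv}}(m)=0$ if $m\in\mathcal{Q}_{\rho^i}$ for every $i\in\mathcal{I}_\mu$ and $\alpha^{min}_{\rho^\mu_{conv}}(m)=+\infty$ otherwise; that is, $\alpha^{min}_{\rho^\mu_{conv}}$ is the convex indicator of $\bigcap_{i\in\mathcal{I}_\mu}\mathcal{Q}_{\rho^i}$. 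Once this intersection is known to be nonempty, the representation of Theorem~\ref{the:dual} yields $\rho^\mu_{conv}(X)=\max_{m\in\bigcap_{i\in\mathcal{I}_\mu}\mathcal{Q}_{\rho^i}}E_m[-X]$, so $\rho^\mu_{conv}$ is finite and coherent with dual set $\mathcal{Q}_{\rho^\mu_{conv}}=\bigcap_{i\in\mathcal{I}_\mu}\mathcal{Q}_{\rho^i}$, which proves (i).

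The step I expect to be the main obstacle is precisely the nonemptiness of $\bigcap_{i\in\mathcal{I}_\mu}\mathcal{Q}_{\rho^i}$, equivalently the finiteness (rather than identical $-\infty$) of $\rho^\mu_{conv}$. My plan for it is a weak$^*$-compactness argument: each $\mathcal{Q}_{\rho^i}$ equals $\bigcap_{X\in L^\infty}\{m\in ba_{1,+}\colon E_m[-X]\le\rho^i(X)\}$, an intersection of weak$^*$-closed half-spaces inside $ba_{1,+}$, which is weak$^*$-compact by Banach--Alaoglu; hence each $\mathcal{Q}_{\rho^i}$ is weak$^*$-compact, and by the finite intersection property it suffices to show $\bigcap_{i\in F}\mathcal{Q}_{\rho^i}\neq\emptyset$ for every finite $F\subseteq\mathcal{I}_\mu$. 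This last fact is the dual-set statement for the classical finite inf-convolution $\square_{i\in F}\rho^i$ and is where care is needed; it holds, for instance, whenever the $\rho^i$ are loaded, since then $\mathbb{P}\in\mathcal{Q}_{\rho^i}$ for all $i$ (and loadedness is preserved by Proposition~\ref{prp:propconv}, giving directly $\rho^\mu_{conv}\ge EL>-\infty$).

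For part (ii) I will combine $\mathcal{A}_{\rho^\mu_{conv}}=cl(\mathcal{A}_\mu)$ and $\mathcal{A}_\mu=\sum_{i\in\mathcal{I}}\mathcal{A}_{\rho^i}\mu_i$ from Theorem~\ref{Thm:dualconv}(i) with the fact that each $\mathcal{A}_{\rho^i}$ is a convex cone (coherence). For $clconv(\mathcal{A}_\cup)\subseteq cl(\mathcal{A}_\mu)$: given $X\in\mathcal{A}_{\rho^j}$ with $j\in\mathcal{I}_\mu$, the allocation $X^j=\mu_j^{-1}X$, $X^i=0$ for $i\neq j$, lies in $\mathbb{A}(X)$ with every component acceptable (by normalization and positive homogeneity), so $\mathcal{A}_\cup\subseteq\mathcal{A}_\mu$; since $cl(\mathcal{A}_\mu)=\mathcal{A}_{\rho^\mu_{conv}}$ is closed and convex, it contains $clconv(\mathcal{A}_\cup)$. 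For the reverse inclusion: $conv(\mathcal{A}_\cup)$ is a convex cone, being the convex hull of a union of cones through the origin, so every finite partial sum $\sum_{i=1}^n\mu_iX^i$ with $X^i\in\mathcal{A}_{\rho^i}$ — a nonnegative multiple of a convex combination of points of $\mathcal{A}_\cup$, or $0$ — lies in $conv(\mathcal{A}_\cup)$; any $X\in\mathcal{A}_\mu$ is the $\mathbb{P}$-a.s.\ limit of such partial sums, hence $X\in clconv(\mathcal{A}_\cup)$. This gives $cl(\mathcal{A}_\mu)\subseteq clconv(\mathcal{A}_\cup)$, and together with Theorem~\ref{Thm:dualconv}(i) the chain $\mathcal{A}_{\rho^\mu_{conv}}=clconv(\mathcal{A}_\cup)=cl(\mathcal{A}_\mu)$.
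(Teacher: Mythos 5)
Your treatment of the dual-set identity in (i) --- reading off $\alpha^{min}_{\rho^{\mu}_{conv}}(m)=\sum_{i\in\mathcal{I}}\alpha^{min}_{\rho^i}(m)\mu_i$ from Theorem~\ref{Thm:dualconv} and observing that each summand is the convex indicator of $\mathcal{Q}_{\rho^i}$ --- is exactly the paper's argument. The genuine gap is the part of the statement you yourself flag: finiteness of $\rho^{\mu}_{conv}$ and nonemptiness of $\bigcap_{i\in\mathcal{I}_\mu}\mathcal{Q}_{\rho^i}$. As written, you only obtain these under the extra hypothesis that the $\rho^i$ are loaded, which is not part of the theorem, so your proposal proves a strictly weaker statement. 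The Banach--Alaoglu/finite-intersection reduction does not close the gap, because nonemptiness of $\bigcap_{i\in F}\mathcal{Q}_{\rho^i}$ for finite $F\subseteq\mathcal{I}_\mu$ is not a citable fact about coherent risk measures: already for $\rho^1=E_{\mathbb{Q}_1}[-\,\cdot\,]$ and $\rho^2=E_{\mathbb{Q}_2}[-\,\cdot\,]$ with $\mathbb{Q}_1\neq\mathbb{Q}_2$ the dual sets are disjoint singletons and the corresponding two-agent convolution is identically $-\infty$, so the ``classical finite case'' you want to invoke is precisely as problematic as the countable one. The paper closes this step along a different line, with no compactness at all: by Proposition~\ref{prp:propconv} and Remark~\ref{rmk:prop}, $\rho^{\mu}_{conv}$ inherits coherence and is normalized, hence $\rho^{\mu}_{conv}(0)=0$; a convex functional that never takes $+\infty$ and is finite at one point is finite everywhere; and if $\left\lbrace m\in ba_{1,+}\colon\alpha^{min}_{\rho^i}(m)=0\ \forall\, i\in\mathcal{I}_\mu\right\rbrace$ were empty, Corollary~\ref{crl:hull}(ii) would force $\rho^{\mu}_{conv}(0)<0$, a contradiction. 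That normalization argument is the ingredient missing from your plan; your instinct that this is the delicate step is sound, but it must be resolved, not conditioned away.

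For (ii) you take a genuinely different and more primal route than the paper, which instead identifies $\mathcal{Q}_{\rho^{\mu}_{conv}}$ with the set of $m$ satisfying $E_m[-X]\leq0$ for all $X\in clconv(\mathcal{A}_\cup)$, concludes $\rho^{\mu}_{conv}=\rho_{clconv(\mathcal{A}_\cup)}$ and hence $\mathcal{A}_{\rho^{\mu}_{conv}}=clconv(\mathcal{A}_\cup)$, and then sandwiches $cl(\mathcal{A}_\mu)$ via $\mathcal{A}_\cup\subseteq\mathcal{A}_\mu$. Your two direct inclusions are correct and arguably more elementary, with one repair: a $\mathbb{P}$-a.s.\ limit of the partial sums $\sum_{i=1}^n\mu_iX^i$ does not by itself place $X$ in the norm closure $clconv(\mathcal{A}_\cup)$. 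This is easily fixed, since allocations in $\mathbb{A}(X)$ are bounded by definition, so
\begin{equation*}
\Bigl\lVert X-\sum_{i=1}^n\mu_iX^i\Bigr\rVert_\infty\leq\Bigl(\sup_{i\in\mathcal{I}}\lVert X^i\rVert_\infty\Bigr)\sum_{i>n}\mu_i\longrightarrow0,
\end{equation*}
i.e.\ the convergence is actually in the $L^\infty$ norm, and the inclusion $cl(\mathcal{A}_\mu)\subseteq clconv(\mathcal{A}_\cup)$ follows as you intend.
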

 	
 	\begin{proof}
 		\begin{enumerate}
 			
 			\item  By Proposition \ref{prp:propconv}, we have that $\rho^\mu_{conv}$ is a  coherent risk measure. Since $\rho^\mu_{conv}(0)=0$, it is finite. In this case, its dual set is composed by the measures $m\in ba_{1,+}$ such that $\alpha_{\rho^\mu_{conv}}^{min}(m)=0$. By Theorems \ref{the:dual} and \ref{Thm:dualconv}, we obtain that \begin{align*}
 			\mathcal{Q}_{\rho^\mu_{conv}}&=\left\lbrace  m\in ba_{1,+}\colon\sum_{i\in\mathcal{I}}\alpha_{\rho^i}^{min}(m)\mu_i=0 \right\rbrace\\
 			&=\left\lbrace  m\in ba_{1,+}\colon \alpha_{\rho^i}^{min}(m)=0\:\forall\:i\in\mathcal{I}_\mu\right\rbrace \\
 			&=\left\lbrace  m\in ba_{1,+}\colon m\in\mathcal{Q}_{\rho^i}\:\forall\:i\in\mathcal{I}_\mu\right\rbrace=\bigcap_{i\in\mathcal{I}_\mu}\mathcal{Q}_{\rho^i}. 
 			\end{align*}
 			The convexity and closedness of $\mathcal{Q}_{\rho^\mu_{conv}}$ follow from the convexity and lower semicontinuity of $\alpha_{\rho^\mu_{conv}}^{min}$. Furthermore, if $\mathcal{Q}_{\rho^{\mu}_{conv}} = \left\lbrace  m\in ba_{1,+}\colon \alpha_{\rho^i}^{min}(m)=0\:\forall\:i\in\mathcal{I}_\mu\right\rbrace=\emptyset$, then by Corollary \ref{crl:hull}, $\rho^\mu_{conv}(0)<0$, which contradicts  coherence. 
 			\item  We recall that, by Theorem \ref{the:dual} and Remark \ref{rmk:dual}, for any  coherent risk measure $\rho\colon L^\infty\rightarrow\mathbb{R}$, we have $X\in\mathcal{A}_\rho$ if and only if $E_m[-X]\leq 0\:\forall\:m\in\mathcal{Q}_\rho$. Thus,
 			\begin{align*}
 			\mathcal{Q}_{\rho^\mu_{conv}}&=\bigcap_{i\in\mathcal{I}_\mu}\left\lbrace m\in ba_{1,+}\colon E_m[-X]\leq 0\:\forall\:X\in\mathcal{A}_{\rho^i}\right\rbrace\\
 			&=\left\lbrace m\in\mathcal{Q}\colon E_m[-X]\leq 0,\:\forall\:X\in\cup_{i\in\mathcal{I}_\mu}\mathcal{A}_{\rho^i}\right\rbrace\\
 			&=\left\lbrace m\in\mathcal{Q}\colon E_m[-X]\leq 0,\:\forall\:X\in clconv(\mathcal{A}_\cup)\right\rbrace=\mathcal{Q}_{\rho_{clconv(\mathcal{A}_\cup)}}.
 			\end{align*} 
 			By considering the closed convex hull does not affect is because the map $X\rightarrow E_m[X]$ is linear and continuous for any $m\in\mathcal{Q}$. Thus, $\rho^\mu_{conv}=\rho_{clconv(\mathcal{A}_\cup)}$. Hence, $\mathcal{A}_{\rho^{\mu}_{conv}}=\mathcal{A}_{\rho_{clconv\left(\mathcal{A}_\cup\right)}}=clconv\left(\mathcal{A}_\cup\right)$. We note that $\mathcal{A}_\cup$ is nonempty, monotone (in the sense that $X\in\mathcal{A}_\cup$ and $Y\geq X$ implies  $Y\in\mathcal{A}_\cup$), and a cone, as this is true for any $\mathcal{A}_{\rho^i}$. Moreover, it is evident that $\mathcal{A}_\cup\subseteq\mathcal{A}_\mu$ for normalized risk measures in $\rho_{\mathcal{I}}$. Thus, by Theorem \ref{Thm:dualconv}, we have that  $\mathcal{A}_{\rho^{\mu}_{conv}}=clconv\left(\mathcal{A}_\cup\right)\subseteq cl(\mathcal{A}_\mu)\subseteq\mathcal{A}_{\rho^{\mu}_{conv}}$. 
 		\end{enumerate}
 	\end{proof}

 	\begin{Rmk}\label{rmk:fatou2}
 	\begin{enumerate}
 		\item Similarly to Remark \ref{rmk:fatou}, under the assumption of Fatou continuity for both the risk measures in $\rho_{\mathcal{I}}$ and $\rho^\mu_{conv}$, the claims in Theorem \ref{Thm:dualconv} could be adapted by replacing the finitely additive measures $m\in ba_{+,1}$ by probabilities $\mathbb{Q}\in\mathcal{Q}$. Moreover, weak$^*$ topological concepts could replace the corresponding strong (norm) topological concepts. 
 		\item In light of Corollary \ref{crl:hull}, we have, under the hypotheses of Theorem \ref{Thm:dualconv2}, that $\rho^{\mu}_{conv}$ is finite and normalized if and only if the condition $\left\lbrace  m\in ba_{1,+}\colon \alpha_{\rho^i}^{min}(m)=0\:\forall\:i\in\mathcal{I}_\mu\right\rbrace\not=\emptyset$ is satisfied. Moreover, both assertions are  equivalent to $\mathcal{A}_\mu$ not being dense in $L^\infty$. The intuition for $\mathcal{A}_\cup$ is that some position is acceptable
 		if it is acceptable for any relevant (in the $\mu$ sense) member of $\rho_\mathcal{I}$. 
 	\end{enumerate}
 	\end{Rmk}
 	
 		If law invariance is satisfied, as is the case in most practical applications, interesting features are present. In this paper, when dealing with law invariance we always assume that our base probability space $(\Omega,\mathcal{F},\mathbb{P})$ is atomless. We then introduce the following risk measures:
 			
 		\begin{Exm}\label{Exm:meas}
 		 			\begin{enumerate}
 				\item Value at risk (VaR): This is a Fatou-continuous, law-invariant, comonotone, monetary risk measure  defined as $VaR^\alpha(X)=-F_{X}^{-1}(\alpha),\:\alpha\in[0,1]$. We have the acceptance set  $\mathcal{A}_{VaR^\alpha}=\left\lbrace X\in L^\infty:\mathbb{P}(X<0)\leq\alpha\right\rbrace $.
 				\item Expected shortfall (ES): This is a Fatou-continuous, law-invariant, comonotone, coherent risk measure  defined as $ES^{\alpha}(X)=\frac{1}{\alpha}\int_0^\alpha VaR^s(X)ds,\:\alpha\in(0,1]$ and $ES^0(X)=VaR^0(X)=-\operatorname{ess}\inf X$. We have $\mathcal{A}_{ES^\alpha}=\left\lbrace X\in L^\infty:\int_0^\alpha VaR^s(X)ds\leq0\right\rbrace $ and  $\mathcal{Q}_{ES^\alpha}=\left\lbrace \mathbb{Q}\in\mathcal{Q} : \frac{d\mathbb{Q}}{d\mathbb{P}}\leq\frac{1}{\alpha} \right\rbrace$. Note that $ES^1(X)=-E[X]$.
 			\end{enumerate}
 		\end{Exm}

 	
 	
 	\begin{Thm}[Theorems 4 and 7 in \cite{Kusuoka2001}, Theorem 4.1 in \cite{Acerbi2002a}, Theorem 7 in \cite{Fritelli2005}]\label{thm:dual2}
 		Let $\rho : L^\infty\rightarrow \mathbb{R}$ be a risk measure. Then
 		\begin{enumerate}
 			\item $\rho$ is a law-invariant, convex risk measure if and only if it can be represented as
 			\begin{equation}\label{eq:dual2}
 			\rho(X)=\sup\limits_{m\in\mathcal{M}}\left\lbrace \int_{(0,1]}ES^\alpha(X)dm-\beta^{min}_{\rho}(m)\right\rbrace,\:\forall\:X\in L^\infty,
 			\end{equation}
 			where $\mathcal{M}$ is the set of probability measures on $(0,1]$, and $\beta^{min}_{\rho} : \mathcal{M}\rightarrow\mathbb{R}_+\cup\{\infty\}$ is defined as
 			$\beta^{min}_{\rho}(m)=\sup\limits_{X\in\mathcal{A}_{\rho}} \int_{(0,1]}ES^\alpha(X)dm$. 
 			\item $\rho$ is a law-invariant, coherent risk measure if and only if it can be represented as
 			\begin{equation}\label{eq:dual2coh}
 			\rho(X)=\sup\limits_{m\in\mathcal{M}_\rho} \int_{(0,1]}ES^\alpha(X)dm,\:\forall\:X\in L^\infty,
 			\end{equation}
 			where  $\mathcal{M}_{\rho}=\left\lbrace m\in\mathcal{M}\colon\int_{(u,1]}\frac{1}{v}dm=F^{-1}_{\frac{d\mathbb{Q}}{d\mathbb{P}}}(1-u),\:\mathbb{Q}\in\mathcal{Q}_\rho\right\rbrace $.
 			\item $\rho$ is a law-invariant, comonotone, coherent risk measure if and only if it can be represented as
 			\begin{align}
 			\rho(X)&= \int_{(0,1]}ES^\alpha(X)dm\label{eq:dual2com}\\
 			&=\int_0^1VaR^\alpha(X)\phi(\alpha)d\alpha\label{eq:dual2com2}\\
 			&=\int_{-\infty}^{0}(g(\mathbb{P}(-X\geq x)-1)dx+\int_{0}^{\infty}g(\mathbb{P}(-X\geq x))dx,\:\forall\:X\in L^\infty,\label{eq:dual2com3}
 			\end{align}
 			where $m\in\mathcal{M}_\rho$, $\phi:[0,1]\rightarrow\mathbb{R}_+$ is decreasing and right-continuous, with $\phi(1)=0$ and $\int_{0}^{1}\phi(u)du=1$, and $g:[0,1]\rightarrow[0,1]$, called distortion, is increasing and concave, with $g(0)=0$ and $g(1)=1$. We have that $\int_{(u,1]}\frac{1}{v}dm=\phi(u)=g^\prime_+(u)\:\forall\:u\in[0,1]$.
 		\end{enumerate}
 	\end{Thm}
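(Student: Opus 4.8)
The three equivalences are classical, and the plan is to derive all of them from the general convex dual representation of Theorem~\ref{the:dual}, using law invariance to collapse the dual variables to one--dimensional weight functions via the Hardy--Littlewood rearrangement inequality, and then to move between the $ES$--mixture, $VaR$--spectral, and Choquet/distortion forms by Fubini's theorem and integration by parts. First I would record the reduction: since $\rho$ is a law--invariant convex risk measure, Theorem~\ref{thm:fatou} gives Fatou continuity, so by Remark~\ref{rmk:dual} one has $\rho(X)=\sup_{\mathbb{Q}\in\mathcal{Q}}\{E_{\mathbb{Q}}[-X]-\alpha^{min}_\rho(\mathbb{Q})\}$; since $(\Omega,\mathcal{F},\mathbb{P})$ is atomless, law invariance of $\rho$ transfers to $\alpha^{min}_\rho$ (it depends only on the law of $d\mathbb{Q}/d\mathbb{P}$) and makes $\mathcal{A}_\rho$ invariant under equimeasurable rearrangements.

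Next I would write $\rho(X)=\sup_{Y\sim X}\rho(Y)$, interchange the suprema over $\{Y\sim X\}$ and over $\mathcal{Q}$, and apply the Hardy--Littlewood identity $\sup_{Y\sim X}E_{\mathbb{Q}}[-Y]=\int_0^1 F^{-1}_{-X}(t)F^{-1}_{d\mathbb{Q}/d\mathbb{P}}(t)\,dt=\int_0^1 VaR^\alpha(X)\,\phi_{\mathbb{Q}}(\alpha)\,d\alpha$ with $\phi_{\mathbb{Q}}(\alpha)=F^{-1}_{d\mathbb{Q}/d\mathbb{P}}(1-\alpha)$. This rewrites $\rho$ as a supremum of spectral functionals $X\mapsto\int_0^1 VaR^\alpha(X)\phi(\alpha)\,d\alpha$ over all decreasing, right--continuous $\phi\colon[0,1]\to\mathbb{R}_+$ with $\int_0^1\phi\,d\alpha=1$ (on an atomless space each such $\phi$ is realised as $F^{-1}_Z(1-\cdot)$ for some density $Z$, so nothing is lost), the penalty being carried along and equal to $+\infty$ off the dual set. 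The bridge to the stated forms is a Fubini computation: for $m\in\mathcal{M}$ put $\phi_m(u)=\int_{(u,1]}\tfrac1v\,dm(v)$; then $\phi_m$ is decreasing, right--continuous, $\phi_m(1)=0$, $\int_0^1\phi_m\,du=m((0,1])=1$ by Tonelli, and $\int_{(0,1]}ES^\alpha(X)\,dm=\int_0^1 VaR^s(X)\phi_m(s)\,ds$, while $dm(\alpha)=-\alpha\,d\phi(\alpha)$ inverts the correspondence, so $m\mapsto\phi_m$ is a bijection between $\mathcal{M}$ and the admissible weights once the (null--set) value $\phi(1)=0$ is fixed.

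Substituting this dictionary into the spectral representation, and identifying $\beta^{min}_\rho(m)$ with $\alpha^{min}_\rho(\mathbb{Q})$ via rearrangement invariance of $\mathcal{A}_\rho$ (so $\sup_{X\in\mathcal{A}_\rho}\int ES^\alpha(X)\,dm=\sup_{Y\in\mathcal{A}_\rho}E_{\mathbb{Q}}[-Y]$), yields (i). For (ii), $\alpha^{min}_\rho$ is $\{0,+\infty\}$--valued and vanishes exactly on $\mathcal{Q}_\rho$, so the penalty disappears and the surviving constraint $\phi_m(u)=F^{-1}_{d\mathbb{Q}/d\mathbb{P}}(1-u)$, $\mathbb{Q}\in\mathcal{Q}_\rho$, is exactly the definition of $\mathcal{M}_\rho$. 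For (iii), comonotonic additivity together with the monetary axioms forces $\rho$ to be a Choquet integral with respect to the monotone set function $A\mapsto\rho(-1_A)$ (Schmeidler's theorem), coherence makes this set function submodular, and law invariance writes it as $g(\mathbb{P}(A))$ with $g$ increasing and concave, $g(0)=0$, $g(1)=1$; expanding the Choquet integral gives \eqref{eq:dual2com3}, the substitution $\phi=g'_+$ gives \eqref{eq:dual2com2}, and the Fubini dictionary (now with a single $m$) gives \eqref{eq:dual2com}. The converse implications are routine: every displayed object is monotone, translation invariant, convex as a mixture or supremum of the convex maps $ES^\alpha$, positively homogeneous when no penalty is present, comonotone additive for a single mixture, and law invariant because each $ES^\alpha$ is.

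The genuinely delicate points all sit in the reduction, and I expect the main obstacles to be: justifying that the attainment--free Hardy--Littlewood bound is actually approached by rearrangements that remain inside $\mathcal{A}_\rho$ (this is where atomlessness is essential); legitimately passing from the finitely additive dual $ba_{1,+}$ to $\mathcal{Q}$ (this is where Fatou continuity, hence Theorem~\ref{thm:fatou}, is used); and the endpoint bookkeeping in the bijection $m\leftrightarrow\phi_m$, namely integrability of $\phi_m$ near $0$ and the possible jump of $\phi_m$ at $1$. The comonotone part additionally relies on Schmeidler's identification of comonotone additive monetary functionals with Choquet integrals, which is the single ingredient I would import from outside the material in the excerpt.
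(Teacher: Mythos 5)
This statement is quoted in the paper as a known result (Theorems 4 and 7 of \cite{Kusuoka2001}, Theorem 4.1 of \cite{Acerbi2002a}, Theorem 7 of \cite{Fritelli2005}) and is given no proof there, so there is no in-paper argument to compare against. Your sketch reproduces the classical proof from those sources --- Fatou continuity via Theorem \ref{thm:fatou}, law invariance of the penalty and the Hardy--Littlewood rearrangement collapse of the dual variables to spectral weights, the Fubini/Tonelli dictionary between measures $m$ on $(0,1]$ and the weights $\phi_m(u)=\int_{(u,1]}\tfrac{1}{v}\,dm$, and Schmeidler's Choquet representation for the comonotone case --- and the route, including the delicate points you flag (attainment on atomless spaces, passing from $ba_{1,+}$ to $\mathcal{Q}$, endpoint bookkeeping in the bijection $m\leftrightarrow\phi_m$), is sound and essentially the standard one.
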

 	
 	\begin{Rmk}\label{rmk:LIdual}
 		\begin{enumerate}
 			\item 	We have that law-invariant convex risk measures $\rho$ are Fatou continuous, see Theorem 2.1 in \cite{Jouini2006} and Proposition 1.1 in \cite{Svindland2010}, and  $X\succeq Y$ implies that $\rho(X)\leq\rho(Y)$, see 	Theorem 4.3 in \cite{Bauerle2006} and Corollary 4.65 in \cite{Follmer2016}.
 			\item Functionals with representation as in (iii) of the last theorem are called spectral or distortion risk measures. This concept is related to capacity set functions and Choquet integrals. Note that Comonotonic Additivity implies coherence for convex risk measures, see Lemma 4.83 of \cite{Follmer2016} for instance. In this case, $\rho$ can be represented by $\mathcal{Q}_\rho=\{\mathbb{Q}\in\mathcal{Q}\colon\mathbb{Q}(A)\leq g(\mathbb{P}(A)),\:\forall\:A\in\mathcal{F}\}$, which is the core of $g$, if and only if $g$ is its distortion function. If $\phi$ is not decreasing (and thus $g$ is not concave), then the risk measure is not convex and cannot be represented as combinations of ES.
 			\item Without law invariance, we can (see, for instance, Theorem 4.94 and Corollary 4.95 in \cite{Follmer2016}) represent a convex, comonotone risk measure $\rho$ by a Choquet integral as follows: \begin{align*}
 			\rho^\mu_{conv}(X)&=\int(-X)dc\\
 			&=\int_{-\infty}^{0}(c(-X\geq x)-1)dx+\int_{0}^{\infty}c(-X\geq x)dx\\
 			&=\max\limits_{m\in ba_{1,+}^c}E_m[-X],\:\forall\:X\in L^\infty,
 			\end{align*} where $c\colon\mathcal{F}\rightarrow[0,1]$ is a normalized ($c(\emptyset)=0$ and $c(\Omega)=1$), monotone (if $A\subseteq B$ then $c(A)\leq c(B)$), submodular ($c(A\cup B)+c(A\cap B)\leq c(A)+c(B)$) set function that is called capacity and is defined as $c(A)=\rho(-1_A)\:\forall\:A\in\mathcal{F}$, and $ba_{1,+}^c=\{m\in ba_{1,+}\colon m(A)\leq c(A)\:\forall\:A\in\mathcal{F}\}$.
 		\end{enumerate}
 	\end{Rmk}

 	We now focus on dual representations under the assumption of law invariance and comonotonic additivity.

 	\begin{Thm}\label{Thm:dualLI}
 		Let $\rho_\mathcal{I}$ be a collection of convex, law-invariant risk measures. Then,
 		\begin{enumerate}
 			\item  $\rho^\mu_{conv}$ is finite, normalized and has penalty term \begin{equation}\label{eq:penalty2}
 			\beta^{min}_{\rho^{\mu}_{conv}}(m)=\sum_{i\in\mathcal{I}}\beta^{min}_{\rho^i}(m)\mu_i,\:\forall\:m\in\mathcal{M}.
 			\end{equation} 
 			\item 	If, in addition, $\rho_\mathcal{I}$ consists of comonotone risk measures, then $\rho^{\mu}_{conv}$  is comonotone, and its distortion function is \begin{equation}\label{eq:penalty3}
 			g=\inf_{i\in\mathcal{I}_\mu} g^i,
 			\end{equation} 
 			where $g^i$ is the distortion of $\rho^i$ for each $i\in\mathcal{I}$.
 		\end{enumerate}
 		
 	\end{Thm}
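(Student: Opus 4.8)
The plan for part~(i) is to first establish finiteness and normalization, and then to read off the Kusuoka penalty from the acceptance set. The key preliminary observation is that every law-invariant convex risk measure is loaded: since $E[X]\succeq X$, Theorem~\ref{Thm:SSD} gives $\rho^i(X)\ge\rho^i(E[X])=-E[X]$ for each $i\in\mathcal{I}$. By Proposition~\ref{prp:propconv}, $\rho^\mu_{conv}$ is then convex, law invariant and loaded, so $\rho^\mu_{conv}(X)\ge-E[X]>-\infty$ and $\rho^\mu_{conv}$ is real valued; in particular $\rho^\mu_{conv}(0)\ge0$, which together with $\rho^\mu_{conv}(0)\le0$ gives normalization, and by Theorem~\ref{thm:fatou} $\rho^\mu_{conv}$ is Fatou continuous, so Theorem~\ref{thm:dual2}(i) applies and $\beta^{min}_{\rho^\mu_{conv}}$ is well defined. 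Next I would fix $m\in\mathcal{M}$ and write $\Psi_m(X):=\int_{(0,1]}ES^\alpha(X)dm$, a mixture of the $ES^\alpha$ that is law invariant, positively homogeneous, subadditive, comonotone additive and $\lVert\cdot\rVert_\infty$-Lipschitz, and recall $\beta^{min}_\rho(m)=\sup_{X\in\mathcal{A}_\rho}\Psi_m(X)$. Invoking $\mathcal{A}_{\rho^\mu_{conv}}=cl(\mathcal{A}_\mu)$ from Theorem~\ref{Thm:dualconv}(i) and the continuity of $\Psi_m$, the identity to prove becomes $\sup_{X\in\mathcal{A}_\mu}\Psi_m(X)=\sum_{i\in\mathcal{I}}\beta^{min}_{\rho^i}(m)\mu_i$, where $\mathcal{A}_\mu=\sum_{i\in\mathcal{I}}\mathcal{A}_{\rho^i}\mu_i$.

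The inequality ``$\le$'' is routine: for $X=\sum_iX^i\mu_i\in\mathcal{A}_\mu$ with $\{X^i\}$ bounded by $M$ and $X^i\in\mathcal{A}_{\rho^i}$ for $i\in\mathcal{I}_\mu$, one has $\lVert X-\sum_{i=1}^nX^i\mu_i\rVert_\infty\le M\sum_{i>n}\mu_i\to0$, so by subadditivity, positive homogeneity and continuity of $\Psi_m$, $\Psi_m(X)=\lim_n\Psi_m(\sum_{i=1}^nX^i\mu_i)\le\sum_{i\in\mathcal{I}}\Psi_m(X^i)\mu_i\le\sum_{i\in\mathcal{I}}\beta^{min}_{\rho^i}(m)\mu_i$, the series being absolutely convergent since $|\Psi_m(X^i)|\le M$ and $\beta^{min}_{\rho^i}(m)\ge\Psi_m(0)=0$. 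The reverse inequality is the heart of the matter. Given $n\in\mathbb{N}$, a truncation level $K$ (to cover the case $\beta^{min}_{\rho^i}(m)=\infty$) and $\varepsilon>0$, I would pick $X^i\in\mathcal{A}_{\rho^i}$, $i\le n$, with $\Psi_m(X^i)>\min(\beta^{min}_{\rho^i}(m),K)-\varepsilon2^{-i}$; then, since $(\Omega,\mathcal{F},\mathbb{P})$ is atomless and every $\rho^i$ and $\Psi_m$ is law invariant, replace each $X^i$ by $F^{-1}_{X^i}(U)$ for one common uniform $U$, which preserves $\Psi_m(X^i)$ and keeps $X^i\in\mathcal{A}_{\rho^i}$ while rendering $X^1\mu_1,\dots,X^n\mu_n$ jointly comonotone. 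Padding with zeros gives $\{X^i\}_{i\in\mathcal{I}}\in\mathbb{A}(X)$ with $X=\sum_{i=1}^nX^i\mu_i\in\mathcal{A}_\mu$, and comonotone additivity of $\Psi_m$ yields $\Psi_m(X)=\sum_{i=1}^n\Psi_m(X^i)\mu_i>\sum_{i=1}^n\min(\beta^{min}_{\rho^i}(m),K)\mu_i-\varepsilon$; letting $\varepsilon\to0$, $K\to\infty$ and $n\to\infty$ closes the gap. The obstacle here is precisely that $\Psi_m$ is merely subadditive, so arbitrary near-optimizers of $\beta^{min}_{\rho^i}(m)$ need not combine additively; the comonotone rearrangement, available because the space is atomless and both $\{\rho^i\}$ and $\Psi_m$ are law invariant, is what restores additivity.

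For part~(ii) I would argue through dual sets. Adding comonotonic additivity, each $\rho^i$ becomes convex and comonotone, hence coherent (Lemma~4.83 of~\cite{Follmer2016}, cf.\ Remark~\ref{rmk:LIdual}), so by Theorem~\ref{thm:dual2}(iii) it is a distortion risk measure with concave distortion $g^i$ whose dual set is the core of $g^i$, namely $\mathcal{Q}_{\rho^i}=\{\mathbb{Q}\in\mathcal{Q}\colon\mathbb{Q}(A)\le g^i(\mathbb{P}(A))\ \forall A\in\mathcal{F}\}$. By Proposition~\ref{prp:propconv}, $\rho^\mu_{conv}$ is coherent and law invariant and, by part~(i), finite, normalized and Fatou continuous, so Theorem~\ref{Thm:dualconv2}(i) gives $\mathcal{Q}_{\rho^\mu_{conv}}=\bigcap_{i\in\mathcal{I}_\mu}\mathcal{Q}_{\rho^i}$. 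Put $g:=\inf_{i\in\mathcal{I}_\mu}g^i$; as a pointwise infimum of increasing concave functions with values in $[0,1]$ and $g^i(0)=0$, $g^i(1)=1$, it is again a concave distortion, hence by Theorem~\ref{thm:dual2}(iii) it is the distortion of a law-invariant comonotone coherent risk measure $\rho_g$, whose dual set is the core of $g$ by Remark~\ref{rmk:LIdual}(i). Interchanging the quantifiers, $\bigcap_{i\in\mathcal{I}_\mu}\{\mathbb{Q}\colon\mathbb{Q}(A)\le g^i(\mathbb{P}(A))\ \forall A\}=\{\mathbb{Q}\colon\mathbb{Q}(A)\le g(\mathbb{P}(A))\ \forall A\}$, so $\mathcal{Q}_{\rho^\mu_{conv}}=\mathcal{Q}_{\rho_g}$; since both $\rho^\mu_{conv}$ and $\rho_g$ are Fatou-continuous coherent risk measures, Theorem~\ref{the:dual}(ii) writes each as $\max_{m\in\mathcal{Q}_{(\cdot)}}E_m[-X]$, and equality of dual sets forces $\rho^\mu_{conv}=\rho_g$. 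In particular $\rho^\mu_{conv}$ is comonotone with distortion $g=\inf_{i\in\mathcal{I}_\mu}g^i$. The only items requiring a line of verification — that $\inf_ig^i$ is a legitimate concave distortion and that $\bigcap_i\mathrm{core}(g^i)=\mathrm{core}(\inf_ig^i)$ — are immediate.
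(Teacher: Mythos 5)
Your proposal is correct and takes essentially the same route as the paper: loadedness via second-order stochastic dominance gives finiteness and normalization, the penalty formula is obtained by rerunning the argument of Theorem \ref{Thm:dualconv}(ii) with $X\mapsto\int_{(0,1]}ES^\alpha(X)\,dm$ in place of $E_m[-X]$, and part (ii) identifies $\mathcal{Q}_{\rho^\mu_{conv}}=\bigcap_{i\in\mathcal{I}_\mu}\mathcal{Q}_{\rho^i}$ with the core of $g=\inf_{i\in\mathcal{I}_\mu}g^i$. Your common-uniform comonotone rearrangement simply makes explicit the step the paper compresses into ``recalling that acceptance sets of law-invariant risk measures are law invariant'', compensating for the fact that the Kusuoka functional is only subadditive and comonotone additive rather than linear.
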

 	
 	\begin{proof}
 		\begin{enumerate}
 			\item By Theorem \ref{thm:dual2} and Proposition \ref{prp:propconv}, we have that $\rho^\mu_{conv}$ is a Fatou continuous convex risk measure with $\rho^\mu_{conv}<\infty$. Moreover, $\rho^\mu_{conv}$ is either finite or identically $-\infty$. Regarding finiteness and normalization, we note that $\rho(X)\geq -E[X],\:\forall\:X\in L^\infty$ for normalized, convex, law-invariant risk measures by second-order stochastic dominance. Thus, \[\alpha^{min}_{\rho^i}(\mathbb{P})=\sup\limits_{X\in L^\infty}\{E[-X]-\rho(X)\}\leq\sup\limits_{X\in L^\infty}\{\rho(X)-\rho(X)\}=0,\:\forall\:i\in\mathcal{I}.\] By the non-negativity of the penalty terms, we have $\alpha^{min}_{\rho_i}(\mathbb{P})=0,\:\forall\:i\in\mathcal{I}$. Hence, by item (ii) of Corollary \ref{crl:hull}, we conclude that $\rho^\mu_{conv}$ is  normalized and, consequently, finite. Moreover, the penalty term can be obtained by an argument similar to that in (ii) of Theorem \ref{Thm:dualconv} by considering the $m\rightarrow\int_{(0,1]}ES^\alpha(X)dm$ linear and playing the role of $m\rightarrow E_{m}[-X]$ and recalling that acceptance sets of law invariant risk measures are law invariant in the sense that $X\in\mathcal{A}_\rho$ and $X\sim Y$ implies $Y\in\mathcal{A}_\rho$.
 			\item By Theorems \ref{thm:dual2}, \ref{Thm:dualconv}, and \ref{Thm:dualconv2}, as well as Remark \ref{rmk:LIdual}, we have, recalling that $\rho^\mu_{conv}$ is finite and Fatou continuous, that \begin{align*}
 			\mathcal{Q}_{\rho^\mu_{conv}}&=\{\mathbb{Q}\in\mathcal{Q}\colon\mathbb{Q}(A)\leq g^i(\mathbb{P}(A))\:\forall\:i\in\mathcal{I}_\mu\:\forall\:A\in\mathcal{F}\}\\
 			&=\left\lbrace \mathbb{Q}\in\mathcal{Q}\colon\mathbb{Q}(A)\leq \inf_{i\in\mathcal{I}_\mu} g^i(\mathbb{P}(A))\:\forall\:A\in\mathcal{F}\right\rbrace.
 			\end{align*}
 			By the properties of the infimum and $\{g^i\}_{i\in\mathcal{I}}$, we obtain that $g:[0,1]\rightarrow[0,1]$ is increasing and concave, and it satisfies $g(0)=0$ and $g(1)=1$. Thus, $\rho^\mu_{conv}$ can be represented as a Choquet integral using \eqref{eq:dual2com3}, which implies that it is comonotone.
 		\end{enumerate}	
 	\end{proof}

 	Regarding comonotonic additivity, (ii) in Theorem \ref{Thm:dualLI} remains true if we drop the law invariance of $\rho_{\mathcal{I}}$, as shown in the following corollary.
 	
 	\begin{Crl}\label{crl:dualLI}
 		Let $\rho_\mathcal{I}$ be a collection of convex, comonotone risk measures. Then, $\rho^\mu_{conv}$ is finite, normalized, and comonotone, and its capacity function is \begin{equation}\label{eq:penalty4}
 		c(A)=\inf_{i\in\mathcal{I}_\mu} c^i(A),\:\forall\:A\in\mathcal{F},
 		\end{equation} 
 		where $c^i$ is the capacity of $\rho^i$ for each $i\in\mathcal{I}$.
 	\end{Crl}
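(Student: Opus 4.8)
The plan is to reduce the statement to the coherent comonotone case and then read off the capacity from Theorem~\ref{Thm:dualconv2}. As a first step, by Remark~\ref{rmk:LIdual}(i) a convex risk measure that is comonotone is automatically coherent, so each $\rho^i$ is a coherent comonotone risk measure, and by Remark~\ref{rmk:LIdual}(ii) it admits the Choquet representation $\rho^i(X)=\int(-X)\,dc^i=\max_{m\in ba_{1,+}^{c^i}}E_m[-X]$, where $\mathcal{Q}_{\rho^i}=ba_{1,+}^{c^i}=\{m\in ba_{1,+}\colon m(A)\le c^i(A)\ \forall A\in\mathcal{F}\}$ and $c^i(A)=\rho^i(-1_A)$ is a normalized, monotone, submodular capacity. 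In particular $\rho_\mathcal{I}$ consists of coherent risk measures, so Proposition~\ref{prp:propconv} and Theorem~\ref{Thm:dualconv2} apply and give at once that $\rho^\mu_{conv}$ is a finite coherent risk measure (hence normalized, since positive homogeneity together with finiteness forces $\rho^\mu_{conv}(0)=0\cdot\rho^\mu_{conv}(0)=0$) and that $\mathcal{Q}_{\rho^\mu_{conv}}=\bigcap_{i\in\mathcal{I}_\mu}\mathcal{Q}_{\rho^i}$.

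The second step is to recognise this dual set as the core of a single set function. Since $m\in\mathcal{Q}_{\rho^i}$ if and only if $m(A)\le c^i(A)$ for every $A\in\mathcal{F}$, one gets
\[
\mathcal{Q}_{\rho^\mu_{conv}}=\bigcap_{i\in\mathcal{I}_\mu}ba_{1,+}^{c^i}=\left\{m\in ba_{1,+}\colon m(A)\le c(A)\ \forall\,A\in\mathcal{F}\right\}=ba_{1,+}^{c},\qquad c(A):=\inf_{i\in\mathcal{I}_\mu}c^i(A).
\]
That $c$ is normalized ($c(\emptyset)=0$, $c(\Omega)=1$) and monotone is immediate from the corresponding properties of the $c^i$. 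The crucial point is to check that $c$ is again submodular; granting this, $c$ is a capacity and, by the Choquet representation recalled in Remark~\ref{rmk:LIdual}(ii) applied to the convex comonotone risk measure induced by $c$, one obtains $\rho^\mu_{conv}(X)=\max_{m\in ba_{1,+}^{c}}E_m[-X]=\int(-X)\,dc$ for all $X\in L^\infty$. Since the Choquet integral is comonotone additive and $X,Y$ comonotone implies $-X,-Y$ comonotone, this gives $\rho^\mu_{conv}(X+Y)=\rho^\mu_{conv}(X)+\rho^\mu_{conv}(Y)$ for every comonotone pair, i.e. comonotonicity; and evaluating at $X=-1_A$ gives $\rho^\mu_{conv}(-1_A)=\int 1_A\,dc=c(A)=\inf_{i\in\mathcal{I}_\mu}c^i(A)$, which identifies the capacity and completes the proof.

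I expect the submodularity of $c=\inf_{i\in\mathcal{I}_\mu}c^i$ to be the main obstacle: the naive estimate $c(A\cup B)+c(A\cap B)\le c^i(A\cup B)+c^i(A\cap B)\le c^i(A)+c^i(B)$, valid for each fixed $i$, only produces $c(A\cup B)+c(A\cap B)\le\inf_i\big(c^i(A)+c^i(B)\big)$, and this infimum may be strictly larger than $c(A)+c(B)=\inf_i c^i(A)+\inf_j c^j(B)$. To close the gap I would instead combine the subadditivity of $\rho^\mu_{conv}$ on comonotone pairs (Remark~\ref{rmk:prop}(ii)) with the nonemptiness of $\mathcal{Q}_{\rho^\mu_{conv}}=ba_{1,+}^{c}$ supplied by Theorem~\ref{Thm:dualconv2}, and establish the reverse inequality $\rho^\mu_{conv}(X+Y)\ge\rho^\mu_{conv}(X)+\rho^\mu_{conv}(Y)$ for comonotone $X,Y$ by exhibiting a single $m^{\ast}\in\bigcap_{i\in\mathcal{I}_\mu}\mathcal{Q}_{\rho^i}$ that is simultaneously optimal in the dual problems for $X$, for $Y$ and for $X+Y$: one builds $m^{\ast}$ by assigning to the nested family of upper level sets of $-X$ (which, by comonotonicity, also orders $-Y$) the values that $c$ prescribes on that family, and comonotonic additivity of the Choquet integral of each $c^i$ is what makes such an $m^{\ast}$ simultaneously optimal; the existence of $m^{\ast}$ in turn forces the required submodularity of $c$. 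Checking that this greedily defined $m^{\ast}$ indeed lies in $ba_{1,+}^{c}$ is the only genuinely non-routine part.
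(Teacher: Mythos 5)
Your proposal follows the paper's own route up to the point you yourself flag: comonotonicity plus convexity gives coherence (Remark \ref{rmk:LIdual}), Proposition \ref{prp:propconv} and Theorem \ref{Thm:dualconv2} give finiteness, normalization and $\mathcal{Q}_{\rho^\mu_{conv}}=\bigcap_{i\in\mathcal{I}_\mu}\mathcal{Q}_{\rho^i}=ba_{1,+}^{c}$ with $c=\inf_{i\in\mathcal{I}_\mu}c^i$, and the paper then concludes by invoking ``an argument similar to Theorem \ref{Thm:dualLI}'' together with Remark \ref{rmk:LIdual}. You have correctly isolated what that conclusion hinges on: one needs $c$ to be again a submodular capacity with an exact core, so that $\max_{m\in ba_{1,+}^{c}}E_m[-X]$ coincides with the Choquet integral of $-X$ with respect to $c$, and in particular $\rho^\mu_{conv}(-1_A)=c(A)$ rather than merely $\rho^\mu_{conv}(-1_A)\leq c(A)$. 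In the law-invariant Theorem \ref{Thm:dualLI} the analogous step is cheap because an infimum of concave distortions of the single function $\mathbb{P}$ is again concave; for general capacities it is not: an infimum of submodular set functions need not be submodular (already on a two-point space, the setwise minimum of two distinct additive probabilities violates submodularity and has empty core), so nothing in the general facts you cite delivers the needed property, and this is precisely the entire content of the corollary beyond Theorem \ref{Thm:dualconv2}.

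Your suggested repair does not close this gap. Defining $m^{\ast}$ by prescribing the values of $c$ along the chain of upper level sets of $-X$ and then ``checking that this greedily defined $m^{\ast}$ lies in $ba_{1,+}^{c}$'' is not a routine verification left for later: for a monotone normalized set function, the assertion that for every chain there is an additive measure dominated by $c$ and agreeing with $c$ along the chain is essentially equivalent to the submodularity/exactness you are trying to establish (it is the marginal-vector characterization of convex games, read on the anti-core), so the sentence ``the existence of $m^{\ast}$ in turn forces the required submodularity of $c$'' is circular. Comonotone additivity of each $\rho^i$ only yields, for each fixed $i$, some $m^i\in\mathcal{Q}_{\rho^i}$ agreeing with $c^i$ along the chain; it gives no single $m^{\ast}$ dominated by every $c^i$ whose chain increments match those of $\inf_i c^i$, since the infimum may be attained by different indices on different sets of the chain and the resulting increments can exceed $c^j(B)$ for sets $B$ off the chain. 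To complete the argument you would need either an additional structural hypothesis (e.g.\ all $c^i$ concave distortions of one fixed probability, which is the case already covered by Theorem \ref{Thm:dualLI}) or a genuine proof that the infimum capacity arising here is submodular and exact; to be fair, the paper's own proof of Corollary \ref{crl:dualLI} is equally silent at exactly this point, so you have located the crux, but your proposal does not supply the missing argument.
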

 	
 	\begin{proof}
 		We note that, by Proposition \ref{prp:propconv}, we have $\rho^{\mu}_{conv}<\infty$ and $\rho^{\mu}_{conv}(0)=0>-\infty$. Let the set function $c\colon\mathcal{F}\rightarrow[0,1]$ be defined as $c(A)=\inf_{i\in\mathcal{I}_\mu}c^i(A)$, where $c^i$ is the capacity related to $\rho^i$ for each $i\in\mathcal{I}$. Thus, by an argument similar to that in Theorem \ref{Thm:dualLI}, if we consider $ba^c_{1,+}=\left\lbrace m\in ba_{1,+}\colon m(A)\leq c(A)\:\forall\:A\in\mathcal{F}\right\rbrace$, then the reasoning in Remark \ref{rmk:LIdual} implies that the claim is true. 
 	\end{proof}
 	
 	\section{Optimal allocations}\label{sec:alloc}

 	An exciting feature of traditional finite inf-convolution is capital allocation. A highly relevant concept is Pareto optimality, which is defined as follows.
 	
 	\begin{Def}
 		We call $\{X^i\}_{i\in\mathcal{I}}\in\mathbb{A}(X)$
 		\begin{enumerate}
 			\item  Optimal for $X\in L^\infty$ if $\sum_{i\in\mathcal{I}}\rho^i(X^i)\mu_i=\rho^\mu_{conv}(X)$.
 			\item  Pareto optimal for $X\in L^\infty$ if for any $\{Y^i\}_{i\in\mathcal{I}}\in\mathbb{A}(X)$  such that $\rho^i(Y^i)\leq\rho^i(X^i)\:\forall\:i\in\mathcal{I}_\mu$, we have $\rho^i(Y^i)=\rho^i(X^i)\:\forall\:i\in\mathcal{I}_\mu$.
 		\end{enumerate}
 		
 	\end{Def}
 	
 	\begin{Rmk}\label{rmk:opt}
 		\begin{enumerate}
 			\item If $\rho^\mu_{conv}$ is normalized, $\{X^i=0\}_{i\in\mathcal{I}}$ is optimal for $0$. This implies the condition that if $\{X^i\}_{i\in\mathcal{I}}\in\mathbb{A}(0)$ and $\rho^i(X^i)\leq0\:\forall\:i\in\mathcal{I}_\mu$, then $\rho^i(X^i)=0\:\forall\:i\in\mathcal{I}_\mu$, and therefore $\{X^i=0\}_{i\in\mathcal{I}}$ is also Pareto optimal for $0$. This can be understood as a non-arbitrage condition.
 			We note that any optimal allocation must be Pareto optimal and that a risk sharing rule is also a Pareto-optimal allocation. 
 		\end{enumerate}
 	\end{Rmk}

 	If $\rho_{\mathcal{I}}$ consists of monetary risk measures and $\mathcal{I}$ is finite, Theorem 3.1 in \cite{Jouini2008} shows that optimal and Pareto-optimal allocations coincide. In the following proposition, we extend this result to the context of countable $\mathcal{I}$.  
 	
 	\begin{Prp}\label{Prp:pareto}
 		 We have that
 		\begin{enumerate}
 			\item $\{X^i\}_{i\in\mathcal{I}}\in\mathbb{A}(X)$ is optimal for $X\in L^\infty$ if and only if it is Pareto optimal for $X\in L^\infty$.
 			\item if $\{X^i\}_{i\in\mathcal{I}}$ is optimal for $X\in L^\infty$, then so is $\{X^i+C^i\}_{i\in\mathcal{I}}$, where $C^i\in\mathbb{R}\:\forall\:i\in\mathcal{I}_\mu$, and $\sum_{i\in\mathcal{I}}C^i\mu_i=0$. 
 			\item Under sub-additivity of $\rho_{\mathcal{I}}$  we have that if $\{X^i\}_{i\in\mathcal{I}}$ is optimal for $X\in L^\infty$, then so is $\{X^i+Y^i\}$ for any $\{Y^i\}$ that is optimal for $0$.
 		 		\end{enumerate}  
 	\end{Prp}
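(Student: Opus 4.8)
The plan is to handle the three items in turn, with essentially all of the work going into the ``Pareto optimal $\Rightarrow$ optimal'' half of (i). For the easy half of (i), assume $\{X^i\}_{i\in\mathcal{I}}$ is optimal for $X$ and let $\{Y^i\}_{i\in\mathcal{I}}\in\mathbb{A}(X)$ satisfy $\rho^i(Y^i)\le\rho^i(X^i)$ for every $i\in\mathcal{I}_\mu$. Since $\{Y^i\}\in\mathbb{A}(X)$, the definition of $\rho^\mu_{conv}$ and optimality of $\{X^i\}$ give
\[
\rho^\mu_{conv}(X)\le\sum_{i\in\mathcal{I}}\rho^i(Y^i)\mu_i\le\sum_{i\in\mathcal{I}}\rho^i(X^i)\mu_i=\rho^\mu_{conv}(X),
\]
so $\sum_{i\in\mathcal{I}}\bigl(\rho^i(X^i)-\rho^i(Y^i)\bigr)\mu_i=0$; each summand indexed by $\mathcal{I}_\mu$ is nonnegative while $\mu_i=0$ otherwise, hence $\rho^i(Y^i)=\rho^i(X^i)$ for all $i\in\mathcal{I}_\mu$, i.e. $\{X^i\}$ is Pareto optimal.

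For the converse I would argue by contraposition. Suppose $\{X^i\}\in\mathbb{A}(X)$ is not optimal; then there is $\{Z^i\}\in\mathbb{A}(X)$ with $\delta:=\sum_{i\in\mathcal{I}}\rho^i(X^i)\mu_i-\sum_{i\in\mathcal{I}}\rho^i(Z^i)\mu_i>0$, both series being finite by Proposition \ref{prp:bound}. The key move is to ``rebalance'' $\{Z^i\}$ by deterministic cash amounts so that the result dominates $\{X^i\}$ in every $\mu$-relevant coordinate while still summing to $X$: set $c_i:=\rho^i(Z^i)-\rho^i(X^i)+\delta$ for $i\in\mathcal{I}_\mu$, $c_i:=0$ otherwise, and $Y^i:=Z^i+c_i$. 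Since each $\rho^i$ is a normalized monetary risk measure, monotonicity and translation invariance yield $|\rho^i(W^i)|\le\|W^i\|_\infty$ for every $W^i\in L^\infty$; as $\{X^i\}$ and $\{Z^i\}$ are uniformly norm-bounded this makes $\{c_i\}$ bounded, hence $\{Y^i\}$ a bounded family. Moreover $\sum_{i\in\mathcal{I}}c_i\mu_i=-\delta+\delta\sum_{i\in\mathcal{I}_\mu}\mu_i=0$, using $\sum_{i\in\mathcal{I}_\mu}\mu_i=\sum_{i\in\mathcal{I}}\mu_i=1$, so $\sum_{i\in\mathcal{I}}Y^i\mu_i=X$ and $\{Y^i\}\in\mathbb{A}(X)$. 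By translation invariance $\rho^i(Y^i)=\rho^i(Z^i)-c_i=\rho^i(X^i)-\delta<\rho^i(X^i)$ for every $i\in\mathcal{I}_\mu$, contradicting Pareto optimality. I expect this construction — picking the $c_i$ so that boundedness of $\{Y^i\}$, the budget identity $\sum_i c_i\mu_i=0$, and the strict coordinatewise improvement all hold at once — to be the main (if modest) obstacle; countability enters only through the convergence statements already established in Proposition \ref{prp:bound}.

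Items (ii) and (iii) are then immediate. For (ii), $\{X^i+C^i\}$ is again a bounded allocation of $X$ because $\sum_{i\in\mathcal{I}}(X^i+C^i)\mu_i=X+\sum_{i\in\mathcal{I}}C^i\mu_i=X$, and translation invariance gives $\sum_{i\in\mathcal{I}}\rho^i(X^i+C^i)\mu_i=\sum_{i\in\mathcal{I}}\rho^i(X^i)\mu_i-\sum_{i\in\mathcal{I}}C^i\mu_i=\rho^\mu_{conv}(X)$, so $\{X^i+C^i\}$ is optimal. For (iii), the identity $\{aX^i+bY^i\}\in\mathbb{A}(aX+bY)$ noted before the definition of $\rho^\mu_{conv}$ gives $\{X^i+Y^i\}\in\mathbb{A}(X)$ for any $\{Y^i\}\in\mathbb{A}(0)$; then subadditivity of each $\rho^i$, together with Proposition \ref{prp:bound}, gives
\[
\rho^\mu_{conv}(X)\le\sum_{i\in\mathcal{I}}\rho^i(X^i+Y^i)\mu_i\le\sum_{i\in\mathcal{I}}\rho^i(X^i)\mu_i+\sum_{i\in\mathcal{I}}\rho^i(Y^i)\mu_i=\rho^\mu_{conv}(X)+\rho^\mu_{conv}(0),
\]
and normalization of $\rho^\mu_{conv}$, namely $\rho^\mu_{conv}(0)=0$, forces equality throughout, whence $\{X^i+Y^i\}$ is optimal for $X$. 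Finally, note that the very existence of an optimal allocation forces $\rho^\mu_{conv}(X)$ to be finite (it equals a convergent series), so the value $-\infty$ never intervenes in (ii) or (iii).
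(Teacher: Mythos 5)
Your proof is correct and follows essentially the same route as the paper: the ``only if'' direction is the standard infimum/nonnegative-summands argument, the converse uses exactly the paper's cash-rebalancing construction (your $c_i=\rho^i(Z^i)-\rho^i(X^i)+\delta$ is the paper's $-k^i+k$), and items (ii) and (iii) coincide with the paper's translation-invariance and subadditivity computations. Your explicit checks that $\{c_i\}$ is bounded and that $\sum_{i\in\mathcal{I}}c_i\mu_i=0$ are a welcome, slightly more careful rendering of a step the paper leaves implicit, but they do not change the approach.
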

 	
 	\begin{proof}
 		We prove (i) since both (ii) and (iii) are directly obtained.	The ``only if'' part is straightforward, as in Remark \ref{rmk:opt}.	For the ``if'' part, let $\{X^i\}_{i\in\mathcal{I}}\in\mathbb{A}(X)$ be not optimal for $X\in L^\infty$. Then, there is $\{Y^i\}_{i\in\mathcal{I}}\in\mathbb{A}(X)$ such that $\sum_{i\in\mathcal{I}}\rho^i(Y^i)\mu_i<\sum_{i\in\mathcal{I}}\rho^i(X^i)\mu_i$. Let $k^i=\rho^i(X^i)-\rho^i(Y^i)\:\forall\:i\in\mathcal{I}_\mu$ and $k=\sum_{i\in\mathcal{I}}k^i\mu_i>0$. Moreover, $\{Y^i-k^i+k\}_{i\in\mathcal{I}}\in\mathbb{A}(X)$ and \[\sum_{i\in\mathcal{I}}\rho^i(Y^i-k^i+k)\mu_i<\sum_{i\in\mathcal{I}}\rho^i(Y^i-\rho^i(X^i)+\rho^i(Y^i))\mu_i=\sum_{i\in\mathcal{I}}\rho^i(X^i)\mu_i.\]
 			Hence, $\{X^i\}_{i\in\mathcal{I}}$ is not Pareto optimal for $X$.
 	 	\end{proof}

 	We now determine a necessary and sufficient condition for optimality in the case of convex risk measures.
 	
 	\begin{Thm}\label{Thm:pareto}
 		Let $\rho_\mathcal{I}$ be a family of convex risk measures. Then $\{X^i\}_{i\in\mathcal{I}}\in\mathbb{A}(X)$ is optimal for $X\in L^\infty$ if and only if $\bigcap_{i\in\mathcal{I}_\mu}\partial\rho^i(X^i)=\partial\rho^\mu_{conv}(X)\not=\emptyset$.
 	 
 	\end{Thm}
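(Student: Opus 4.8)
The plan is to derive both implications from the dual description of subgradients in Remark~\ref{rmk:dual}, the penalty decomposition in Theorem~\ref{Thm:dualconv}(ii), and the inclusion $\bigcap_{i\in\mathcal{I}_\mu}\partial\rho^i(X^i)\subseteq\partial\rho^\mu_{conv}(X)$ already supplied by Corollary~\ref{crl:hull}(iv); only the reverse inclusion (plus non-emptiness) remains to be proved. The one technical ingredient I need beyond these is that $E_m$ commutes with the allocation series: if $\{X^i\}_{i\in\mathcal{I}}\in\mathbb{A}(X)$ with $\lVert X^i\rVert_\infty\leq K$ for all $i$, then $\lVert X-\sum_{i=1}^n X^i\mu_i\rVert_\infty\leq K\sum_{i>n}\mu_i\to 0$, since $\sum_i\mu_i$ converges; hence the partial sums converge to $X$ in $\lVert\cdot\rVert_\infty$ and $E_m[-X]=\sum_{i\in\mathcal{I}}E_m[-X^i]\mu_i$ for every $m\in ba_{1,+}$. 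I also record that a bounded allocation satisfies $\rho^i(X^i)\in[-K,K]$ by monotonicity, translation invariance and normalization, so $\sum_{i\in\mathcal{I}}\rho^i(X^i)\mu_i$ is a finite real number; therefore the existence of an optimal bounded allocation forces $\rho^\mu_{conv}(X)$ to be finite, and so, being convex, $\rho^\mu_{conv}$ is finite everywhere (the degenerate case $\rho^\mu_{conv}\equiv-\infty$ admitting no optimal allocation).

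For the ``if'' direction, take $m\in\bigcap_{i\in\mathcal{I}_\mu}\partial\rho^i(X^i)=\partial\rho^\mu_{conv}(X)$. By Theorem~\ref{the:dual} and Remark~\ref{rmk:dual}, $m\in\partial\rho^i(X^i)$ is equivalent to $\rho^i(X^i)=E_m[-X^i]-\alpha^{min}_{\rho^i}(m)$, and $m\in\partial\rho^\mu_{conv}(X)$ to $\rho^\mu_{conv}(X)=E_m[-X]-\alpha^{min}_{\rho^\mu_{conv}}(m)$. Multiplying the first identity by $\mu_i$, summing over $i\in\mathcal{I}$ (terms with $\mu_i=0$ vanish), and using Theorem~\ref{Thm:dualconv}(ii) together with the interchange of $E_m$ and the series gives
\[
\sum_{i\in\mathcal{I}}\rho^i(X^i)\mu_i=\sum_{i\in\mathcal{I}}E_m[-X^i]\mu_i-\sum_{i\in\mathcal{I}}\alpha^{min}_{\rho^i}(m)\mu_i=E_m[-X]-\alpha^{min}_{\rho^\mu_{conv}}(m)=\rho^\mu_{conv}(X),
\]
so $\{X^i\}_{i\in\mathcal{I}}$ is optimal for $X$.

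For the ``only if'' direction, assume $\{X^i\}_{i\in\mathcal{I}}\in\mathbb{A}(X)$ is optimal. Then $\rho^\mu_{conv}$ is finite as noted, hence by Proposition~\ref{prp:propconv} and Theorem~\ref{the:dual}(i) the maximum in its dual representation is attained, so $\partial\rho^\mu_{conv}(X)\neq\emptyset$. Fix any $m\in\partial\rho^\mu_{conv}(X)$; then $\alpha^{min}_{\rho^\mu_{conv}}(m)<\infty$, and by Theorem~\ref{Thm:dualconv}(ii) and nonnegativity of the penalties $\alpha^{min}_{\rho^i}(m)<\infty$ for every $i\in\mathcal{I}_\mu$, so no $\infty-\infty$ arises below. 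Combining optimality, the identity $\rho^\mu_{conv}(X)=E_m[-X]-\alpha^{min}_{\rho^\mu_{conv}}(m)$, Theorem~\ref{Thm:dualconv}(ii) and the interchange of $E_m$ with the series,
\[
\sum_{i\in\mathcal{I}}\Bigl(\rho^i(X^i)-E_m[-X^i]+\alpha^{min}_{\rho^i}(m)\Bigr)\mu_i=\rho^\mu_{conv}(X)-\Bigl(E_m[-X]-\alpha^{min}_{\rho^\mu_{conv}}(m)\Bigr)=0.
\]
Each summand is nonnegative by Theorem~\ref{the:dual}, hence each summand with $\mu_i>0$ vanishes, i.e.\ $\rho^i(X^i)=E_m[-X^i]-\alpha^{min}_{\rho^i}(m)$, so $m\in\partial\rho^i(X^i)$, for all $i\in\mathcal{I}_\mu$. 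Thus $\partial\rho^\mu_{conv}(X)\subseteq\bigcap_{i\in\mathcal{I}_\mu}\partial\rho^i(X^i)$, which with Corollary~\ref{crl:hull}(iv) yields the claimed equality, and the common set is nonempty.

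I expect the main obstacle to be the justification that $E_m$ commutes with the infinite allocation sum for an arbitrary finitely additive $m\in ba_{1,+}$ — where the mere $\mathbb{P}$-a.s.\ convergence built into $\mathbb{A}(X)$ would not suffice — which is overcome by exploiting the summability of $\mu$ to upgrade the partial sums to $\lVert\cdot\rVert_\infty$-convergence, and the parallel bookkeeping needed to guarantee that $\rho^\mu_{conv}(X)$ and all relevant penalty terms are finite before the termwise comparison is performed.
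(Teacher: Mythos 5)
Your proof is correct and takes essentially the same route as the paper's: the inclusion $\bigcap_{i\in\mathcal{I}_\mu}\partial\rho^i(X^i)\subseteq\partial\rho^\mu_{conv}(X)$ from Corollary \ref{crl:hull}, the penalty additivity of Theorem \ref{Thm:dualconv}(ii), and termwise Fenchel--Young equalities in both directions. The differences are only in bookkeeping: you make explicit the $\lVert\cdot\rVert_\infty$-convergence of the partial sums (justifying the interchange of $E_m$ with the allocation series for finitely additive $m$) and the attainment argument giving $\partial\rho^\mu_{conv}(X)\neq\emptyset$, points the paper's proof leaves implicit.
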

 	
 	\begin{proof}
 	By Corollary \ref{crl:hull} we have $\bigcap_{i\in\mathcal{I}_\mu}\partial\rho^i(X^i)\subseteq\partial\rho^\mu_{conv}(X)$.	We assume $\bigcap_{i\in\mathcal{I}_\mu}\partial\rho^i(X^i)\not=\emptyset$. Then, let $m^\prime\in\left\lbrace m\in ba_{1,+}\colon m\in\partial\rho^i(X^i)\:\forall\:i\in\mathcal{I}_\mu \right\rbrace\subseteq\partial\rho^\mu_{conv}(X)$. We have that \[\sum_{i\in \mathcal{I}}\rho^i(X^i)\mu_i=\sum_{i\in\mathcal{I}}\left( E_{m^\prime}[-X^i]-\alpha_{\rho^i}^{min}(m^\prime)\right) \mu_i\leq E_{m^\prime}[-X]-\alpha_{\rho^\mu_{conv}}^{min}(m^\prime)=\rho^\mu_{conv}(X).\] Hence, $\{X^i\}_{i\in\mathcal{I}}$ is optimal for $X\in L^\infty$.	Regarding the converse, for any $m^\prime\in\partial\rho^\mu_{conv}(X)$, we obtain by an argument similar to that in Theorem \ref{Thm:dualconv} the following: \begin{align*}
 		\rho^\mu_{conv}(X)&=E_{m^\prime}[-X]-\sum_{i\in\mathcal{I}}\alpha^{min}_{\rho^i}(m^\prime)\mu_i\\
 		&=\lim\limits_{n\rightarrow\infty}\left( E_{m^\prime}[-X]+\sum_{i=1}^n\inf\limits_{Y\in L^\infty}\left\lbrace  E_{m^\prime}[Y]+\rho^i\left(Y\right)\right\rbrace \mu_i\right)  \\
 	&\leq\lim\limits_{n\rightarrow\infty}\left( E_{m^\prime}[-X]+\inf\limits_{\left\lbrace Y_1,\dots,Y_n\right\rbrace \subset L^\infty}\sum_{i=1}^n\left\lbrace  E_{m^\prime}[Y^i]+\rho^i\left(Y^i\right)\right\rbrace \mu_i\right)  \\
 	&\leq\lim\limits_{n\rightarrow\infty}\inf\left\lbrace \sum_{i=1}^n\left\lbrace  E_{m^\prime}[Y^i-X]+\rho^i\left(Y^i\right)\right\rbrace \mu_i\colon\sum_{i=1}^nY^i\mu_i=X\right\rbrace \\
  	&=\lim\limits_{n\rightarrow\infty}\inf\left\lbrace \sum_{i=1}^n\left\lbrace  \rho^i\left(Y^i\right)\right\rbrace \mu_i\colon\sum_{i=1}^nY^i\mu_i=X\right\rbrace  =\rho^\mu_{conv}(X).
 		\end{align*}
 		Thus, $\rho^\mu_{conv}(X)=\sum_{i\in\mathcal{I}_\mu}\rho^i(X^i)\mu_i$ if and only if $\exists\:m^\prime\in ba_{1,+}$ such that $\rho^i(X^i)=E_{m^\prime}[-X]+\alpha^{min}_{\rho^i}(m^\prime)\:\forall\:i\in\mathcal{I}_\mu$. Hence, $m^\prime\in\left\lbrace m\in ba_{1,+}\colon m\in\partial\rho^i(X^i)\:\forall\:i\in\mathcal{I}_\mu\right\rbrace$.
 	\end{proof}
 	
 	We have the following corollary regarding subdifferential and optimality conditions.
 	
 	\begin{Crl}\label{crl:como}
 		Let  $\rho_\mathcal{I}$ be a collection of convex risk measures. If for any $X\in L^\infty$ there is an optimal allocation, then $\partial\alpha_{\rho^\mu_{conv}}^{min}(m)=\sum_{i\in\mathcal{I}_\mu}\partial\alpha_{\rho^i}^{min}(m)\mu_i$  for any $m\in ba_{1,+}$.
 	\end{Crl}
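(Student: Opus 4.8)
The plan is to combine the inclusion already obtained in item (v) of Corollary~\ref{crl:hull} with the optimality characterization of Theorem~\ref{Thm:pareto}. Recall that Corollary~\ref{crl:hull}(v) gives, for every $m\in ba_{1,+}$, the inclusion $\sum_{i\in\mathcal{I}}\partial\alpha_{\rho^i}^{min}(m)\mu_i\subseteq\partial\alpha_{\rho^\mu_{conv}}^{min}(m)$, where the left-hand set is defined through allocations and the summands indexed by $i\notin\mathcal{I}_\mu$ are immaterial since $\mu_i=0$ there; thus $\sum_{i\in\mathcal{I}}\partial\alpha_{\rho^i}^{min}(m)\mu_i=\sum_{i\in\mathcal{I}_\mu}\partial\alpha_{\rho^i}^{min}(m)\mu_i$. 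Hence the only thing left to establish is the reverse inclusion $\partial\alpha_{\rho^\mu_{conv}}^{min}(m)\subseteq\sum_{i\in\mathcal{I}_\mu}\partial\alpha_{\rho^i}^{min}(m)\mu_i$.

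To do this I would fix $m\in ba_{1,+}$ and take an arbitrary $X\in\partial\alpha_{\rho^\mu_{conv}}^{min}(m)$. By the equivalence recalled in Remark~\ref{rmk:dual}, the membership $X\in\partial\alpha_{\rho^\mu_{conv}}^{min}(m)$ is the same as $m\in\partial\rho^\mu_{conv}(X)$, so in particular $\partial\rho^\mu_{conv}(X)\neq\emptyset$. Now I would invoke the standing hypothesis to select an allocation $\{X^i\}_{i\in\mathcal{I}}\in\mathbb{A}(X)$ that is optimal for $X$. Theorem~\ref{Thm:pareto} then yields $\bigcap_{i\in\mathcal{I}_\mu}\partial\rho^i(X^i)=\partial\rho^\mu_{conv}(X)$, hence $m\in\partial\rho^i(X^i)$ for every $i\in\mathcal{I}_\mu$, and translating this back through Remark~\ref{rmk:dual} gives $X^i\in\partial\alpha_{\rho^i}^{min}(m)$ for every $i\in\mathcal{I}_\mu$. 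Since $\{X^i\}_{i\in\mathcal{I}}\in\mathbb{A}(X)$ means $\sum_{i\in\mathcal{I}}X^i\mu_i=X$, this exhibits $X$ as a member of $\sum_{i\in\mathcal{I}_\mu}\partial\alpha_{\rho^i}^{min}(m)\mu_i$, which is the desired inclusion. The degenerate case is handled in one line: if $\partial\alpha_{\rho^\mu_{conv}}^{min}(m)=\emptyset$, then by the inclusion of Corollary~\ref{crl:hull}(v) the set $\sum_{i\in\mathcal{I}_\mu}\partial\alpha_{\rho^i}^{min}(m)\mu_i$ is empty as well, so the equality holds trivially.

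I do not expect a genuine obstacle here: the argument is essentially bookkeeping in the subgradient/penalty duality of Remark~\ref{rmk:dual}, with Theorem~\ref{Thm:pareto} supplying the one substantive ingredient. The only points demanding a little care are (i) ensuring that the optimal allocation provided by the hypothesis is taken for exactly the element $X$ under consideration, and (ii) keeping the treatment of the indices outside $\mathcal{I}_\mu$ consistent with the set-builder definition of the Minkowski-type sum $\sum_{i\in\mathcal{I}}\partial\alpha_{\rho^i}^{min}(m)\mu_i$ used in Corollary~\ref{crl:hull}(v), which only constrains the components with $\mu_i>0$.
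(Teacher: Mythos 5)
Your argument is correct and matches the paper's own proof in essentially every step: one inclusion is quoted from Corollary \ref{crl:hull}(v), the empty case is dispatched trivially, and the reverse inclusion is obtained by passing from $X\in\partial\alpha_{\rho^\mu_{conv}}^{min}(m)$ to $m\in\partial\rho^\mu_{conv}(X)$ via the convex-conjugate duality of Remark \ref{rmk:dual}, applying Theorem \ref{Thm:pareto} to the optimal allocation supplied by the hypothesis, and translating back to $X^i\in\partial\alpha_{\rho^i}^{min}(m)$ for all $i\in\mathcal{I}_\mu$. If anything, your write-up is a bit more careful than the paper's about the subgradient-duality bookkeeping and the indices outside $\mathcal{I}_\mu$, but the substance is the same.
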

 	\begin{proof}
 		From Corollary \ref{crl:hull} we have  $\sum_{i\in\mathcal{I}_\mu}\partial\alpha_{\rho^i}^{min}(m)\mu_i\subseteq\partial\alpha_{\rho^\mu_{conv}}^{min}(m)$ for any $m\in ba_{1,+}$. For the converse relation, if $\partial\alpha_{\rho^\mu_{conv}}^{min}(m)=\emptyset$, then the claim is immediately obtained. Let then $X\in\partial\alpha_{\rho^\mu_{conv}}^{min}(m)$. By the definition of Legendre--Fenchel
 		convex-conjugate duality, the optimality condition is equivalent to the existence of $m\in ba_{1,+}$ such that $X^i\in\partial\alpha^{min}_{\rho^i}(m)\:\forall\:i\in\mathcal{I}_\mu$. By Theorem \ref{Thm:pareto}, we have that $X^i\in\partial\alpha_{\rho^i}^{min}(m)\:\forall\:i\in\mathcal{I}_\mu$. Then, $X\in\sum_{i\in\mathcal{I}_\mu}\partial\alpha_{\rho^i}^{min}(m)\mu_i$.
 	\end{proof}

 	Under the assumption of law invariance, it is well known that, for finite $\mathcal{I}$,  the minimization problem has a solution under co-monotonic allocations (see, for instance, Theorem 3.2 in \cite{Jouini2008}, Proposition 5 in \cite{Dana2003}, or Theorem 10.46 in \cite{Ruschendor2013}). For the extension to general $\mathcal{I}$, we should extend some definitions and results regarding comonotonicity. We note that if $\mathcal{I}$ is finite, these are equivalent to their traditional counterparts.

 	\begin{Def}\label{def:como}
 		$\{X^i\}_{i\in\mathcal{I}}$ is called $\mathcal{I}$-comonotone if $(X^i,X^j)$ is comonotone $\forall\:(i,j)\in\mathcal{I}_\mu\times\mathcal{I}_\mu$.
 	\end{Def}

 	\begin{Lmm}\label{lmm:como}
 	$\{X^i\}_{i\in\mathcal{I}}\in \mathbb{A}(X)$, $X\in L^\infty$, is $\mathcal{I}$-comonotone if and only if there exists a class of functions $\{h^i\colon\mathbb{R}\rightarrow\mathbb{R},\: i\in\mathcal{I}\}$ that are ($\forall\:i\in\mathcal{I}_\mu$) Lipschitz continuous and increasing, and satisfy $X^i=h^i\left(X\right)$ and $\sum_{i\in\mathcal{I}}h^i(x)\mu_i=x,\:\forall\:x\in\mathbb{R}$. In particular, if $\{X^i\}_{i\in\mathcal{I}}\in\mathbb{A}(X)$ is $\mathcal{I}$-comonotone, then $F^{-1}_X(\alpha)=\sum_{i\in\mathcal{I}}F^{-1}_{X^i}(\alpha)\mu_i,\:\forall\:\alpha\in[0,1]$.
 	\end{Lmm}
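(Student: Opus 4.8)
The plan is to adapt the classical finite-dimensional comonotonic-representation argument (as in Lemma 4.89 or the proof of Theorem 4.93 in \cite{Follmer2016}, or the analogous statements in \cite{Jouini2008}) to the countable setting, taking care that the infinite sum of the functions $h^i$ converges appropriately. First I would prove the ``if'' direction, which is the easy one: if $X^i = h^i(X)$ with each $h^i$ ($i\in\mathcal{I}_\mu$) increasing, then for any pair $(i,j)\in\mathcal{I}_\mu\times\mathcal{I}_\mu$ the map $\omega\mapsto(h^i(X(\omega)),h^j(X(\omega)))$ is comonotone because both coordinates are increasing transformations of the same random variable $X$; and the identity $\sum_{i\in\mathcal{I}}h^i(x)\mu_i = x$ evaluated at $x=X(\omega)$ gives $\sum_{i\in\mathcal{I}}X^i\mu_i = X$, so $\{X^i\}\in\mathbb{A}(X)$ is automatic. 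For $i\notin\mathcal{I}_\mu$ the value of $\mu_i X^i$ is irrelevant, so no constraint on $h^i$ is needed there.

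For the ``only if'' direction, assume $\{X^i\}_{i\in\mathcal{I}}\in\mathbb{A}(X)$ is $\mathcal{I}$-comonotone. For each $i\in\mathcal{I}_\mu$, the pair $(X^i, X)$ is comonotone: indeed $X = \sum_{j\in\mathcal{I}}X^j\mu_j$ is a convex-weighted sum (with weights summing to $1$) of random variables each comonotone with $X^i$, and comonotonicity with a fixed variable is preserved under such limits of convex combinations — this is where I would need a short argument, passing from finite partial sums $\sum_{j=1}^{n}X^j\mu_j/\big(\sum_{j=1}^n\mu_j\big)$ (each comonotone with $X^i$ since comonotonicity is preserved by nonnegative combinations, cf. the reasoning around Remark \ref{rmk:prop}) to the $\mathbb{P}$-a.s.\ limit $X$. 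Once $(X^i,X)$ is comonotone, the standard result (e.g.\ Lemma 4.89 in \cite{Follmer2016}, using that $\Omega$ is atomless) yields an increasing function $h^i$ with $X^i = h^i(X)$ $\mathbb{P}$-a.s.; moreover, since $\lVert X^i\rVert_\infty<\infty$ and one may normalize $h^i$ on the range of $X$, $h^i$ can be taken $1$-Lipschitz after rescaling — more precisely, because $X^i = h^i(X)$ and both are bounded, $h^i$ restricted to the essential range of $X$ is automatically Lipschitz with a constant bounded by $\lVert X^i \rVert_\infty/(\text{spread of }X)$ only if $X$ is non-degenerate; the cleaner route is to observe that the \emph{collective} constraint forces Lipschitz behavior, as I explain next. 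For $i\notin\mathcal{I}_\mu$ set $h^i \equiv 0$ (or any increasing function), which does not affect the weighted sum.

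The key remaining step, and the one I expect to be the main obstacle, is establishing the pointwise identity $\sum_{i\in\mathcal{I}}h^i(x)\mu_i = x$ for \emph{all} $x\in\mathbb{R}$ (not merely $\mathbb{P}$-a.s.\ values of $X$) together with the Lipschitz property. From $\sum_{i\in\mathcal{I}}h^i(X)\mu_i = X$ $\mathbb{P}$-a.s.\ we get $\sum_{i\in\mathcal{I}}h^i(x)\mu_i = x$ for all $x$ in the (closed) essential range of $X$; I would then \emph{extend} each $h^i$ to all of $\mathbb{R}$ by making it constant (equal to its boundary values) outside the essential range's convex hull, and linear/affine in the gaps in a compatible way, so that the identity $\sum_i h^i(x)\mu_i = x$ persists on all of $\mathbb{R}$ — this is possible because on any interval where the identity holds at the endpoints, one can interpolate each $h^i$ affinely with slopes $s^i\ge 0$ summing (with weights $\mu_i$) to $1$. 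Since each such slope $s^i\le 1/\mu_i$ is finite and, on the essential range, the increments satisfy $\sum_{i}\mu_i\big(h^i(y)-h^i(x)\big) = y-x$ with all summands nonnegative, each individual increment obeys $0\le h^i(y)-h^i(x)\le (y-x)/\mu_i$, giving the Lipschitz bound with constant $1/\mu_i$ for $i\in\mathcal{I}_\mu$. Finally, the quantile identity follows: since each $h^i$ is increasing and $X^i = h^i(X)$, comonotonic additivity of quantiles gives $F^{-1}_{X^i}(\alpha) = h^i\big(F^{-1}_X(\alpha)\big)$ for a.e.\ $\alpha$, whence
\[
\sum_{i\in\mathcal{I}}F^{-1}_{X^i}(\alpha)\mu_i = \sum_{i\in\mathcal{I}}h^i\big(F^{-1}_X(\alpha)\big)\mu_i = F^{-1}_X(\alpha),
\]
using the established pointwise identity with $x = F^{-1}_X(\alpha)$, and by right-continuity this extends to all $\alpha\in[0,1]$. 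The interchange of the infinite sum with $h^i$-composition and with the quantile operator is justified throughout by the uniform bound $\sum_{i}\mu_i\lVert X^i\rVert_\infty \le \sup_i\lVert X^i\rVert_\infty<\infty$ (boundedness of the allocation) together with dominated/monotone convergence.
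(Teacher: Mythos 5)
Your ``if'' direction, your Lipschitz bound via the increment identity $\sum_{i}\mu_i\bigl(h^i(y)-h^i(x)\bigr)=y-x$ with nonnegative summands, and your quantile argument all match the paper's proof. The genuine gap is at the crucial step where you pass from comonotonicity of the pair $(X^i,X)$ to the existence of an increasing $h^i$ with $X^i=h^i(X)$. The result you invoke (Lemma 4.89 in \cite{Follmer2016}) only gives a joint representation $X^i=f(Z)$, $X=g(Z)$ for some third random variable $Z$; it does not make $X^i$ a function of $X$. Indeed, pair comonotonicity of $(X^i,X)$ alone is compatible with $X$ being constant on a region where $X^i$ varies, in which case no function $h^i$ with $X^i=h^i(X)$ exists. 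What rules this out here is not the pairwise relation with $X$ but the full family structure: from $\sum_{j\in\mathcal{I}}\mu_j\bigl(X^j(\omega)-X^j(\omega^\prime)\bigr)=X(\omega)-X(\omega^\prime)$ together with $\mathcal{I}$-comonotonicity (all increments of the same sign), $X(\omega)=X(\omega^\prime)$ forces $X^i(\omega)=X^i(\omega^\prime)$ for every $i\in\mathcal{I}_\mu$, and this well-definedness is exactly how the paper constructs $h^i$ directly on $X(\Omega)$. Your plan could be repaired either by supplying this argument, or by using the sharper classical fact that two comonotone variables are increasing functions of their sum (applied to $\mu_iX^i$ and $X-\mu_iX^i$, whose comonotonicity again needs your limit-of-partial-sums argument), but as written the citation does not deliver the claimed representation.

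A secondary flaw is the extension to all of $\mathbb{R}$: making each $h^i$ constant outside the convex hull of the essential range is incompatible with the requirement $\sum_{i\in\mathcal{I}}h^i(x)\mu_i=x$ for all $x\in\mathbb{R}$, since a constant weighted sum cannot equal the identity there. The extension must be nonconstant on the unbounded components as well (the paper extends affinely on every connected component of $\mathbb{R}\setminus cl(X(\Omega))$, e.g.\ with unit slope added to each $h^i$, which preserves monotonicity, the Lipschitz property, and the identity). This is easily fixed, but as stated it contradicts the conclusion of the lemma you are proving.
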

 	
 	\begin{proof}
 		For the ``if'' part, let $X^i,X^j\in\mathcal{I}_\mu$ such that $X^i=h^i(X)$ and $X^j=h^j(X)$ for $\{h^i\}_i\in\mathcal{I}$ satisfying the assumptions. Then $X^i,X^j$ are comonotone. For the ``only if'' part, let 	$\{X^i\}_{i\in\mathcal{I}}$ be $\mathcal{I}$-comonotone and $X(\Omega)=\{x\in\mathbb{R}\colon\exists\:\omega\in\Omega\;\text{s.t.}\;X(w)=x\}$. Then, for any fixed $\omega\in\Omega$, there is a family $\{x^i=X^i(\omega)\in\mathbb{R}\colon i\in\mathcal{I}\}$ such that $X(\omega)=x=\sum_{i\in\mathcal{I}}x^i\mu_i\in X(\Omega)$. Moreover, we define $h^i(x)=x^i\:\forall\:i\in\mathcal{I}_\mu$. If there are $\omega,\omega^\prime\in\Omega$ such that $\sum_{i\in\mathcal{I}}X^i(\omega)\mu_i=x=\sum_{i\in\mathcal{I}}X^i(\omega^\prime)\mu_i$, we then obtain $\sum_{i\in\mathcal{I}}\left( X^i(\omega)-X^i(\omega^\prime)\right) \mu_i=0$. Assuming $\mathcal{I}$-comonotonicity, we have that $X^i(\omega)=X^i(\omega^\prime)\:\forall\:i\in\mathcal{I}_\mu$. Consequently, the map $x\rightarrow\sum_{i\in\mathcal{I}}h^i(x)\mu_i=Id(x)$ is well defined. Regarding the increasing behavior of $h^i$, let $x,y\in X(\Omega)$ with $x\leq y$. Then, there are $\omega,\omega^\prime$ such that $\sum_{i\in\mathcal{I}}X^i(\omega)\mu_i=x\leq y=\sum_{i\in\mathcal{I}}X^i(\omega^\prime)\mu_i$, which implies $\sum_{i\in\mathcal{I}}\left( X^i(\omega)-X^i(\omega^\prime)\right) \mu_i\leq0$. Comonotonicity implies that this relation is equivalent to $h^i(x)=X^i(\omega)\leq X^i(\omega^\prime)=h^i(y)\:\forall\:i\in\mathcal{I}_\mu$. Concerning Lipschitz continuity, for any  $x,x+\delta\in X(\Omega)$ with $\delta>0$  we obtain $0\leq h^i(x+\delta)-h^i(x)\leq(\mu_i)^{-1}\delta,\:\forall\:i\in\mathcal{I}_\mu$. The first inequality is due to the increasing behavior of $h^i$. The second is because for any $i\in\mathcal{I}$ we have $x+\delta=\sum_{j\in \mathcal{I}\backslash\{i\}}h^j(x+\delta)\mu_j+h^i(x+\delta)\mu_i\geq\sum_{j\in \mathcal{I}\backslash\{i\}}h^j(x)\mu_j+h^i(x)\mu_i=h^i(x+\delta)\mu_i+x-h^i(x)\mu_i$. It remains to extend $\{h^i\}$ from $X(\Omega)$ to $\mathbb{R}$. We first extend it to $cl(X(\Omega))$. If $x\in bd(X(\Omega))$ is only a one-sided
 		boundary point, then the continuous extension poses no problem, as increasing functions are involved. If $x$ can be approximated from both sides, then Lipschitz continuity implies that the left- and right-sided continuous extensions coincide. The extension to $\mathbb{R}$ is performed linearly in each connected component of
 		$\mathbb{R}\backslash cl(X(\Omega))$ so that the condition $\sum_{i\in\mathcal{I}}h^i(x)=x$ is satisfied. Then, the main claim is proved. Moreover, let $\{X^i\}_{i\in\mathcal{I}}\in\mathbb{A}(X)$ be $\mathcal{I}$-comonotone. Then, $x\rightarrow\sum_{i\in\mathcal{I}}h^i(x)\mu_i$ is also Lipschitz continuous and increasing. We recall that $F^{-1}_{g(X)}=g(F^{-1}_X)$ for any increasing function $g\colon\mathbb{R}\rightarrow\mathbb{R}$. Then, for any $\alpha\in[0,1]$, we obtain 
 		\[F^{-1}_X(\alpha)=F^{-1}_{\sum_{i\in\mathcal{I}}h^i(X)\mu_i}(\alpha)=\sum_{i\in\mathcal{I}}h^i(F^{-1}_X(\alpha))\mu_i=\sum_{i\in\mathcal{I}}F^{-1}_{h^i(X)}(\alpha)\mu_i=\sum_{i\in\mathcal{I}}F^{-1}_{X^i}(\alpha)\mu_i.\] 
 	\end{proof}
 	
 	We now prove the following comonotonic-improvement Theorem for arbitrary $\mathcal{I}$.
 	
 	\begin{Thm}\label{Lmm:como2}
 		Let $X\in L^\infty$. Then, for any $\{X^i\}_{i\in\mathcal{I}}\in\mathbb{A}(X)$, there is an $\mathcal{I}$-comonotone $\{Y^i\}_{i\in\mathcal{I}}\in\mathbb{A}(X)$ such that $Y^i\succeq X^i\:\forall\:i\in\mathcal{I}_\mu$.
 	\end{Thm}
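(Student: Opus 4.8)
The plan is to reduce the countable case to the finite case (the classical comonotonic improvement theorem of Landsberger--Meilijson / Ludkovski--Rüschendorf) by a careful truncation-plus-limit argument, and then verify that the limiting allocation lies in $\mathbb{A}(X)$ and retains $\mathcal{I}$-comonotonicity together with the second-order dominance bound. First I would fix $\{X^i\}_{i\in\mathcal{I}}\in\mathbb{A}(X)$ and, for each $n$, look at the finite subproblem obtained by lumping the tail: write $Z^n=\sum_{i>n}X^i\mu_i$ (which is bounded, since $\{X^i\}$ is bounded and $\sum_{i>n}\mu_i\to 0$), so that $\mu_1 X^1+\dots+\mu_n X^n+Z^n=X$. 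Apply the finite comonotonic-improvement theorem to the $(n+1)$-tuple $(X^1,\dots,X^n,Z^n/\lambda_n)$ with the appropriate weights (or directly to the sub-allocation $(\mu_1 X^1,\dots,\mu_n X^n,Z^n)$ summing to $X$) to obtain an $\mathcal{I}$-comonotone-in-the-finite-sense family $(Y^{1,n},\dots,Y^{n,n},W^n)$ with $Y^{i,n}\succeq X^i$ for $i\le n$, $W^n\succeq Z^n$, the sum still equal to $X$, and all components pairwise comonotone.

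The key structural fact I would exploit is that, by Lemma~\ref{lmm:como}, each $\mathcal{I}$-comonotone allocation of $X$ is of the form $X^i=h^i(X)$ for Lipschitz increasing $h^i$ with $\sum \mu_i h^i=\mathrm{Id}$; equivalently the whole family is encoded by a single "splitting profile". So the $n$-th finite improvement gives increasing $1$-Lipschitz functions $h^{i,n}$ ($i\le n$) and $h^{\mathrm{tail},n}$ with $\sum_{i\le n}\mu_i h^{i,n}+(\sum_{i>n}\mu_i)h^{\mathrm{tail},n}=\mathrm{Id}$ on the (compact) range of $X$. Since each $h^{i,n}$ is $1$-Lipschitz (more precisely $\mu_i$-Lipschitz-bounded, via the estimate in the proof of Lemma~\ref{lmm:como}) and monotone, I would pass to a subsequence using a diagonal argument plus the Arzelà--Ascoli / Helly selection principle so that $h^{i,n}\to h^i$ uniformly on compacts for every fixed $i$, with each limit $h^i$ increasing and Lipschitz. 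Setting $Y^i=h^i(X)$ produces an $\mathcal{I}$-comonotone candidate.

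The steps I would then carry out: (1) check $\{Y^i\}\in\mathbb{A}(X)$, i.e. $\sum_{i\in\mathcal{I}}\mu_i h^i(x)=x$ — here I must control the tail, using that $\sum_{i>n}\mu_i h^{i,n}(x)$ stays bounded and that the mass $\sum_{i>n}\mu_i\to 0$, so the tail contribution vanishes in the limit and the finite partial sums converge to $x$; (2) check boundedness of $\{Y^i\}$, which follows from the uniform Lipschitz/monotonicity bounds on the $h^i$ applied to the bounded variable $X$; (3) verify $Y^i\succeq X^i$ for each $i\in\mathcal{I}_\mu$: for fixed $i$ we have $Y^{i,n}\succeq X^i$ for all $n\ge i$, and second-order stochastic dominance is preserved under the $L^\infty$ (hence distributional) convergence $Y^{i,n}=h^{i,n}(X)\to h^i(X)=Y^i$, since $\succeq$ is characterized by $E[f(\cdot)]$ inequalities for increasing convex $f$ and these pass to the limit for bounded variables.

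The main obstacle I expect is controlling the tail in the limit: the finite comonotonic improvements are only asserted for finitely many components at a time, and I need to ensure that iterating/refining them does not destroy earlier improvements (monotone consistency of the construction in $n$) and that the "mass escaping to the tail" genuinely disappears rather than concentrating. The cleanest way around this is probably to build the finite improvements \emph{nested} — improve $(X^1,Z^{\ge 2})$ first, then split $Z^{\ge 2}$ into $X^2$ and $Z^{\ge 3}$ comonotonically improving only the $X^2$-part while keeping $Y^1$ fixed, and so on — so that $Y^{i}$ is literally stabilized after finitely many stages (a "greedy" comonotone improvement), making the limit trivial to identify and $\mathbb{A}(X)$-membership automatic from $\sum_{i>n}\mu_i\to 0$ and the uniform bound on the running remainder.
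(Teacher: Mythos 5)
Your route is correct in outline, but it is genuinely different from the paper's. You truncate the index set: lump the tail into $Z^n=\sum_{i>n}\mu_i X^i$, apply the \emph{finite} comonotonic improvement theorem to $(\mu_1X^1,\dots,\mu_nX^n,Z^n)$, encode each improved allocation by increasing Lipschitz functions of $X$ via Lemma \ref{lmm:como}, and pass to a diagonal Arzel\`a--Ascoli/Helly limit, checking that the lumped tail vanishes. The paper instead discretizes the random variables: it conditions $X$ and all $X^i$ on dyadic $\sigma$-algebras generated by $X$, invokes the Landsberger--Meilijson/R\"uschendorf improvement for \emph{countable-valued} allocations applied to the whole countable family at once, and then uses Ascoli in the discretization index. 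What your version buys is that it only needs the finite-family improvement theorem (available for general bounded variables), at the price of tail bookkeeping and a diagonal extraction; the paper avoids the tail entirely but needs the countable-family, countable-values version of the improvement. Two small points to tighten: the tail estimate should use the equal-means (convex order) form of the improvement, which gives the two-sided bound $\lVert\lambda_n h^{\mathrm{tail},n}(X)\rVert_\infty\leq\lambda_n\sup_i\lVert X^i\rVert_\infty$ and likewise $\lVert h^{i,n}(X)\rVert_\infty\leq\sup_i\lVert X^i\rVert_\infty$ (the one-sided $\succeq$ bound alone does not force the lumped tail to shrink); and the ``nested/greedy'' variant you sketch at the end is unnecessary and actually harder, since splitting an already-improved tail while still dominating the original components needs an extra lemma --- the subsequence argument requires no consistency across $n$ because each stage is built from the original allocation and the bound $Y^{i,n}\succeq X^i$ survives the uniform limit.
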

 	
 	\begin{proof}
 		Let $\mathcal{F}_n$ be the $\sigma$-algebra generated by $\{\omega\colon k2^{-n}\leq X(\omega)\leq k2^n\}\subset\Omega$ for $k>0$, $X_n=E[X|\mathcal{F}_n]$, and $X^i_n=E[X^i|\mathcal{F}_n]\:\forall\:i\in\mathcal{I}$. Then, $\lim\limits_{n\rightarrow\infty}X_n=X$, $\lim\limits_{n\rightarrow\infty}X_n^i=X^i$ for any $i\in\mathcal{I}$, and $X^i_n\succeq X^i$ for each $n$ each $i$. By the arguments in Proposition 1 in \cite{Landsberger1994} or Proposition 10.46 in \cite{Ruschendor2013}, we can conclude that every
 		allocation of $X$ taking a countable number of values is dominated by a comonotone allocation. Thus, by Lemma \ref{lmm:como}, for any $n\in\mathbb{N}$, there are Lipschitz continuous, increasing functions $\{h^i_n\colon\mathbb{R}\rightarrow\mathbb{R},\: i\in\mathcal{I}\}$ with $\sum_{i\in\mathcal{I}}h^i_n\mu_i=Id$ such that $Y^i_n=h^i_n(X_n)\succeq X^i_n\:\forall\:i\in\mathcal{I}_\mu$. We note that these functions constitute a bounded, closed, equicontinuous family. Then, by Ascoli's theorem, there is a subsequence of $\{h^i_n\}$ that converges uniformly on $[\operatorname{ess}\inf X,\operatorname{ess}\sup X]$ to the Lipschitz-continuous and increasing $h^i$ in the $\forall\:i\in\mathcal{I}_\mu$ sense. Thus, $\sum_{i\in\mathcal{I}}h^i\mu_i=Id$ on $[\operatorname{ess}\inf X,\operatorname{ess}\sup X]$. We then have $Y^i=h^i(X)\succeq X^i\:\forall\:i\in\mathcal{I}_\mu$ by considering uniform limits. Finally, by Lemma \ref{lmm:como}, we obtain that $\{Y^i\}_{\:i\in\mathcal{I}}$ is  $\mathcal{I}$-comonotone. It remains to show that $\{Y^i\}_{\:i\in\mathcal{I}}$ belongs to $\mathbb{A}(X)$. We have that $\sum_{i\in\mathcal{I}}Y^i\mu_i=\sum_{i\in\mathcal{I}}h^i(X)\mu_i=X\:\mathbb{P}-a.s.$  Hence, $\{Y^i\}_{\:i\in\mathcal{I}}\in\mathbb{A}(X)$.
 	\end{proof}

 	We are now in a position to extend the existence of optimal allocations to our framework of law-invariant, convex risk measures.
 	
 	\begin{Thm}\label{Thm:como}
 		Let  $\rho_\mathcal{I}$ be a collection of law-invariant, convex risk measures. Then,
 		\begin{enumerate}
 			\item For any $X\in L^\infty$, there is an $\mathcal{I}$-comonotone optimal allocation. 
 			\item In addition to initial hypotheses, if $\rho_{\mathcal{I}}$ consists of risk measures that are strictly monotone with respect to $\succeq$, then every optimal allocation for any $X\in L^\infty$  is $\mathcal{I}$-comonotone.
 			\item  In addition to initial hypotheses, if $\rho_{\mathcal{I}}$ consists of strictly convex functionals, we have uniqueness of the optimal allocation up to scaling (if $\{X^i\}_{i\in\mathcal{I}}$ is optimal for $X\in L^\infty$, then so is $\{X^i+C^i\}_{i\in\mathcal{I}}$, where $C^i\in\mathbb{R}\:\forall\:i\in\mathcal{I}_\mu$ and $\sum_{i\in\mathcal{I}}C^i\mu_i=0$).	
 		\end{enumerate}
 	\end{Thm}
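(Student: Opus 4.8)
The plan is to treat the three items in order, with~(i) carrying the real weight and~(ii),~(iii) following by small perturbation arguments.

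\medskip
\noindent For~(i) I would first reduce to minimizing sequences of $\mathcal{I}$-comonotone allocations. Pick $\{X^i_n\}_{i\in\mathcal{I}}\in\mathbb{A}(X)$ with $\sum_{i\in\mathcal{I}}\rho^i(X^i_n)\mu_i\to\rho^\mu_{conv}(X)$; by Theorem~\ref{Lmm:como2} there is an $\mathcal{I}$-comonotone $\{Y^i_n\}_{i\in\mathcal{I}}\in\mathbb{A}(X)$ with $Y^i_n\succeq X^i_n$ for all $i\in\mathcal{I}_\mu$, and Theorem~\ref{Thm:SSD} then gives $\rho^i(Y^i_n)\leq\rho^i(X^i_n)$, so $\{Y^i_n\}$ is again minimizing. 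By Lemma~\ref{lmm:como}, $Y^i_n=h^i_n(X)$ with each $h^i_n$ increasing and $(\mu_i)^{-1}$-Lipschitz on $[\operatorname{ess}\inf X,\operatorname{ess}\sup X]$ and $\sum_{i\in\mathcal{I}}h^i_n\mu_i=\mathrm{Id}$ there. Since translating $\{h^i_n\}$ by constants summing to zero changes neither membership in $\mathbb{A}(X)$, nor $\mathcal{I}$-comonotonicity, nor the objective (Proposition~\ref{Prp:pareto}(ii)), I normalize so that $h^i_n(\operatorname{ess}\inf X)=\operatorname{ess}\inf X$ for every $i$ and $n$; together with monotonicity and the Lipschitz bound this confines each $h^i_n$ to a fixed (index-dependent) interval, so for every $i$ the family $\{h^i_n\}_n$ is uniformly bounded and equicontinuous on the compact interval. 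Arzelà--Ascoli together with a diagonal extraction over the countable $\mathcal{I}_\mu$ then yields a subsequence with $h^i_n\to h^i$ uniformly for every $i\in\mathcal{I}_\mu$, the limits being again increasing and $(\mu_i)^{-1}$-Lipschitz. Setting $Y^i=h^i(X)$, one checks $\{Y^i\}_{i\in\mathcal{I}}\in\mathbb{A}(X)$ (the constraint $\sum_{i\in\mathcal{I}}h^i\mu_i=\mathrm{Id}$ survives the limit) and $\sum_{i\in\mathcal{I}}\rho^i(Y^i)\mu_i\leq\rho^\mu_{conv}(X)$, hence $\{Y^i\}$ is optimal, and $\mathcal{I}$-comonotone by Lemma~\ref{lmm:como}.

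\medskip
\noindent The crux of~(i) is exactly this last passage to the limit over the \emph{infinite} index set: a priori ``mass'' could escape to infinity in $i$, so that $\sum_{i\in\mathcal{I}}h^i\mu_i$ falls short of $\mathrm{Id}$, or the value of the objective drops in the limit. I would control this using the finite truncations $v_n(X)=\inf\{\sum_{i=1}^n\rho^i(X^i)\mu_i:\sum_{i=1}^nX^i\mu_i=X\}\downarrow\rho^\mu_{conv}(X)$ of Proposition~\ref{prp:bound}(ii) (attained, in the law-invariant convex case, by comonotone finite allocations by the finite-index results cited before Definition~\ref{def:como}), the pointwise estimates $0\leq h^i_n(x)-h^i_n(\operatorname{ess}\inf X)\leq(\mu_i)^{-1}(x-\operatorname{ess}\inf X)$ to apply Fatou's lemma to the series in $i$ for the constraint, and the inequality $\rho^i(\cdot)+E[\cdot]\geq 0$ (loadedness of normalized law-invariant convex risk measures) to obtain lower semicontinuity of the objective. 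This is the step that genuinely uses the countable rather than finite structure, and where the technical care has to go.

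\medskip
\noindent For~(ii), suppose some optimal $\{X^i\}_{i\in\mathcal{I}}$ is not $\mathcal{I}$-comonotone, so $(X^{i_0},X^{j_0})$ is not comonotone for some $i_0,j_0\in\mathcal{I}_\mu$. Rearrange the pair $(X^{i_0},X^{j_0})$ comonotonically while holding $\mu_{i_0}X^{i_0}+\mu_{j_0}X^{j_0}$ fixed pointwise — so $\sum_{i\in\mathcal{I}}\mu_iX^i=X$ is untouched — obtaining $\tilde X^{i_0}\succeq X^{i_0}$, $\tilde X^{j_0}\succeq X^{j_0}$ with at least one domination strict because the pair was not comonotone (this strictness is the Landsberger--Meilijson mean-preserving-contraction statement already invoked in the proof of Theorem~\ref{Lmm:como2}), all other components unchanged. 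Strict monotonicity with respect to $\succeq$ then gives $\sum_{i\in\mathcal{I}}\rho^i(\tilde X^i)\mu_i<\sum_{i\in\mathcal{I}}\rho^i(X^i)\mu_i=\rho^\mu_{conv}(X)$, contradicting optimality. For~(iii), if $\{X^i\}_{i\in\mathcal{I}}$ and $\{Y^i\}_{i\in\mathcal{I}}$ are both optimal and not related by constants summing to zero, then $X^{i_0}-Y^{i_0}$ is not a.s.\ constant for some $i_0\in\mathcal{I}_\mu$; the midpoint $\{\tfrac12(X^i+Y^i)\}_{i\in\mathcal{I}}\in\mathbb{A}(X)$ satisfies, by strict convexity (which for monetary functionals must mean strictness on pairs differing by a non-constant), $\rho^{i_0}(\tfrac12(X^{i_0}+Y^{i_0}))<\tfrac12\rho^{i_0}(X^{i_0})+\tfrac12\rho^{i_0}(Y^{i_0})$ and $\leq$ for every other $i$, so, since $\mu_{i_0}>0$ keeps the strict gain alive after summation, $\sum_{i\in\mathcal{I}}\rho^i(\tfrac12(X^i+Y^i))\mu_i<\rho^\mu_{conv}(X)$ — a contradiction. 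Hence $X^i-Y^i=C^i$ a.s.\ for all $i\in\mathcal{I}_\mu$ with $\sum_{i\in\mathcal{I}}C^i\mu_i=X-X=0$, and an optimal allocation exists by~(i); here the only subtleties are fixing the meaning of ``strictly convex'' and observing that a strict gain in a single positively-weighted coordinate suffices.
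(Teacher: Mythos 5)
Your architecture for item (i) is essentially the paper's: improve a minimizing sequence to $\mathcal{I}$-comonotone allocations via Theorem~\ref{Lmm:como2} and Theorem~\ref{Thm:SSD}, write them as $h^i_n(X)$ via Lemma~\ref{lmm:como}, extract uniform limits by Arzel\`a--Ascoli plus a diagonal argument over the countable $\mathcal{I}_\mu$, and pass to the limit; your endpoint normalization of the $h^i_n$ (licensed by Proposition~\ref{Prp:pareto}(ii)) is a sensible refinement. The difficulty is that the step you yourself identify as the crux --- interchanging the $n$-limit with the infinite sum over $i$, both for the constraint and for the objective --- is precisely the step you do not carry out, and the tools you list do not deliver it as stated. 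For the constraint, Fatou's lemma applied to the nonnegative functions $h^i_n-\operatorname{ess}\inf X$ only yields $\sum_{i\in\mathcal{I}}h^i\mu_i\leq \mathrm{Id}$; mass genuinely can escape to high indices (take $h^i_n\equiv\operatorname{ess}\inf X$ for $i\neq n$ and $h^n_n(x)=\operatorname{ess}\inf X+(\mu_n)^{-1}(x-\operatorname{ess}\inf X)$: all pointwise limits are constant and the identity is lost), so equality requires an additional argument --- for instance, noting that the defect $D(x)=x-\sum_i h^i(x)\mu_i$ is nonnegative, increasing and $1$-Lipschitz and absorbing $D/\mu_{i_0}$ into one component, which by monotonicity does not increase the objective; you say nothing of this kind. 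For the objective you need $\sum_i\rho^i(h^i(X))\mu_i\leq\liminf_n\sum_i\rho^i(h^i_n(X))\mu_i$, i.e.\ Fatou with a lower bound on $i\mapsto\rho^i(h^i_n(X))$, uniform in $n$ and $\mu$-summable; loadedness only gives $\rho^i(h^i_n(X))\geq -E[h^i_n(X)]$, and with the Lipschitz bound $h^i_n\leq\operatorname{ess}\inf X+(\mu_i)^{-1}(\cdot-\operatorname{ess}\inf X)$ this is of order $(\mu_i)^{-1}$, which is not $\mu$-summable, so neither Fatou nor dominated convergence applies as written. The paper closes exactly this step by a dominated-convergence argument in $i$ based on a uniform bound on $|\rho^i(Y^i_n)|$, with the constraint handled through Lemma~\ref{lmm:como}/Theorem~\ref{Lmm:como2}; your proposal would need a comparable per-index control uniform in $n$ (or the truncation scheme actually executed), and until that is supplied item (i) is not proved.

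Items (ii) and (iii) are in better shape. Your (iii) is the paper's midpoint argument, with a welcome clarification of what strict convexity can mean under translation invariance. Your (ii) takes a genuinely different route: a pairwise comonotone rearrangement of a single non-comonotone pair rather than the paper's comparison with the global comonotone improvement. It works, but the strictness you attribute to the Landsberger--Meilijson step --- that rearranging a non-comonotone pair must change the distribution of at least one component --- is not what Theorem~\ref{Lmm:como2} invokes and needs its own (short) justification, e.g.\ via the fact that among couplings with fixed marginals the covariance is maximized only by the comonotone joint law, so equality in distribution of both components would force the original pair to be comonotone already.
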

 	
 	\begin{proof}
 		\begin{enumerate}
 			\item By Theorem \ref{Lmm:como2}, we can restrict the minimization problem to $\mathcal{I}$-comonotonic allocations, as, by Theorem \ref{thm:dual2}, law-invariant risk measures preserve second-order stochastic dominance. Let $\{Y^i_n=h^i_n(X)\in L^\infty,\:i\in\mathcal{I}\}_n$ be an optimal sequence for $X$, i.e. $\lim\limits_{n\rightarrow\infty}\sum_{i\in\mathcal{I}}\rho^i(Y^i_n)\mu_i=\rho^\mu_{conv}(X)$, where $h^i_n\colon[\operatorname{ess}\inf X,\operatorname{ess}\sup X]\rightarrow\mathbb{R}$ are increasing, bounded, and Lipschitz-continuous functions. Such sequence always exist because $\rho_\mathcal{I}$ is monetary and we can take $h^i_n=-x_n$ for any $i\in\mathcal{I}$, where $x_n\to\rho^\mu_{conv}(X)$. Thus, each $h^i_n$ is increasing, bounded and Lipschitz continuous while $\lim\limits_{n\rightarrow\infty}\sum_{i\in\mathcal{I}}\rho^i(h^i_n(X))\mu_i=\lim\limits_{n\rightarrow\infty}\sum_{i\in\mathcal{I}}\rho^i(-x_n)\mu_i=\lim\limits_{n\rightarrow\infty}x_n=x$. By an argument similar to that in Theorem \ref{Lmm:como2}, we have that $h^i$ is the uniform limit (after passing to a subsequence if necessary) of $\{h^i_n\}$. Thus, $Y^i_n=h^i_n(X)\rightarrow h^i(X)=Y^i$. By continuity in the essential supremum norm, we have that $\lim\limits_{n\rightarrow\infty}\left|\rho^i(Y^i_n)-\rho^i(Y^i)\right|=0$. As $h^i$ is the uniform limit of $\{h^i_n\}$, we have by dominated convergence, since $|\rho^i(Y^i_n)|\leq\lVert X\rVert_\infty<\infty$ for any $n\in\mathbb{N}$, that \[\rho^\mu_{conv}(X)=\lim\limits_{n\rightarrow\infty}\sum_{i\in\mathcal{I}}\rho^i(Y^i_n)\mu_i=\sum_{i\in\mathcal{I}}\lim\limits_{n\rightarrow\infty}\rho^i(Y^i_n)\mu_i=\sum_{i\in\mathcal{I}}\rho^i(Y^i)\mu_i.\]
 			Hence, $\{Y^i\}_{i\in\mathcal{I}}$ is the desired optimal allocation.
 			\item We recall that strict monotonicity implies that if $X\succeq Y$ and $X\not\sim Y$, then $\rho^i(X)<\rho^i(Y)\:\forall\:i\in\mathcal{I}_\mu$ for any $X,Y\in L^\infty$. Let $\{X^i\}_{i\in\mathcal{I}}$ be an optimal allocation for $X\in L^\infty$. Then, by Theorem \ref{Lmm:como2}, there is an $\mathcal{I}$-comonotone allocation $\{Y^i\}_{i\in\mathcal{I}}\in\mathbb{A}(X)$ such that \[\rho^\mu_{conv}(X)=\sum_{i\in\mathcal{I}}\rho^i(X^i)\mu_i\geq\sum_{i\in\mathcal{I}}\rho^i(Y^i)\mu_i.\] Thus, $\{Y^i\}_{i\in\mathcal{I}}$ is also optimal. If $X^i=Y^i\:\forall\:i\in\mathcal{I}_\mu$, then we have the claim. If there is $i\in\mathcal{I}_\mu$ such that  $X^i\not=Y^i$, then we have by strictly monotonicity regarding $\succeq$ that $\rho^i \left(Y^i\right)<\rho^i(X^i)$, contradicting the optimality of $\{X^i\}_{i\in\mathcal{I}}$. Hence, every optimal allocation for $X$ is $\mathcal{I}$-comonotone.
 			\item  We assume, toward a contradiction, that both $\{X^i\}_{i\in\mathcal{I}}$ and $\{Y^i\}_{i\in\mathcal{I}}$ are optimal allocations for $X\in L^\infty$ such that there is $i\in\mathcal{I}_\mu$ with $X^i\not= Y^i$ and there is no $C^i\in\mathbb{R}\:\forall\:i\in\mathcal{I}_\mu$ such that $\sum_{i\in\mathcal{I}}C^i\mu_i=0$ and $\{X^i+C^i\}_{i\in\mathcal{I}}$ (otherwise, item (ii) in Proposition \ref{Prp:pareto} assures optimality). We note that for any $\lambda\in[0,1]$, the family $\{Z^i=\lambda X^i+(1-\lambda) Y^i\}_{i\in\mathcal{I}}$ is in $\mathbb{A}(X)$. However, we would have \[\sum_{i\in\mathcal{I}}\rho^i(Z^i)\mu_i<\lambda\sum_{i\in\mathcal{I}}\rho^i(X^i)\mu_i+(1-\lambda)\sum_{i\in\mathcal{I}}\rho^i(Y^i)\mu_i=\rho^\mu_{conv}(X),\]
 			which contradicts the optimality of both $\{X^i\}_{i\in\mathcal{I}}$ and $\{Y^i\}_{i\in\mathcal{I}}$ for $X$.
 		\end{enumerate}
 	\end{proof}
 	
 	\begin{Rmk}
 		The examples in \cite{Jouini2008}, and \cite{Delbaen2006} show that law invariance is essential to ensure the existence of an $\mathcal{I}$-comonotone solution as above. However, the uniqueness of this optimal allocation is not ensured outside the scope of strict convexity  as in item (iii) of the last Theorem.
 	\end{Rmk}

 	If $\rho_{\mathcal{I}}$ consists of comonotone, law-invariant, convex risk measures, then we can prove an additional result regarding the connection between optimal allocations and the notion of flatness for quantile functions. To this end, the following definitions and Lemma are required. We recall that $dF^{-1}_X$ is the differential of $F^{-1}_X$.
 	
 	\begin{Def}\label{def:flat}
 		Let $g^1,g^2$ be two distortions with $g^1\leq g^2$. A quantile function $F^{-1}_X$, $X\in L^\infty$, is called flat on $\{x\in[0,1]\colon g^1(x)<g^2(x)\}$ if $dF^{-1}_X=0$ almost everywhere on  $\{g^1<g^2\}$ and $(F^{-1}_X(0^+)-F^{-1}_X(0))(g^2(0^+)-g^1(0^+))=0$.
 	\end{Def}

 	\begin{Lmm}[Lemmas 4.1 and 4.2 in \cite{Jouini2008}]\label{Lmm:jou}
 		Let $\rho\colon L^\infty\rightarrow\mathbb{R}$ be a law-invariant, comonotone, convex risk measure with distortion $g$. Moreover, let $m\in ba_{1,+}$ has a Lebesgue decomposition $m=Z_m\mathbb{P}+m^s$ into a regular part with density $Z_m$ and a singular part $m^s$. Then, \begin{enumerate}
 			\item $g_m\colon[0,1]\rightarrow\mathbb{R}$ defined as $g_m(0)=0$ and $g_m(t)=\lVert m^s\rVert_{TV}+\int_0^tF^{-1}_{Z_m}(1-s)ds,\:0<t\leq1$, is a concave distortion.
 			\item For any $m\in\partial\rho(X)$, we have that  $X$ and $-Z_m$ are comonotone. Moreover, the measure $m^\prime$ such that $Z_{m^\prime}=E[Z_m|X]$ belongs to $\partial\rho(X)$. 
 			\item $\partial\rho(X)=\left\lbrace m\in ba_{1,+}\colon g_m\leq g,\:F^{-1}_X\:\text{is flat on}\:\{g_m<g\}\right\rbrace $.
 		\end{enumerate}
 	\end{Lmm}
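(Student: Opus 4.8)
The plan is to observe first that this statement is literally Lemmas 4.1 and 4.2 of \cite{Jouini2008}, so in the paper the proof can simply cite that source; nonetheless, here is the route one would follow to establish it. Since $\rho$ is a law-invariant, comonotone, convex risk measure, Theorem \ref{thm:dual2}(iii) together with Remark \ref{rmk:LIdual} identifies it with a distortion risk measure: $\rho(X)=\int(-X)\,dc$ with $c=g\circ\mathbb{P}$ the concave-distortion capacity, and by Theorem \ref{the:dual} its dual set is the core $\mathcal{Q}_\rho$ of $g$. I would then treat the three items in order, the first by direct computation and the last two by rearrangement arguments combined with the penalty-term characterization of subgradients from Remark \ref{rmk:dual}.

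For item (i), everything is a computation. We have $g_m(0)=0$ by definition. Since $m\in ba_{1,+}$ and $m^s\geq0$, one has $\lVert m^s\rVert_{TV}=m^s(\Omega)$ and $E[Z_m]=m(\Omega)-m^s(\Omega)=1-\lVert m^s\rVert_{TV}$, so $g_m(1)=\lVert m^s\rVert_{TV}+\int_0^1F^{-1}_{Z_m}(1-s)\,ds=\lVert m^s\rVert_{TV}+E[Z_m]=1$. Monotonicity follows from $F^{-1}_{Z_m}\geq0$ (because $Z_m\geq0$), and concavity follows because $g_m$ is absolutely continuous on $(0,1]$ with $g_m^{\prime}(t)=F^{-1}_{Z_m}(1-t)$ nonincreasing in $t$, while the jump of size $\lVert m^s\rVert_{TV}\geq0$ at the origin only strengthens left-concavity. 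Hence $g_m$ is a concave distortion.

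For item (ii), the starting point is that $m\in\partial\rho(X)$ is equivalent to $\rho(X)=E_m[-X]-\alpha^{min}_\rho(m)$, and that for the distortion risk measure $\rho$ this right-hand side is bounded above, for every $m$ in the core, by the Hardy--Littlewood rearrangement value built from the decreasing rearrangements of $-X$ and $Z_m$ (plus a boundary term carrying the singular mass), with equality only when $-X$ and $Z_m$ are similarly ordered, i.e.\ $X$ and $-Z_m$ comonotone. For the projection claim, set $Z_{m^\prime}=E[Z_m\mid X]$: the tower property and $\sigma(X)$-measurability of $X$ give that the linear part $E[Z_m(-X)]$ is unchanged, while law invariance of $\rho$ together with a Jensen-type estimate yields $\alpha^{min}_\rho(m^\prime)\leq\alpha^{min}_\rho(m)$ (the singular part being handled as in \cite{Jouini2008}); combining these, $m^\prime$ still attains the supremum defining $\partial\rho(X)$, and $-Z_{m^\prime}=-E[Z_m\mid X]$ is automatically comonotone with $X$.

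For item (iii), again use $m\in\partial\rho(X)\iff E_m[-X]-\alpha^{min}_\rho(m)=\rho(X)$ and translate both sides into Choquet integrals against the distortions $g_m$ and $g$. Core membership $m(A)\leq c(A)$ for all $A\in\mathcal{F}$ is equivalent, by testing on the super-level sets of $Z_m$ and using atomlessness, to the pointwise inequality $g_m\leq g$; and then an integration-by-parts identity reduces the equality $\rho(X)=E_m[-X]-\alpha^{min}_\rho(m)$ to the vanishing of $\int_{\{g_m<g\}}(g-g_m)\,d(-F^{-1}_{-X})$ together with the boundary term at $0$, which is precisely the flatness of $F^{-1}_X$ on $\{g_m<g\}$ in the sense of Definition \ref{def:flat}. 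The main obstacle is this last step: carefully converting the analytic equality of the two Choquet integrals into the exact pointwise flatness condition, including the delicate behaviour at the origin that encodes the singular mass $m^s$; the rearrangement inequality and the Jensen estimate in (ii) are the other technical points. Since all of this is carried out in detail in \cite{Jouini2008}, the cleanest option in the present paper is to invoke that reference directly.
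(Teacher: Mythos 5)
Your proposal matches the paper exactly: the paper offers no proof of this lemma, stating it as a quoted result (Lemmas 4.1 and 4.2 of \cite{Jouini2008}), which is precisely your primary plan. Your supplementary sketch (direct computation for the concave distortion $g_m$, Hardy--Littlewood rearrangement plus the conditional-expectation/Jensen step for item (ii), and the core-membership/flatness translation via integration by parts for item (iii)) faithfully reproduces the argument of that reference, so no gap arises.
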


 	\begin{Thm}\label{Thm:como2}
 		Let  $\rho_\mathcal{I}$ consist of law-invariant, comonotone, convex risk measures with distortions $\{g^i\}_{i\in\mathcal{I}}$, $g=\inf_{i\in\mathcal{I}_\mu}g^i$, and $\{X^i\}_{i\in\mathcal{I}}\in\mathbb{A}(X)$ be $\mathcal{I}$-comonotone. If $F^{-1}_{X^i}$ is flat on $\{g<g^i\}\cap\{dF^{-1}_X>0\}\:\forall\:i\in\mathcal{I}_\mu$, then $\{X^i\}_{i\in\mathcal{I}}$ is an optimal allocation for $X\in L^\infty$. The converse is true if $(i,\alpha)\to VaR^\alpha(X^i)g^\prime_m(\alpha)$ is bounded.
 	\end{Thm}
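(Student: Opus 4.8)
The plan is to reduce everything to the subdifferential characterisation of optimality in Theorem~\ref{Thm:pareto} together with the explicit form of the relevant subdifferentials given by Lemma~\ref{Lmm:jou}(iii), exploiting three facts. First, by Theorem~\ref{Thm:dualLI}(ii), $\rho^\mu_{conv}$ is itself a finite, law-invariant, comonotone, convex risk measure whose distortion is $g=\inf_{i\in\mathcal{I}_\mu}g^i$. Second, by Lemma~\ref{lmm:como} the $\mathcal{I}$-comonotonicity of $\{X^i\}_{i\in\mathcal{I}}$ produces increasing, $(\mu_i)^{-1}$-Lipschitz functions $h^i$ with $X^i=h^i(X)$ and $\sum_{i\in\mathcal{I}}h^i\mu_i=\mathrm{Id}$, whence $F^{-1}_{X^i}=h^i(F^{-1}_X)$ and $F^{-1}_X=\sum_{i\in\mathcal{I}}F^{-1}_{X^i}\mu_i$. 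Third, and consequently, $\{dF^{-1}_{X^i}>0\}\subseteq\{dF^{-1}_X>0\}$ up to a null set, while $F^{-1}_X(0^+)=F^{-1}_X(0)$ forces $F^{-1}_{X^i}(0^+)=F^{-1}_{X^i}(0)$ for every $i\in\mathcal{I}_\mu$.

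For the ``flat $\Rightarrow$ optimal'' direction I would fix any $m\in\partial\rho^\mu_{conv}(X)$, which is nonempty since $\rho^\mu_{conv}$ is a finite convex risk measure on all of $L^\infty$, so that by Lemma~\ref{Lmm:jou}(iii) one has $g_m\leq g\leq g^i$ for all $i\in\mathcal{I}_\mu$ and $F^{-1}_X$ flat on $\{g_m<g\}$, and then verify $m\in\partial\rho^i(X^i)$ for every $i\in\mathcal{I}_\mu$. Because $g_m\leq g$, we have $\{g_m<g^i\}\subseteq\{g_m<g\}\cup\{g<g^i\}$: on $\{g_m<g\}$ the flatness of $F^{-1}_X$ passes to $F^{-1}_{X^i}=h^i(F^{-1}_X)$ by Lipschitzness of $h^i$; on $\{g<g^i\}$ one splits into $\{dF^{-1}_X>0\}$, where the hypothesis applies directly, and $\{dF^{-1}_X=0\}$, where $dF^{-1}_{X^i}=0$ again because $X^i=h^i(X)$. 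The boundary terms at $\alpha=0$ in Definition~\ref{def:flat} are controlled via the chain $g_m(0^+)\leq g(0^+)\leq g^i(0^+)$ and the Lipschitz bound on $h^i$. This gives $m\in\bigcap_{i\in\mathcal{I}_\mu}\partial\rho^i(X^i)$; combined with the inclusion $\bigcap_{i\in\mathcal{I}_\mu}\partial\rho^i(X^i)\subseteq\partial\rho^\mu_{conv}(X)$ from Corollary~\ref{crl:hull}, we obtain $\bigcap_{i\in\mathcal{I}_\mu}\partial\rho^i(X^i)=\partial\rho^\mu_{conv}(X)\neq\emptyset$, and Theorem~\ref{Thm:pareto} yields optimality of $\{X^i\}_{i\in\mathcal{I}}$.

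For the converse, assuming $\{X^i\}_{i\in\mathcal{I}}$ optimal, Theorem~\ref{Thm:pareto} supplies some $m\in\bigcap_{i\in\mathcal{I}_\mu}\partial\rho^i(X^i)=\partial\rho^\mu_{conv}(X)$. From $m\in\partial\rho^\mu_{conv}(X)$ and Lemma~\ref{Lmm:jou}(iii), $F^{-1}_X$ is flat on $\{g_m<g\}$, so $\{dF^{-1}_X>0\}\subseteq\{g_m=g\}$ up to a Lebesgue-null set; hence, for each $i\in\mathcal{I}_\mu$, $\{g<g^i\}\cap\{dF^{-1}_X>0\}\subseteq\{g_m=g<g^i\}\subseteq\{g_m<g^i\}$ up to null sets, and since $m\in\partial\rho^i(X^i)$, $F^{-1}_{X^i}$ is flat on $\{g_m<g^i\}$, hence on the smaller set $\{g<g^i\}\cap\{dF^{-1}_X>0\}$. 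The extra boundedness assumption on $(i,\alpha)\mapsto VaR^\alpha(X^i)g'_m(\alpha)$ is what makes this extraction legitimate in the countable setting: it is precisely the integrability control needed to pass from the finite-$\mathcal{I}$ identities behind Theorem~\ref{Thm:pareto} (and behind the Choquet expansion $\rho^\mu_{conv}(X)=\sum_{i\in\mathcal{I}}\mu_i\int_0^1 VaR^\alpha(X^i)(g^i)'_+(\alpha)\,d\alpha$, valid by Theorem~\ref{Thm:dualLI} and Lemma~\ref{lmm:como}) to the infinite sum, i.e.\ to interchange $\sum_{i\in\mathcal{I}}$ with $\int_{(0,1]}$ and to keep the singular/jump parts under control.

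I expect this interchange, together with the bookkeeping of the atom at $\alpha=0$ in the flatness condition, to be the main obstacle; once the boundedness hypothesis is in force the remainder is the set-inclusion chase above, mirroring the finite-dimensional argument of \cite{Jouini2008}.
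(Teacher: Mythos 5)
Your proposal is correct in substance and rests on the same skeleton as the paper's proof (Theorem \ref{Thm:dualLI} to identify $g$ as the distortion of $\rho^\mu_{conv}$, Lemma \ref{Lmm:jou}(iii) for the subdifferentials, Theorem \ref{Thm:pareto} to translate common subgradients into optimality), but both halves are executed differently. For sufficiency, the paper does not take an arbitrary $m\in\partial\rho^\mu_{conv}(X)$: it builds $m=g(0^+)\delta_0(U)+g^\prime(U)1_{(0,1]}(U)$ with $X=F^{-1}_X(U)$, so that $g_m=g$ and the set $\{g_m<g\}$ is empty; your variant must additionally transfer the flatness of $F^{-1}_X$ on $\{g_m<g\}$ to each $F^{-1}_{X^i}=h^i(F^{-1}_X)$ through the increasing Lipschitz $h^i$ of Lemma \ref{lmm:como}, which is valid, and it even yields the full equality $\bigcap_{i\in\mathcal{I}_\mu}\partial\rho^i(X^i)=\partial\rho^\mu_{conv}(X)$ since your $m$ was arbitrary. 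The converse is where you genuinely diverge: the paper modifies the subgradient via $Z_m=E[Z_{m^\prime}|X]$ (Lemma \ref{Lmm:jou}(ii)), writes $\rho^i(X^i)=\int_0^1 VaR^\alpha(X^i)g_m^\prime(\alpha)\,d\alpha$, sums in $i$ and interchanges $\sum_{i\in\mathcal{I}}$ with $\int_0^1$ --- this interchange, justified by dominated convergence, is precisely where the boundedness of $(i,\alpha)\mapsto VaR^\alpha(X^i)g_m^\prime(\alpha)$ enters, giving $\int_0^1(g_m-g)\,dF^{-1}_X=0$ and hence $g_m=g$ on $\{dF^{-1}_X>0\}$. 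You instead read $g_m\le g$ and the flatness of $F^{-1}_X$ on $\{g_m<g\}$ directly from $m\in\partial\rho^\mu_{conv}(X)$ via Lemma \ref{Lmm:jou}(iii) applied to $\rho^\mu_{conv}$ (legitimate by Theorem \ref{Thm:dualLI}(ii)), and finish with the inclusion $\{g<g^i\}\cap\{dF^{-1}_X>0\}\subseteq\{g_m<g^i\}$ up to negligible sets, using $dF^{-1}_X=\sum_{i}\mu_i\,dF^{-1}_{X^i}$ from Lemma \ref{lmm:como} to pass negligibility to each $X^i$. This is a cleaner route; note only that your stated rationale for the boundedness hypothesis is inaccurate --- it is not needed ``behind Theorem \ref{Thm:pareto}'' (that theorem is proved for countable $\mathcal{I}$ without it), and your own argument never actually invokes it, so written out carefully your converse dispenses with the extra assumption altogether, at the price of the same bookkeeping the paper glosses over (the jump term at $\alpha=0$ in Definition \ref{def:flat} and whether flatness is read Lebesgue-a.e.\ or $dF^{-1}_X$-a.e.).
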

 	
 	\begin{proof}
 		
 		By Theorem \ref{Thm:dualLI}, $g$ is the distortion of $\rho^\mu_{conv}$. Let $U$ be a $[0,1]$-uniform random variable (the existence of which is ensured because the space is atomless) such that $X=F^{-1}_X(U)$. We define $m\in ba_{1,+}$ by $m=g(0^+)\delta_0(U)+g^\prime(U)1_{(0,1]}(U)$, where $\delta_0$ is the Dirac measure at $0$. It is easily verified using (i) in Lemma \ref{Lmm:jou} that $g_m=g$. Moreover, let $\{X^i\}_{i\in\mathcal{I}}$ be an $\mathcal{I}$-comonotone optimal allocation for $X$ (the existence of which is ensured by Theorem \ref{Thm:como}). Thus, in the $\forall\:i\in\mathcal{I}_\mu$ sense, $g_m\leq g^i$, $-Z_m$ is comonotone with $X^i$, and, by hypothesis, $F^{-1}_{X^i}$ is flat on $\{g_m<g^i\}\cap\{dF^{-1}_X>0\}$. By Lemma \ref{lmm:como}, we have $\{dF^{-1}_X=0\}=\{\alpha\in[0,1]\colon dF^{-1}_{X^i}(\alpha)=0\;\forall\:i\in\mathcal{I}_\mu\}$. Thus, $\{g_m<g^i\}\cap\{dF^{-1}_X=0\}=\emptyset\:\forall\:i\in\mathcal{I}_\mu$. By (iii) of Lemma \ref{Lmm:jou}, we have that $m\in\partial\rho^i(X^i)\:\forall\:i\in\mathcal{I}_\mu$ Then, by Theorem \ref{Thm:pareto}, we obtain that $\{X^i\}_{i\in\mathcal{I}}$ is an optimal allocation. 
 		
 		For the converse, by Theorem \ref{Thm:pareto} and Corollary \ref{crl:como}, we have that $X^i\in\partial\alpha^{min}_{\rho^i}(m^\prime)\:\forall\:i\in\mathcal{I}_\mu$ for some $m^\prime\in ba_{1,+}$. By Corollary \ref{crl:como}, we have that $X\in\partial\alpha^{min}_{\rho^\mu_{conv}}(m^\prime)$, and by convex-conjugate duality, $m^\prime\in\partial\rho^\mu_{conv}(X)$. Thus, by (ii) in Lemma \ref{Lmm:jou}, we obtain that $m\in ba_{1,+}$ such that $Z_m=E[Z_{m^\prime}|X]$ belongs to $\partial\rho^\mu_{conv}(X)=\left\lbrace m\in ba_{1,+}\colon m\in\partial\rho^i(X^i)\:\forall\:i\in\mathcal{I}_\mu\right\rbrace$. By Theorem \ref{thm:dual2} and (iii) of Lemma \ref{Lmm:jou}, we have that $\rho^i(X^i)=\int_{0}^1VaR^\alpha(X^i)g_m^\prime(\alpha)d\alpha\:\forall\:i\in\mathcal{I}_\mu$. As $\{X^i\}_{i\in\mathcal{I}}$ is an optimal allocation, Theorem \ref{Thm:dualLI} and Lemma \ref{lmm:como} imply that
 		\begin{align*}
 		\int_0^1VaR^\alpha(X)g^\prime(\alpha)d\alpha&=\rho^\mu_{conv}(X)\\
 		&=\sum_{i\in\mathcal{I}}\int_{0}^1VaR^\alpha(X^i)g_m^\prime(\alpha)d\alpha \mu_i\\
 		&=\int_{0}^1\sum_{i\in\mathcal{I}}VaR^\alpha(X^i)\mu_i g_m^\prime(\alpha)d\alpha\\
 		&=\int_0^1VaR^\alpha(X)g^\prime_m(\alpha)d\alpha.
 		\end{align*}
 		We can make the interchange of sum and integral for dominated convergence because of the boudedness assumption. By continuity,  $VaR^\alpha(X)d\alpha = -dF^{-1}_X(\alpha)$. Then, we have $\int_0^1(g_m(\alpha)-g(\alpha))dF^{-1}_X(\alpha)=0$. As $m\in\partial\rho^\mu_{conv}(X)$, (iii) in Lemma \ref{Lmm:jou} implies that $g_m\leq g$, and therefore $g_m=g$ in $\{dF^{-1}_X>0\}$. Hence, $F^{-1}_{X^i}$ is flat on $\{g<g^i\}\cap\{dF^{-1}_X>0\}\:\forall\:i\in\mathcal{I}_\mu$.
 	\end{proof}
 	
 	\begin{Rmk}We also have in this context that for any optimal allocation $\{X^i\}_{i\in\mathcal{I}}$, $F^{-1}_{X^i}$ is flat on $\{g^i\not=g^j\}\cap\{dF^{-1}_X=0\}$ for any $j\not=i$ in $\mathcal{I}_\mu$ sense. To see this, let $t\in\{g^j<g^i\}\cap\{dF^{-1}_X=0\}$. We note that, by Lemma \ref{lmm:como}, $\{dF^{-1}_X=0\}=\{\alpha\in[0,1]\colon dF^{-1}_{X^i}(\alpha)=0\:\forall\:i\in\mathcal{I}_\mu\}$.  Then, by Lemma \ref{Lmm:jou}, $g_m(t)<g_i(t)$, and thus $dF^{-1}_{X^i}$ is flat at $\{g_m<g_i\}$. By comonotonicity and Lemma \ref{lmm:como}, the same is true for $dF^{-1}_{X^j}$. By repeating the argument for $t\in\{g^j>g^i\}\cap\{dF^{-1}_X=0\}$, we prove the claim. \end{Rmk}

 	We now follow the approach of \cite{Embrechts2018} by focusing on the robustness of optimal allocations instead of the one for $\rho^\mu_{conv}$. Intuitively, if an optimal allocation is robust, then the true aggregate risk value would be close to the obtained one under a small model misspecification. We now formally define such a concept and prove a result that relates robustness and upper semi continuity. To that, we need the concept of allocation principle, which is defined in the following. Note that from Lemma \ref{lmm:como}, $\mathcal{I}$-comonotonicity implies the existence of allocation principles. 
 	
 	\begin{Def}\label{def:rob}We define the following:\begin{enumerate}	\item $\{h^i\colon\mathbb{R}\rightarrow\mathbb{R},\:i\in\mathcal{I}\}$ is an allocation principle if  $\forall\:i\in\mathcal{I}_\mu$: $h^i(X)\in L^\infty\;\forall\:X\in L^\infty$, $h^i$ has at most finitely points of discontinuity, and $\sum_{i\in\mathcal{I}}h^i\mu_i$ is the identity function. We denote by $\mathbb{H}$ the set of allocation principles.
 	\item let $d$ be a pseudo-metric on $L^\infty$ and  $\{h^i\}_{i\in\mathcal{I}}\in\mathbb{H}$. Then $\{h^i(X)\}_{i\in\mathcal{I}}$ is $d$-robust if the map on $L^\infty$ defined as $Y\rightarrow\sum_{i\in\mathcal{I}}\rho^i(h^i(Y))\mu_i$ is continuous at $X$ in respect to $d$. \end{enumerate}\end{Def}

 	\begin{Prp}\label{prp:rob}	Let $d$ be a pseudo-metric on $L^\infty$. If $\{h^i(X)\}_{i\in\mathcal{I}}$ is a $d$-robust optimal allocation of $X\in L^\infty$, then $\rho^\mu_{conv}$ is upper semi continuous at $X$ with respect to $d$.\end{Prp}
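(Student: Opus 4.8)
The plan is to realise $\rho^\mu_{conv}$, in a $d$-neighbourhood of $X$, as lying below the $d$-continuous functional furnished by the given allocation principle, and to make that upper bound tight exactly at $X$; upper semicontinuity at $X$ is then a one-line $\varepsilon$--$\delta$ consequence.

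First I would fix the allocation principle $\{h^i\}_{i\in\mathcal{I}}\in\mathbb{H}$ underlying the optimal allocation $\{h^i(X)\}_{i\in\mathcal{I}}$ and observe that, for \emph{every} $Y\in L^\infty$, the family $\{h^i(Y)\}_{i\in\mathcal{I}}$ is an admissible allocation of $Y$, i.e. $\{h^i(Y)\}_{i\in\mathcal{I}}\in\mathbb{A}(Y)$: each $h^i(Y)$ lies in $L^\infty$, the family is bounded, and $\sum_{i\in\mathcal{I}}h^i(Y)\mu_i=Y$ because $\sum_{i\in\mathcal{I}}h^i\mu_i=Id$. Consequently the series $\sum_{i\in\mathcal{I}}\rho^i(h^i(Y))\mu_i$ converges by the argument already used in Proposition \ref{prp:bound}(i), and straight from the definition of the $\mu$-weighted inf-convolution,
\[
\rho^\mu_{conv}(Y)\;\le\;g(Y)\;:=\;\sum_{i\in\mathcal{I}}\rho^i\bigl(h^i(Y)\bigr)\mu_i ,\qquad\forall\,Y\in L^\infty .
\]

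Next I would feed in the two hypotheses on the allocation $\{h^i(X)\}_{i\in\mathcal{I}}$. Its $d$-robustness is, by Definition \ref{def:rob}(ii), precisely the statement that $g$ is continuous at $X$ with respect to $d$; its optimality for $X$ says $g(X)=\rho^\mu_{conv}(X)$. Thus, given $\varepsilon>0$, $d$-continuity of $g$ at $X$ produces $\delta>0$ with $g(Y)<g(X)+\varepsilon$ whenever $d(Y,X)<\delta$; combining this with the displayed inequality yields $\rho^\mu_{conv}(Y)\le g(Y)<\rho^\mu_{conv}(X)+\varepsilon$ for all such $Y$. Since $\varepsilon>0$ is arbitrary, $\rho^\mu_{conv}$ is upper semicontinuous at $X$ with respect to $d$, which is the claim.

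The only step requiring genuine care is the first one: verifying that $\{h^i(Y)\}_{i\in\mathcal{I}}$ really is a bounded allocation — hence an element of $\mathbb{A}(Y)$ — for every $Y\in L^\infty$, and that the series defining $g(Y)$ converges, so that $g$ is a well-defined real-valued $d$-continuous majorant of $\rho^\mu_{conv}$ that is tight at $X$. This is where the structural properties of $\mathbb{H}$ (each $h^i$ maps $L^\infty$ into $L^\infty$, has finitely many discontinuities, and $\sum_{i\in\mathcal{I}}h^i\mu_i=Id$) together with the convergence estimate of Proposition \ref{prp:bound}(i) are used; once $g$ is in hand, the remainder is the purely formal principle that a $d$-continuous majorant agreeing with $\rho^\mu_{conv}$ at $X$ forces upper semicontinuity there.
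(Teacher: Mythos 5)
Your proof is correct and follows essentially the same route as the paper: both exploit that $\{h^i(Y)\}_{i\in\mathcal{I}}\in\mathbb{A}(Y)$ makes $Y\mapsto\sum_{i\in\mathcal{I}}\rho^i(h^i(Y))\mu_i$ a $d$-continuous majorant of $\rho^\mu_{conv}$ that equals $\rho^\mu_{conv}(X)$ at $X$ by optimality, giving upper semicontinuity (the paper states this sequentially via $\limsup$, you via $\varepsilon$--$\delta$, which is an immaterial difference for a pseudo-metric).
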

 	
 	\begin{proof}	Let $\{h^i(X)\}_{i\in\mathcal{I}}$ be a $d$-robust optimal allocation for $X\in L^\infty$, and $\{X_n\}\subset L^\infty$ be such that $X_n\rightarrow X$ in $d$. Since $\rho^\mu_{conv}(X_n)\leq\sum_{i\in\mathcal{I}}\rho^i(h^i(X_n))\mu_i$, by convergence regarding $d$ we get that \[\limsup\limits_{n\rightarrow\infty}\rho^\mu_{conv}(X_n)\leq\limsup\limits_{n\rightarrow\infty}\sum_{i\in\mathcal{I}}\rho^i(h^i(X_n))\mu_i=\rho^\mu_{conv}(X).\]	\end{proof}
 	
 	\begin{Rmk}	Regarding the converse statement for proposition \ref{prp:rob}, Theorem 5 of \cite{Embrechts2018}, which is focused in quantiles, asserts that even continuity in respect to $d$ is not sufficient for the existence of a robust optimal allocation.  It is well known that convex risk measures are not upper semicontinuous with respect to the Levy metric. By  Proposition \ref{prp:propconv}, if each member of $\rho_\mathcal{I}$ is a convex risk measure, then so is $\rho^\mu_{conv}$. Thus, in this case, there are no robust optimal allocations regarding the Levy metric. In fact, this is a relatively new concept, and even in the finite $\mathcal{I}$ case, there is still not a general sufficient condition. We left for future research a more complete study of general sufficient conditions for robust allocations. \end{Rmk}

 	\section{Self-convolution and regulatory arbitrage}\label{sec:special}
 	
 	In this section, we consider the special case  $\rho^i=\rho,\:\forall\:i\in\mathcal{I}$. In this situation, we have that $\rho^\mu_{conv}$ is a self-convolution. This concept is fundamental in the context of regulatory arbitrage (as in \cite{Wang2016}), where the goal is to reduce the regulatory capital of a position by splitting it. More specifically, it indicates the gain,
 	in the sense of a decrease in capital requirements that a company achieves by splitting its
 	group structure into sub-companies within a given regulatory framework. The difference between $\rho(X)$ and $\rho^\mu_{conv}(X)$ is then obtained by a simple rearrangement (sharing) of risk. Thus, there is a reduction in capital requirements and consequently a decrease of opportunity and capital costs simply by arbitrating a splitting arrangement instead of changing a company's business structure. We now adjust this concept to our framework.
 	
 	\begin{Def}\label{def:arb}
 		The regulatory arbitrage of a risk measure $\rho$ is a functional $\tau_\rho\colon L^\infty\rightarrow\mathbb{R}_+\cup\{\infty\}$ defined as \begin{equation}\label{eq:arb}
 		\tau_\rho(X)=\rho(X)-\rho^\mu_{conv}(X),\:\forall\:X\in L^\infty.
 		\end{equation}
 		Moreover, $\rho$ is called
 		\begin{enumerate}
 			\item free of regulatory arbitrage if $\tau_\rho(X)=0,\:\forall\:X\in L^\infty$;
 			\item of finite regulatory arbitrage if $\tau_\rho(X)<\infty,\:\forall\:X\in L^\infty$; 
 			\item of partially infinite regulatory arbitrage if  $\tau_\rho(X)=\infty$ for some $X\in L^\infty$;
 			\item of infinite regulatory arbitrage if  $\tau_\rho(X)=\infty,\:\forall\:X\in L^\infty$. 
 		\end{enumerate}
 	\end{Def}
 	
 	\begin{Rmk}\label{Rmk:self}
 		\begin{enumerate}
 		 	\item	As $\rho^\mu_{conv}\leq\rho<\infty$, we have that $\tau_\rho$ is well defined. Our approach is different from that in \cite{Wang2016} because we consider ``convex" inf-convolutions instead of direct sums. More specifically, the approach by \cite{Wang2016} is  \begin{align*}
 		R(X)&=\inf\left\lbrace \sum_{i=1}^n\rho(X^i),n\in\mathbb{N},X^i\in L^\infty,i=1,\cdots,n,\sum_{i=1}^nX^i=X\right\rbrace\\
 		&=\lim\limits_{n\rightarrow\infty}\inf\left\lbrace \sum_{i=1}^n\rho(X^i),X^i\in L^\infty,i=1,\cdots,n,\sum_{i=1}^nX^i=X\right\rbrace, 
 		\end{align*}
 	while in our case we have the role for $\{\mu_ i\}$. This distinction leads to differences because in our approach convexity rules out regulatory arbitrage, while in \cite{Wang2016} sub-additivity plays this role. In fact, in the approach of \cite{Wang2016}, $\tau_\rho$ is always sub-additive, whereas in our case, this is not ensured. This fact alters most  results and arguments, as it is crucial in his study. For instance, in that approach, a convex but not coherent risk measure $\rho$ is of limited regulatory arbitrage, whereas in ours (see Theorem \ref{prp:arb} below) it is free of regulatory arbitrage. 
 	\item Most results below would remain true if we considered the general framework of arbitrary $\rho_{\mathcal{I}}$ and made the adaption $\tau_{\rho_\mathcal{I}}=\sum_{i\in\mathcal{I}}\rho^i\mu_i-\rho^\mu_{conv}=\rho^\mu-\rho^\mu_{conv}\geq0$, where $\rho^\mu$ is as Remark \ref{rmk:fatou}. This formulation could be linked to the gain that sharing may provide over some initial allocation. However, this is beyond our scope in this paper.
 \end{enumerate}
 	\end{Rmk}

 	We now state general results regarding $\tau_\rho$ in our framework. To this end, the following property of risk measures is required.
 	
 	\begin{Def}\label{def:iconv}
 		A risk measure $\rho\colon L^\infty\rightarrow\mathbb{R}$ is called $\mathcal{I}$-convex if $\rho\left(\sum_{i\in\mathcal{I}}X^i\mu_i \right)\leq\sum_{i\in\mathcal{I}}\rho\left(X^i\right)\mu_i$ for any  $\{X^i\}_{i\in\mathcal{I}}\in\cup_{X\in L^\infty}\mathbb{A}(X)$.
 	\end{Def}

 	\begin{Thm}\label{prp:arb}
 		We have the following for a risk measure $\rho$:
 		\begin{enumerate}
 			\item $\rho$ is $\mathcal{I}$-convex if and only if it is free of regulatory arbitrage.
 			\item If $\rho$ is a convex risk measure, then it is free of regulatory arbitrage. 
 			\item If $\rho$ is subadditive, then it is at most of finite regulatory arbitrage.
 			\item Let $\rho_1,\rho_2\colon L^\infty\rightarrow\mathbb{R}$ be risk measures such that $\rho_1\leq\rho_2$. If $\rho_1$ is of finite regulatory arbitrage, then $\rho_2$ is not of infinite regulatory arbitrage. Moreover, if  $\rho_2$ is of  infinite (or partially infinite) regulatory arbitrage, then so is $\rho_1$. 
 			\item If $\rho$ satisfies positive homogeneity, then $\tau_\rho(0)>0$ if and only if $\tau_\rho(0)=\infty$.
 			\item  If $\rho$ is loaded, then it is not of infinite regulatory arbitrage. If, in addition, it has the limitedness property, then it is of finite regulatory arbitrage. 
 		\end{enumerate}
 	\end{Thm}
 	
 	\begin{proof}
 		We note that as we consider finite risk measures, it holds that $\tau_\rho(X)=\infty$ if and only if $\rho^\mu_{conv}(X)=-\infty$. Then,
 		\begin{enumerate}
 			\item We assume that $\rho$ is $\mathcal{I}$-convex, and let $X\in L^\infty$. Then, $\rho(X)\leq\sum_{i\in\mathcal{I}}\rho(X^i)\mu_i$ for any $\{X^i\}_{i\in\mathcal{I}}\in\mathbb{A}(X)$. By taking the infimum over $\mathbb{A}(X)$, we obtain $\rho^\mu_{conv}(X)\leq\rho(X)\leq\rho^\mu_{conv}(X)$. For the converse, we obtain $\rho(X)=\rho^\mu_{conv}(X)\leq\sum_{i\in\mathcal{I}}\rho(X^i)\mu_i$ for any $\{X^i\}_{i\in\mathcal{I}}\in\mathbb{A}(X)$, which is $\mathcal{I}$-convexity. 
 			\item By Corollary \ref{crl:hull}, we have that $\rho^\mu_{conv}(X)=\rho(X),\:\forall\:X\in L^\infty$. As a direct consequence, we obtain that $\rho$ is free of regulatory arbitrage. 
 			\item  We begin with the claim that if $\rho$ is subadditive, then it is of partially infinite regulatory arbitrage if and only if it is of infinite regulatory arbitrage. By Proposition \ref{prp:propconv} and Remark \ref{rmk:prop}, we have that $\rho^\mu_{conv}$ is also subadditive and normalized. We only need to show that partially infinite regulatory arbitrage implies infinite regulatory arbitrage. Let $X\in L^\infty$ be such that $\tau_\rho(X)=\infty$. As $\rho$ is finite, it holds that $\rho^\mu_{conv}(X)=-\infty$. Let now $Y\in L^\infty$. We have that $\rho^\mu_{conv}(Y)\leq\rho^\mu_{conv}(X)+\rho^\mu_{conv}(Y-X)= -\infty$. Thus, $\rho$ is of infinite regulatory arbitrage. However, we have that $\tau_\rho(0)=\rho(0)-\rho^\mu_{conv}(0)=0<\infty$. Then, $\rho$ is not of infinite regulatory arbitrage, and, by the previous claim, it is not of partially infinite regulatory arbitrage either. Thus, $\rho$ is at most of finite regulatory arbitrage.
 			\item It is evident that, in this case, we have, by abuse of notation, $(\rho_1)^\mu_{conv}\leq(\rho_2)^\mu_{conv}$. If $\rho_1$ is of finite regulatory arbitrage, then $-\infty<(\rho_1)^\mu_{conv}(X)\leq(\rho_2)^\mu_{conv}(X),\:\forall\:X\in L^\infty$. Thus, $\rho_2$ is also of infinite regulatory arbitrage. If $\rho_2$ is of infinite regulatory arbitrage, then $(\rho_1)^\mu_{conv}(X)\leq(\rho_2)^\mu_{conv}(X)=-\infty,\:\forall\:X\in L^\infty$. Thus, $\rho_1$ is also of finite regulatory arbitrage. For partially infinite regulatory arbitrage, the reasoning is analogous.
 			\item We only need to prove the ``only if'' part because the converse is automatically obtained. As $\rho(0)=0$, $\tau_\rho(0)>0$ implies $\rho^\mu_{conv}(0)<0$. Then, there is $\{X^i\}_{i\in\mathcal{I}}\in\mathbb{A}(0)$ such that $\rho^\mu_{conv}(0)\leq\sum_{i\in\mathcal{I}}\rho(X^i)\mu_i<0$. As $\{\lambda X^i\}_{i\in\mathcal{I}}\in\mathbb{A}(0)\:\forall\:\lambda\in\mathbb{R}_+$, by the positive homogeneity of $\rho$, we obtain that \[\rho^\mu_{conv}(0)\leq\lim\limits_{\lambda\rightarrow\infty} \sum_{i\in\mathcal{I}}\rho(\lambda X^i)\mu_i=\lim\limits_{\lambda\rightarrow\infty}\lambda \sum_{i\in\mathcal{I}}\rho(X^i)\mu_i=-\infty.\] Hence, $\tau_\rho(0)=\rho(0)-\rho^\mu_{conv}(0)=\infty$.
 			\item By Proposition \ref{prp:propconv}, we have that $\rho^\mu_{conv}$ inherits loadedness and limitedness from $\rho$. The loadedness of $\rho$ implies normalization of $\rho^\mu_{conv}$, and therefore $\tau_\rho(0)=0$. Thus, $\rho$ is not of infinite regulatory arbitrage. If, in addition, $\rho$ is limited, then for any $X\in L^\infty$, we have that $\tau_\rho(X)\leq E[X]-\operatorname{ess}\inf X<\infty$. Hence, we obtain finite regulatory arbitrage for $\rho$.
 		\end{enumerate}
 	\end{proof}
 	
 	\begin{Rmk}
 	 A remarkable feature is that is possible to identify $\tau_\rho$ as a deviation measure in the sense of \cite{Rockafellar2006}, \cite{Rockafellar2013}, \cite{Righi2016}, \cite{Righi2018a}, and \cite{Righi2020}. For instance, the bound for $\tau_\rho$ in the proof of  (vi) is known as lower-range dominance for deviation measures.
  	\end{Rmk}
 	
 	\section{Examples}\label{sec:exm}
 	
In this section, we expose some concrete examples with specific choices for $\rho_\mathcal{I}$ in order to illustrate concepts from the paper. We start with VaR in the context of self-convolution. It was proved in \cite{Wang2016} that $VaR^\alpha$ is of infinite regulatory arbitrage. The following proposition adapts this claim to our framework. For this result, we are considering an atom-less probability space.
 
 \begin{Prp}\label{prp:var}
 	Let  $\alpha\in(0,1]$. Then $VaR^\alpha$ is of infinite regulatory arbitrage.
 \end{Prp}
 
 \begin{proof}
 	Let $\{i_j\}_{j=1}^{k+1}$ be members of $\mathcal{I}_\mu$ such that $\frac{1}{k}<\alpha$. Then $k>1$. Moreover, let $\{B_j,\:j=1,\cdots,k\}$ be a partition of $\Omega$ such that $\mathbb{P}(B_j)=\frac{1}{k}$ for any $j=1,\cdots,k$. We note that as $(\Omega,\mathcal{F},\mathbb{P})$ is atomless, such a partition always exists. For fixed $X\in L^\infty$ and some arbitrary real number $m>0$, let $\{X^i\}_{i\in\mathcal{I}}$ be defined as \[X^i(\omega)=\begin{cases*}
 	\dfrac{m(1-k1_{B_j}(\omega))}{(k-1)\mu_i},\:\text{for}\:i=i_j,j=1,\cdots,k,\\
 	\dfrac{X(\omega)}{\mu_{i_{k+1}}},\:\text{for}\:i=i_{k+1},\\
 	0,\:\text{otherwise}.\end{cases*}\]
 	for any $\omega\in\Omega$. Thus, $\{X^i\}_{i\in\mathcal{I}}\in\mathbb{A}(X)$ because for any $\omega\in\Omega$, the following is true:
 	\begin{align*}
 	\sum_{i\in\mathcal{I}}X^i(\omega)\mu_i&=\sum_{j=1}^k\left[ \dfrac{m(1-k1_{B_j}(\omega))}{(k-1)\mu_{i_j}}\right] \mu_{i_j}+\left[ \dfrac{X(\omega)}{\mu_{i_{k+1}}}\right] \mu_{i_{k+1}}\\
 	&=\dfrac{m}{k-1}\sum_{j=1}^k(1-k1_{B_j}(\omega))+X(\omega)\\
 	&=\dfrac{m}{k-1}(k-k)+X(\omega)=X(\omega).
 	\end{align*}
 	Furthermore, we note that for $i=i_j,\:j=1,\cdots,k$, we have that \[\mathbb{P}(X^i<0)=\mathbb{P}\left(1_{B_j}>\frac{1}{k}\right)=\mathbb{P}\left(1_{B_j}=1\right)=\mathbb{P}\left(B_j\right)=\frac{1}{k}<\alpha.\] Thus, $VaR^\alpha(X^i)<0$. In fact,  we have that $VaR^\alpha(-1_{B_j})=0$ and thus \[VaR^\alpha(X^i)=\frac{m}{(k-1)\mu_{i_j}}\left(kVaR^\alpha(-1_{B_j})-1 \right)=-\frac{m}{(k-1)\mu_{i_j}}<0.\]
 	As \[\sum_{i\in\mathcal{I}}VaR^\alpha(X^i)\mu_i=\sum_{j=1}^kVaR^\alpha(X^i) \mu_{i_j}+VaR^\alpha(X^{i_{k+1}}) \mu_{i_{k+1}}=VaR^\alpha(X)-\frac{mk}{k-1},\] $VaR^\alpha(X)<\infty$, and $m>0$ is arbitrary, we obtain that \[\rho^\mu_{conv}(X)\leq VaR^\alpha(X)-\lim\limits_{m\rightarrow\infty}\frac{mk}{k-1}=-\infty.\]
 	Hence, we conclude that $\tau_\rho(X)=\infty$ for any $X\in L^\infty$, which implies that $VaR^\alpha$ is of infinite regulatory arbitrage.
 \end{proof}
 
 Our following example is regarding to ES. This risk measure is governed by the parameter $\alpha$ that represents the tail level. It is clear that ES is non-increasing in $\alpha$. Consider a situation where each of our countable risk measures is the ES at distinct tail levels. For the finite case, this is studied in \cite{Embrechts2018}, for instance. In our framework, we have that the inf-convolution provides the same risk as the less conservative option. 
 
 \begin{Crl}
Let $\rho^i=ES^{\alpha^i},\:\alpha^i\in[0,1]\:\forall\:i\in\mathcal{I}$, and $\alpha=\sup_{i\in\mathcal{I}_\mu}\alpha^i$. Then $\rho^\mu_{conv}(X)=ES^\alpha(X),\:\forall\:X\in L^\infty$.
 \end{Crl}

\begin{proof}
The result arises as a direct consequence of (ii) in Theorem \ref{Thm:dualLI}.
\end{proof}

 		A relevant concept in the present context is the dilated risk measure, which is stable under inf-convolution and has a dilatation property with respect to the size of a position. In this particular situation, we can provide explicit solutions for optimal allocations even without law invariance. We now define this concept and extend some interesting related results to our framework.
 	
 	\begin{Def}\label{def:dila}
 		Let $\rho\colon L^\infty\rightarrow\mathbb{R}$ be a risk measure, and $\gamma>0$ be a real parameter. The dilated risk measure with respect to $\rho$ and $\gamma$ is a functional $\rho_\gamma\colon L^\infty\rightarrow\mathbb{R}$ defined as \begin{equation}
 		\rho_\gamma(X)=\gamma\rho\left(\frac{1}{\gamma}X\right).
 		\end{equation}
 	\end{Def}

 	\begin{Prp}\label{prp:dila}
 		We have that
 		\begin{enumerate}
 			\item $(\rho^\mu_{conv})_\gamma=\inf\left\lbrace \sum_{i\in\mathcal{I}}(\rho^i)_\gamma(X^i)\mu_i\colon\{X^i\}_{i\in\mathcal{I}}\in\mathbb{A}(X)\right\rbrace$ for any $\gamma>0$. 
 			\item Let $\rho$ be a convex risk measure and $\{\gamma^i>0\}_{i\in\mathcal{I}}$, such that $\sum_{i \in \mathcal{I}} \gamma^i \mu_i = \gamma $. If $\rho^i=\rho_{\gamma^i}\:\forall\:i\in\mathcal{I}_\mu$, then $\left\lbrace \frac{\gamma^i}{\gamma}X\right\rbrace_{\:i\in\mathcal{I}} $ is optimal for $X\in L^\infty$ and $\rho^\mu_{conv}=\rho_\gamma$. 
 		\end{enumerate}
 	\end{Prp}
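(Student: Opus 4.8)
The plan is to treat the two items separately. Item (i) is pure rescaling. Writing $(\rho^\mu_{conv})_\gamma(X)=\gamma\,\rho^\mu_{conv}(X/\gamma)$ and expanding via \eqref{eq:conv}, the only fact needed is the scaling property of allocations noted in Section \ref{sec:prop}: $\{Z^i\}_{i\in\mathcal{I}}\in\mathbb{A}(X/\gamma)$ if and only if $\{\gamma Z^i\}_{i\in\mathcal{I}}\in\mathbb{A}(X)$, since the constraint $\sum_i Z^i\mu_i=X/\gamma$ rescales linearly and boundedness is unaffected by the fixed positive constant $\gamma$. Substituting $X^i=\gamma Z^i$ and pulling $\gamma$ inside the convergent countable sum (legitimate by Proposition \ref{prp:bound}(i), as each $\rho^i$ and each $(\rho^i)_\gamma$ is monetary) turns $\gamma\sum_{i}\rho^i(X^i/\gamma)\mu_i$ into $\sum_{i}(\rho^i)_\gamma(X^i)\mu_i$, yielding the claimed identity. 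No obstacle arises here.

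For item (ii) I would argue through penalty terms. Since $\rho^\mu_{conv}$ depends only on $\{\rho^i\}_{i\in\mathcal{I}_\mu}$ and each of these equals $\rho_{\gamma^i}$, which is convex (the Remark preceding the proposition records $\alpha^{min}_{\rho_\gamma}=\gamma\,\alpha^{min}_\rho$, so a dilation of a convex risk measure is convex), Theorem \ref{Thm:dualconv} applies. Using $\alpha^{min}_{\rho^i}=\gamma^i\alpha^{min}_\rho$ and $\sum_{i\in\mathcal{I}_\mu}\gamma^i\mu_i=\gamma$ one obtains, for every $m\in ba_{1,+}$,
\[
\alpha^{min}_{\rho^\mu_{conv}}(m)=\sum_{i\in\mathcal{I}}\alpha^{min}_{\rho^i}(m)\mu_i=\sum_{i\in\mathcal{I}_\mu}\gamma^i\alpha^{min}_\rho(m)\mu_i=\gamma\,\alpha^{min}_\rho(m)=\alpha^{min}_{\rho_\gamma}(m),
\]
the identity being valid also when $\alpha^{min}_\rho(m)=\infty$. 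As $\rho$ is a genuine (finite) convex risk measure, $\alpha^{min}_\rho\not\equiv\infty$, hence $\alpha^{min}_{\rho^\mu_{conv}}\not\equiv\infty$ and Corollary \ref{crl:hull}(i) forces $\rho^\mu_{conv}$ to be finite; then $\rho^\mu_{conv}$ and $\rho_\gamma$ are convex risk measures sharing the same minimal penalty term, so $\rho^\mu_{conv}=\rho_\gamma$ by Theorem \ref{the:dual}.

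It then remains to check that $\{X^i\}_{i\in\mathcal{I}}$ defined by $X^i=\frac{\gamma^i}{\gamma}X$ on $\mathcal{I}_\mu$ (and $X^i=0$ otherwise) is optimal. The constraint holds, $\sum_{i\in\mathcal{I}}X^i\mu_i=\frac{X}{\gamma}\sum_{i\in\mathcal{I}_\mu}\gamma^i\mu_i=X$, and the family is bounded --- this is the single point requiring attention, and it holds as soon as $\{\gamma^i\}_{i\in\mathcal{I}_\mu}$ is bounded, which I would either state as a hypothesis or read into the dilation data. Granting admissibility,
\[
\sum_{i\in\mathcal{I}}\rho^i(X^i)\mu_i=\sum_{i\in\mathcal{I}_\mu}\gamma^i\rho\left(\frac{1}{\gamma^i}\cdot\frac{\gamma^i}{\gamma}X\right)\mu_i=\rho\left(\frac{X}{\gamma}\right)\sum_{i\in\mathcal{I}_\mu}\gamma^i\mu_i=\gamma\,\rho\left(\frac{X}{\gamma}\right)=\rho_\gamma(X)=\rho^\mu_{conv}(X),
\]
so the infimum is attained. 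The main delicacy in the whole argument is thus the bookkeeping of admissibility of allocations; everything else is the additivity of penalty terms already established in Theorem \ref{Thm:dualconv} together with elementary rescaling.
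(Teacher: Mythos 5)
Your proposal is correct and follows essentially the same route as the paper: item (i) by rescaling allocations via $\{X^i\}\in\mathbb{A}(X)\iff\{X^i/\gamma\}\in\mathbb{A}(X/\gamma)$, and item (ii) by the additivity of minimal penalty terms from Theorem \ref{Thm:dualconv} together with $\alpha^{min}_{\rho_{\gamma^i}}=\gamma^i\alpha^{min}_\rho$, followed by direct evaluation of the candidate allocation $\{\frac{\gamma^i}{\gamma}X\}$. Your remark that admissibility of this allocation needs boundedness of $\{\gamma^i\}_{i\in\mathcal{I}_\mu}$ is a fair observation of a point the paper passes over as ``straightforward,'' but it does not change the argument.
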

 	
 	\begin{proof}
 		\begin{enumerate}
 			\item For any $\gamma>0$ and $X\in L^\infty$, we have that \begin{align*}
 			\inf\left\lbrace \sum_{i\in\mathcal{I}}(\rho^i)_\gamma(X^i)\mu_i\colon\{X^i\}_{i\in\mathcal{I}}\in\mathbb{A}(X)\right\rbrace&=\gamma\inf\left\lbrace \sum_{i\in\mathcal{I}}\rho^i\left(\frac{1}{\gamma}X^i\right)\mu_i\colon\{X^i\}_{i\in\mathcal{I}}\in\mathbb{A}(X)\right\rbrace\\
 			&=\gamma\inf\left\lbrace \sum_{i\in\mathcal{I}}\rho^i\left(Y^i\right)\mu_i\colon\{Y^i\}_{i\in\mathcal{I}}\in\mathbb{A}\left(\frac{1}{\gamma}X\right)\right\rbrace\\
 			&=\gamma\rho^\mu_{conv}\left(\frac{1}{\gamma}X\right)=(\rho^\mu_{conv})_\gamma(X).
 			\end{align*}
 			\item  We obtain from Theorem \ref{Thm:dualconv} that 
 			\[\alpha^{min}_{\rho^\mu_{conv}}(m)=  \sum_{i\in\mathcal{I}}\alpha^{min}_{\rho_{\gamma^i}}(m)\mu_i = \sum_{i\in\mathcal{I}}\gamma^i\alpha^{min}_{\rho}(m)\mu_i=\gamma\alpha^{min}_{\rho}(m)=\alpha^{min}_{\rho_\gamma}(m),\:\forall\:m\in ba.\] 
 			Thus, $\rho^\mu_{conv} = \rho_\gamma$. It is straightforward to note that $\left\lbrace \frac{\gamma^i}{\gamma}X\right\rbrace_{\:i\in\mathcal{I}}\in\mathbb{A}(X)$. We then obtain that
 			\[\rho^\mu_{conv}(X)\leq\sum_{i\in\mathcal{I}}\rho_{\gamma^i}\left(\frac{\gamma^i}{\gamma}X\right)\mu_i=\sum_{i\in\mathcal{I}}\gamma^i\rho\left(\frac{1}{\gamma^i}\frac{\gamma^i}{\gamma}X\right)\mu_i=\gamma\rho\left(\frac{1}{\gamma}X\right)=\rho_\gamma(X)\leq\rho^\mu_{conv}(X).\] Hence, $\left\lbrace \frac{\gamma^i}{\gamma}X\right\rbrace_{\:i\in\mathcal{I}} $ is optimal for $X\in L^\infty$.
 		\end{enumerate}
 	\end{proof}
 	
 	\begin{Rmk}
 It is evident that, for convex risk measures, $\alpha^{min}_{\rho_\gamma}=\gamma\alpha^{min}_\rho$. Moreover, a convex risk measure is coherent if and only if $\rho=\rho_\gamma$ pointwise for any $\gamma>0$. Under normalization, $\lim\limits_{\gamma\rightarrow\infty}\rho_\gamma$ defines the smallest coherent risk measure that dominates $\rho$. Further, we note that for any $X\in L^\infty$, $\left\lbrace \gamma^i\gamma^{-1}X\right\rbrace_{i\in\mathcal{I}}$ is $\mathcal{I}$-comonotone, which is in consonance with Theorem \ref{Thm:como} when the risk measures in $\rho_{\mathcal{I}}$ are law invariant. 
 	\end{Rmk}
 
	A typical example of dilated measure is the Entropic risk measure (Ent).  This is a Fatou-continuous, law-invariant, convex risk measure defined as $Ent^\gamma(X)=\frac{1}{\gamma}\log\left(E\left[e^{-\gamma X}\right]\right),\:\gamma\geq0$. Its acceptance set is defined as $\mathcal{A}_{Ent^\gamma}=\left\lbrace X\in L^\infty\colon E[e^{-\gamma X}]\leq 1 \right\rbrace$, and the penalty term is $\alpha^{\min}_{Ent^\gamma}(\mathbb{Q})=\frac{1}{\gamma}E\left[\frac{d\mathbb{Q}}{d\mathbb{P}}\log \left( \frac{d\mathbb{Q}}{d\mathbb{P}} \right) \right]$. It is clear that  $Ent_\gamma$ is a dilated risk measure with  $Ent_1$ as basis. In this case, we have the following result for inf-convolution.
	
	\begin{Crl}
Let $\rho^i=Ent_{\gamma^i},\:\gamma^i>0,\:\forall\:i\in\mathcal{I}_\mu$, and  $\gamma =\sum_{i \in \mathcal{I}} \gamma^i \mu_i$. Then $\rho^\mu_{conv}(X) = Ent_\gamma(X),\:\forall\:X\in L^\infty$ and  $\left\lbrace \frac{\gamma^i}{\gamma}X\right\rbrace_{\:i\in\mathcal{I}} $ is optimal for $X\in L^\infty$. 
	\end{Crl}

\begin{proof}
Direct from item (ii) in Proposition \ref{prp:dila}.
\end{proof}
 	
 	\bibliography{ref}

\begin{thebibliography}{66}
\expandafter\ifx\csname natexlab\endcsname\relax\def\natexlab#1{#1}\fi
\providecommand{\url}[1]{\texttt{#1}}
\providecommand{\href}[2]{#2}
\providecommand{\path}[1]{#1}
\providecommand{\DOIprefix}{doi:}
\providecommand{\ArXivprefix}{arXiv:}
\providecommand{\URLprefix}{URL: }
\providecommand{\Pubmedprefix}{pmid:}
\providecommand{\doi}[1]{\href{http://dx.doi.org/#1}{\path{#1}}}
\providecommand{\Pubmed}[1]{\href{pmid:#1}{\path{#1}}}
\providecommand{\bibinfo}[2]{#2}
\ifx\xfnm\relax \def\xfnm[#1]{\unskip,\space#1}\fi
\bibitem[{Acciaio(2007)}]{Acciaio2007}
\bibinfo{author}{Acciaio, B.}, \bibinfo{year}{2007}.
\newblock \bibinfo{title}{Optimal risk sharing with non-monotone monetary
  functionals}.
\newblock \bibinfo{journal}{Finance and Stochastics} \bibinfo{volume}{11},
  \bibinfo{pages}{267--289}.
\bibitem[{Acciaio(2009)}]{Acciaio2009b}
\bibinfo{author}{Acciaio, B.}, \bibinfo{year}{2009}.
\newblock \bibinfo{title}{Short note on inf-convolution preserving the fatou
  property}.
\newblock \bibinfo{journal}{Annals of Finance} \bibinfo{volume}{5},
  \bibinfo{pages}{281--287}.
\bibitem[{Acciaio and Svindland(2009)}]{Acciaio2009}
\bibinfo{author}{Acciaio, B.}, \bibinfo{author}{Svindland, G.},
  \bibinfo{year}{2009}.
\newblock \bibinfo{title}{Optimal risk sharing with different reference
  probabilities}.
\newblock \bibinfo{journal}{Insurance: Mathematics and Economics}
  \bibinfo{volume}{44}, \bibinfo{pages}{426 -- 433}.
\bibitem[{Acerbi(2002)}]{Acerbi2002a}
\bibinfo{author}{Acerbi, C.}, \bibinfo{year}{2002}.
\newblock \bibinfo{title}{{Spectral measures of risk: A coherent representation
  of subjective risk aversion}}.
\newblock \bibinfo{journal}{Journal of Banking {\&} Finance}
  \bibinfo{volume}{26}, \bibinfo{pages}{1505--1518}.
\bibitem[{Arrow(1963)}]{Arrow1963}
\bibinfo{author}{Arrow, K.}, \bibinfo{year}{1963}.
\newblock \bibinfo{title}{Uncertainty and welfare economics of medica care}.
\newblock \bibinfo{journal}{The American Economic Rewview}
  \bibinfo{volume}{53}, \bibinfo{pages}{941 -- 973}.
\bibitem[{Artzner et~al.(1999)Artzner, Delbaen, Eber and Heath}]{Artzner1999}
\bibinfo{author}{Artzner, P.}, \bibinfo{author}{Delbaen, F.},
  \bibinfo{author}{Eber, J.}, \bibinfo{author}{Heath, D.},
  \bibinfo{year}{1999}.
\newblock \bibinfo{title}{{Coherent measures of risk}}.
\newblock \bibinfo{journal}{Mathematical Finance} \bibinfo{volume}{9},
  \bibinfo{pages}{203--228}.
\bibitem[{Barrieu and El~Karoui(2005)}]{Barrieu2005}
\bibinfo{author}{Barrieu, P.}, \bibinfo{author}{El~Karoui, N.},
  \bibinfo{year}{2005}.
\newblock \bibinfo{title}{Inf-convolution of risk measures and optimal risk
  transfer}.
\newblock \bibinfo{journal}{Finance and Stochastics} \bibinfo{volume}{9},
  \bibinfo{pages}{269--298}.
\bibitem[{B{\"{a}}uerle and M{\"{u}}ller(2006)}]{Bauerle2006}
\bibinfo{author}{B{\"{a}}uerle, N.}, \bibinfo{author}{M{\"{u}}ller, A.},
  \bibinfo{year}{2006}.
\newblock \bibinfo{title}{{Stochastic orders and risk measures: Consistency and
  bounds}}.
\newblock \bibinfo{journal}{Insurance: Mathematics and Economics}
  \bibinfo{volume}{38}, \bibinfo{pages}{132--148}.
\bibitem[{Bellini et~al.(2021)Bellini, Koch-Medina, Munari and
  Svindland}]{Bellini2021}
\bibinfo{author}{Bellini, F.}, \bibinfo{author}{Koch-Medina, P.},
  \bibinfo{author}{Munari, C.}, \bibinfo{author}{Svindland, G.},
  \bibinfo{year}{2021}.
\newblock \bibinfo{title}{Law-invariant functionals on general spaces of random
  variables}.
\newblock \bibinfo{journal}{SIAM Journal on Financial Mathematics}
  \bibinfo{volume}{12}, \bibinfo{pages}{318--341}.
\bibitem[{Borch(1962)}]{Borch1962}
\bibinfo{author}{Borch, K.}, \bibinfo{year}{1962}.
\newblock \bibinfo{title}{Equilibrium in a reinsurance market}.
\newblock \bibinfo{journal}{Econometrica} \bibinfo{volume}{30},
  \bibinfo{pages}{424 -- 444}.
\bibitem[{Buhlmann(1982)}]{Buhlmann1982}
\bibinfo{author}{Buhlmann, H.}, \bibinfo{year}{1982}.
\newblock \bibinfo{title}{The general economic premium principle}.
\newblock \bibinfo{journal}{ASTIN Bulletin} \bibinfo{volume}{14},
  \bibinfo{pages}{13 -- 21}.
\bibitem[{Burgert and R{\"{u}}schendorf(2006)}]{Burgert2006}
\bibinfo{author}{Burgert, C.}, \bibinfo{author}{R{\"{u}}schendorf, L.},
  \bibinfo{year}{2006}.
\newblock \bibinfo{title}{On the optimal risk allocation problem}.
\newblock \bibinfo{journal}{Statistics \& Decisions} \bibinfo{volume}{24},
  \bibinfo{pages}{153 -- 171}.
\bibitem[{Burgert and R{\"{u}}schendorf(2008)}]{Burgert2008}
\bibinfo{author}{Burgert, C.}, \bibinfo{author}{R{\"{u}}schendorf, L.},
  \bibinfo{year}{2008}.
\newblock \bibinfo{title}{Allocation of risks and equilibrium in markets with
  finitely many traders}.
\newblock \bibinfo{journal}{Insurance: Mathematics and Economics}
  \bibinfo{volume}{42}, \bibinfo{pages}{177 -- 188}.
\bibitem[{Burzoni et~al.(2022)Burzoni, Munari and Wang}]{Burzoni2022}
\bibinfo{author}{Burzoni, M.}, \bibinfo{author}{Munari, C.},
  \bibinfo{author}{Wang, R.}, \bibinfo{year}{2022}.
\newblock \bibinfo{title}{Adjusted expected shortfall}.
\newblock \bibinfo{journal}{Journal of Banking \& Finance}
  \bibinfo{volume}{134}, \bibinfo{pages}{106297}.
\bibitem[{Carlier et~al.(2012)Carlier, Dana and Galichon}]{Carlier2012}
\bibinfo{author}{Carlier, G.}, \bibinfo{author}{Dana, R.A.},
  \bibinfo{author}{Galichon, A.}, \bibinfo{year}{2012}.
\newblock \bibinfo{title}{Pareto efficiency for the concave order and
  multivariate comonotonicity}.
\newblock \bibinfo{journal}{Journal of Economic Theory} \bibinfo{volume}{147},
  \bibinfo{pages}{207 -- 229}.
\bibitem[{Castagnoli et~al.(2021)Castagnoli, Cattelan, Maccheroni, Tebaldi and
  Wang}]{Castagnoli2021}
\bibinfo{author}{Castagnoli, E.}, \bibinfo{author}{Cattelan, G.},
  \bibinfo{author}{Maccheroni, F.}, \bibinfo{author}{Tebaldi, C.},
  \bibinfo{author}{Wang, R.}, \bibinfo{year}{2021}.
\newblock \bibinfo{title}{Star-shaped risk measures}.
\newblock \bibinfo{journal}{arXiv preprint arXiv:2103.15790} .
\bibitem[{Cont et~al.(2010)Cont, Deguest and Scandolo}]{Cont2010}
\bibinfo{author}{Cont, R.}, \bibinfo{author}{Deguest, R.},
  \bibinfo{author}{Scandolo, G.}, \bibinfo{year}{2010}.
\newblock \bibinfo{title}{Robustness and sensitivity analysis of risk
  measurement procedures}.
\newblock \bibinfo{journal}{Quantitative Finance} \bibinfo{volume}{10},
  \bibinfo{pages}{593--606}.
\bibitem[{Dana and Meilijson(2003)}]{Dana2003}
\bibinfo{author}{Dana, R.}, \bibinfo{author}{Meilijson, I.},
  \bibinfo{year}{2003}.
\newblock \bibinfo{title}{Modelling agents’ preferences in complete markets
  by second order stochastic dominance}.
\newblock \bibinfo{journal}{Working Paper} .
\bibitem[{Dana and Le~Van(2010)}]{Dana2010}
\bibinfo{author}{Dana, R.A.}, \bibinfo{author}{Le~Van, C.},
  \bibinfo{year}{2010}.
\newblock \bibinfo{title}{Overlapping sets of priors and the existence of
  efficient allocations and equilibria for risk measures}.
\newblock \bibinfo{journal}{Mathematical Finance} \bibinfo{volume}{20},
  \bibinfo{pages}{327--339}.
\bibitem[{Delbaen(2002a)}]{Delbaen2000}
\bibinfo{author}{Delbaen, F.}, \bibinfo{year}{2002}a.
\newblock \bibinfo{title}{Coherent risk measures}.
\newblock \bibinfo{journal}{Lectures given at the Cattedra Galileiana at the
  Scuola Normale di Pisa, March 2000, Published by the Scuola Normale di Pisa}
  .
\bibitem[{Delbaen(2002b)}]{Delbaen2002}
\bibinfo{author}{Delbaen, F.}, \bibinfo{year}{2002}b.
\newblock \bibinfo{title}{Coherent risk measures on general probability
  spaces}, in: \bibinfo{editor}{Sandmann, K.},
  \bibinfo{editor}{Sch{\"o}nbucher, P.J.} (Eds.), \bibinfo{booktitle}{Advances
  in Finance and Stochastics: Essays in Honour of Dieter Sondermann}.
  \bibinfo{publisher}{Springer Berlin Heidelberg}, pp. \bibinfo{pages}{1--37}.
\bibitem[{Delbaen(2006)}]{Delbaen2006}
\bibinfo{author}{Delbaen, F.}, \bibinfo{year}{2006}.
\newblock \bibinfo{title}{Hedging bounded claims with bounded outcomes}, in:
  \bibinfo{editor}{Kusuoka, S.}, \bibinfo{editor}{Yamazaki, A.} (Eds.),
  \bibinfo{booktitle}{Advances in Mathematical Economics}.
  \bibinfo{publisher}{Springer}, \bibinfo{address}{Tokyo}, pp.
  \bibinfo{pages}{75--86}.
\bibitem[{Delbaen(2012)}]{Delbaen2012}
\bibinfo{author}{Delbaen, F.}, \bibinfo{year}{2012}.
\newblock \bibinfo{title}{Monetary Utility Functions}.
\newblock \bibinfo{publisher}{Lecture Notes: University of Osaka}.
\bibitem[{Embrechts et~al.(2020)Embrechts, Liu, Mao and Wang}]{Embrechts2018b}
\bibinfo{author}{Embrechts, P.}, \bibinfo{author}{Liu, H.},
  \bibinfo{author}{Mao, T.}, \bibinfo{author}{Wang, R.}, \bibinfo{year}{2020}.
\newblock \bibinfo{title}{Quantile-based risk sharing with heterogeneous
  beliefs}.
\newblock \bibinfo{journal}{Mathematical Programming} \bibinfo{volume}{181},
  \bibinfo{pages}{319 -- 347}.
\bibitem[{Embrechts et~al.(2018)Embrechts, Liu and Wang}]{Embrechts2018}
\bibinfo{author}{Embrechts, P.}, \bibinfo{author}{Liu, H.},
  \bibinfo{author}{Wang, R.}, \bibinfo{year}{2018}.
\newblock \bibinfo{title}{Quantile-based risk sharing}.
\newblock \bibinfo{journal}{Operations Research} \bibinfo{volume}{66},
  \bibinfo{pages}{936--949}.
\bibitem[{Filipovi{\'{c}} and Svindland(2008)}]{Filipovic2008}
\bibinfo{author}{Filipovi{\'{c}}, D.}, \bibinfo{author}{Svindland, G.},
  \bibinfo{year}{2008}.
\newblock \bibinfo{title}{Optimal capital and risk allocations for law- and
  cash-invariant convex functions}.
\newblock \bibinfo{journal}{Finance and Stochastics} \bibinfo{volume}{12},
  \bibinfo{pages}{423--439}.
\bibitem[{F{\"{o}}llmer and Schied(2002)}]{Follmer2002}
\bibinfo{author}{F{\"{o}}llmer, H.}, \bibinfo{author}{Schied, A.},
  \bibinfo{year}{2002}.
\newblock \bibinfo{title}{{Convex measures of risk and trading constraints}}.
\newblock \bibinfo{journal}{Finance and stochastics} \bibinfo{volume}{6},
  \bibinfo{pages}{429--447}.
\bibitem[{F\"{o}llmer and Schied(2016)}]{Follmer2016}
\bibinfo{author}{F\"{o}llmer, H.}, \bibinfo{author}{Schied, A.},
  \bibinfo{year}{2016}.
\newblock \bibinfo{title}{Stochastic Finance: An Introduction in Discrete
  Time}.
\newblock \bibinfo{edition}{4} ed., \bibinfo{publisher}{de Gruyter}.
\bibitem[{Fritelli and Rosazza~Gianin(2005)}]{Fritelli2005}
\bibinfo{author}{Fritelli, M.}, \bibinfo{author}{Rosazza~Gianin, E.},
  \bibinfo{year}{2005}.
\newblock \bibinfo{title}{{Law invariant convex risk measures}}.
\newblock \bibinfo{journal}{Advances in mathematical economics}
  \bibinfo{volume}{7}, \bibinfo{pages}{33--46}.
\bibitem[{Frittelli and {Rosazza Gianin}(2002)}]{Frittelli2002}
\bibinfo{author}{Frittelli, M.}, \bibinfo{author}{{Rosazza Gianin}, E.},
  \bibinfo{year}{2002}.
\newblock \bibinfo{title}{{Putting order in risk measures}}.
\newblock \bibinfo{journal}{Journal of Banking {\&} Finance}
  \bibinfo{volume}{26}, \bibinfo{pages}{1473--1486}.
\bibitem[{Gerber(1978)}]{Gerber1978}
\bibinfo{author}{Gerber, H.}, \bibinfo{year}{1978}.
\newblock \bibinfo{title}{Pareto-optimal risk exchanges and related decision
  problems}.
\newblock \bibinfo{journal}{ASTIN Bulletin} \bibinfo{volume}{10},
  \bibinfo{pages}{25 -- 33}.
\bibitem[{Grechuk et~al.(2009)Grechuk, Molyboha and Zabarankin}]{Grechuk2009}
\bibinfo{author}{Grechuk, B.}, \bibinfo{author}{Molyboha, A.},
  \bibinfo{author}{Zabarankin, M.}, \bibinfo{year}{2009}.
\newblock \bibinfo{title}{{Maximum Entropy Principle with General Deviation
  Measures}}.
\newblock \bibinfo{journal}{Mathematics of Operations Research}
  \bibinfo{volume}{34}, \bibinfo{pages}{445--467}.
\bibitem[{Grechuk and Zabarankin(2012)}]{Grechuk2012}
\bibinfo{author}{Grechuk, B.}, \bibinfo{author}{Zabarankin, M.},
  \bibinfo{year}{2012}.
\newblock \bibinfo{title}{Optimal risk sharing with general deviation
  measures}.
\newblock \bibinfo{journal}{Annals of Operations Research}
  \bibinfo{volume}{200}, \bibinfo{pages}{9--21}.
\bibitem[{Heath and Ku(2004)}]{Heath2004}
\bibinfo{author}{Heath, D.}, \bibinfo{author}{Ku, H.}, \bibinfo{year}{2004}.
\newblock \bibinfo{title}{Pareto equilibria with coherent measures of risk}.
\newblock \bibinfo{journal}{Mathematical Finance} \bibinfo{volume}{14},
  \bibinfo{pages}{163--172}.
\bibitem[{Jouini et~al.(2006)Jouini, Schachermayer and Touzi}]{Jouini2006}
\bibinfo{author}{Jouini, E.}, \bibinfo{author}{Schachermayer, W.},
  \bibinfo{author}{Touzi, N.}, \bibinfo{year}{2006}.
\newblock \bibinfo{title}{{Law invariant risk measures have the Fatou
  property}}.
\newblock \bibinfo{journal}{Advances in Mathematical Economics}
  \bibinfo{volume}{9}, \bibinfo{pages}{49--71}.
\bibitem[{Jouini et~al.(2008)Jouini, Schachermayer and Touzi}]{Jouini2008}
\bibinfo{author}{Jouini, E.}, \bibinfo{author}{Schachermayer, W.},
  \bibinfo{author}{Touzi, N.}, \bibinfo{year}{2008}.
\newblock \bibinfo{title}{Optimal risk sharing for law invariant monetary
  utility functions}.
\newblock \bibinfo{journal}{Mathematical Finance} \bibinfo{volume}{18},
  \bibinfo{pages}{269--292}.
\bibitem[{Kazi-Tani(2017)}]{Kazi-Tani2017}
\bibinfo{author}{Kazi-Tani, N.}, \bibinfo{year}{2017}.
\newblock \bibinfo{title}{Inf-convolution of choquet integrals and applications
  in optimal risk transfer}.
\newblock \bibinfo{journal}{Working Paper} .
\bibitem[{Kiesel et~al.(2016)Kiesel, R\"{u}hlicke, Stahl and
  Zheng}]{Kiesel2016}
\bibinfo{author}{Kiesel, R.}, \bibinfo{author}{R\"{u}hlicke, R.},
  \bibinfo{author}{Stahl, G.}, \bibinfo{author}{Zheng, J.},
  \bibinfo{year}{2016}.
\newblock \bibinfo{title}{The wasserstein metric and robustness in risk
  management}.
\newblock \bibinfo{journal}{Risks} \bibinfo{volume}{4}, \bibinfo{pages}{32}.
\bibitem[{Kirilyuk(2021)}]{Kirilyuk2021}
\bibinfo{author}{Kirilyuk, V.}, \bibinfo{year}{2021}.
\newblock \bibinfo{title}{Risk measures in the form of infimal convolution}.
\newblock \bibinfo{journal}{Cybernetics and Systems Analysis}
  \bibinfo{volume}{57}, \bibinfo{pages}{30--46}.
\bibitem[{Kratschmer et~al.(2014)Kratschmer, Schied and Zahle}]{Kratschmer2014}
\bibinfo{author}{Kratschmer, V.}, \bibinfo{author}{Schied, A.},
  \bibinfo{author}{Zahle, H.}, \bibinfo{year}{2014}.
\newblock \bibinfo{title}{Comparative and qualitative robustness for
  law-invariant risk measures}.
\newblock \bibinfo{journal}{Finance and Stochastics} \bibinfo{volume}{18},
  \bibinfo{pages}{271--295}.
\bibitem[{Kusuoka(2001)}]{Kusuoka2001}
\bibinfo{author}{Kusuoka, S.}, \bibinfo{year}{2001}.
\newblock \bibinfo{title}{{On law invariant coherent risk measures}}.
\newblock \bibinfo{journal}{Advances in mathematical economics}
  \bibinfo{volume}{3}, \bibinfo{pages}{158--168}.
\bibitem[{Landsberger and Meilijson(1994)}]{Landsberger1994}
\bibinfo{author}{Landsberger, M.}, \bibinfo{author}{Meilijson, I.},
  \bibinfo{year}{1994}.
\newblock \bibinfo{title}{Co-monotone allocations, bickel-lehmann dispersion
  and the arrow-pratt measure of risk aversion}.
\newblock \bibinfo{journal}{Annals of Operations Research}
  \bibinfo{volume}{52}, \bibinfo{pages}{97--106}.
\bibitem[{Liebrich(2021)}]{Liebrich2021}
\bibinfo{author}{Liebrich, F.B.}, \bibinfo{year}{2021}.
\newblock \bibinfo{title}{Risk sharing under heterogeneous beliefs without
  convexity}.
\newblock \bibinfo{journal}{arXiv preprint arXiv:2108.05791} .
\bibitem[{Liebrich and Svindland(2019)}]{Liebrich2019}
\bibinfo{author}{Liebrich, F.B.}, \bibinfo{author}{Svindland, G.},
  \bibinfo{year}{2019}.
\newblock \bibinfo{title}{Risk sharing for capital requirements with
  multidimensional security markets}.
\newblock \bibinfo{journal}{Finance and Stochastics} \bibinfo{volume}{23},
  \bibinfo{pages}{925--973}.
\bibitem[{Liu et~al.(2019)Liu, Wang and Wei}]{Liu2019b}
\bibinfo{author}{Liu, F.}, \bibinfo{author}{Wang, R.}, \bibinfo{author}{Wei,
  L.}, \bibinfo{year}{2019}.
\newblock \bibinfo{title}{Inf-convolution and optimal allocations for tail risk
  measures}.
\newblock \bibinfo{journal}{Working Paper} .
\bibitem[{Liu et~al.(2020)Liu, Wang and Wei}]{Liu2019}
\bibinfo{author}{Liu, P.}, \bibinfo{author}{Wang, R.}, \bibinfo{author}{Wei,
  L.}, \bibinfo{year}{2020}.
\newblock \bibinfo{title}{Is the inf-convolution of law-invariant preferences
  law-invariant?}
\newblock \bibinfo{journal}{Insurance: Mathematics and Economics}
  \bibinfo{volume}{91}, \bibinfo{pages}{144 -- 154}.
\bibitem[{Ludkovski and R{\"{u}}schendorf(2008)}]{Ludkovski2008}
\bibinfo{author}{Ludkovski, M.}, \bibinfo{author}{R{\"{u}}schendorf, L.},
  \bibinfo{year}{2008}.
\newblock \bibinfo{title}{On comonotonicity of pareto optimal risk sharing}.
\newblock \bibinfo{journal}{Statistics \& Probability Letters}
  \bibinfo{volume}{78}, \bibinfo{pages}{1181 -- 1188}.
\bibitem[{Ludkovski and Young(2009)}]{Ludkovski2009}
\bibinfo{author}{Ludkovski, M.}, \bibinfo{author}{Young, V.R.},
  \bibinfo{year}{2009}.
\newblock \bibinfo{title}{Optimal risk sharing under distorted probabilities}.
\newblock \bibinfo{journal}{Mathematics and Financial Economics}
  \bibinfo{volume}{2}, \bibinfo{pages}{87--105}.
\bibitem[{Mastrogiacomo and Rosazza~Gianin(2015)}]{Mastrogiacomo2015}
\bibinfo{author}{Mastrogiacomo, E.}, \bibinfo{author}{Rosazza~Gianin, E.},
  \bibinfo{year}{2015}.
\newblock \bibinfo{title}{Pareto optimal allocations and optimal risk sharing
  for quasiconvex risk measures}.
\newblock \bibinfo{journal}{Mathematics and Financial Economics}
  \bibinfo{volume}{9}, \bibinfo{pages}{149--167}.
\bibitem[{Pflug and R\"{o}misch(2007)}]{Pflug2007}
\bibinfo{author}{Pflug, G.}, \bibinfo{author}{R\"{o}misch, W.},
  \bibinfo{year}{2007}.
\newblock \bibinfo{title}{Modeling, Measuring and Managing Risk}.
\newblock \bibinfo{edition}{1} ed., \bibinfo{publisher}{World Scientific}.
\bibitem[{Ravanelli and Svindland(2014)}]{Ravanelli2014}
\bibinfo{author}{Ravanelli, C.}, \bibinfo{author}{Svindland, G.},
  \bibinfo{year}{2014}.
\newblock \bibinfo{title}{Comonotone pareto optimal allocations for law
  invariant robust utilities on l1}.
\newblock \bibinfo{journal}{Finance and Stochastics} \bibinfo{volume}{18},
  \bibinfo{pages}{249--269}.
\bibitem[{Righi(2019a)}]{Righi2018a}
\bibinfo{author}{Righi, M.}, \bibinfo{year}{2019}a.
\newblock \bibinfo{title}{A composition between risk and deviation measures}.
\newblock \bibinfo{journal}{Annals of Operations Research}
  \bibinfo{volume}{282}, \bibinfo{pages}{299--313}.
\bibitem[{Righi(2019b)}]{Righi2019}
\bibinfo{author}{Righi, M.}, \bibinfo{year}{2019}b.
\newblock \bibinfo{title}{A theory for combinations of risk measures}.
\newblock \bibinfo{journal}{Working Paper} .
\bibitem[{Righi and Ceretta(2016)}]{Righi2016}
\bibinfo{author}{Righi, M.}, \bibinfo{author}{Ceretta, P.},
  \bibinfo{year}{2016}.
\newblock \bibinfo{title}{{Shortfall Deviation Risk: an alternative to risk
  measurement}}.
\newblock \bibinfo{journal}{Journal of Risk} \bibinfo{volume}{19},
  \bibinfo{pages}{81--116}.
\bibitem[{Righi et~al.(2020)Righi, M{\"u}ller and Moresco}]{Righi2020}
\bibinfo{author}{Righi, M.B.}, \bibinfo{author}{M{\"u}ller, F.M.},
  \bibinfo{author}{Moresco, M.R.}, \bibinfo{year}{2020}.
\newblock \bibinfo{title}{On a robust risk measurement approach for capital
  determination errors minimization}.
\newblock \bibinfo{journal}{Insurance: Mathematic and Economics}
  \bibinfo{volume}{95}, \bibinfo{pages}{199--211}.
\bibitem[{Rockafellar and Uryasev(2013)}]{Rockafellar2013}
\bibinfo{author}{Rockafellar, R.}, \bibinfo{author}{Uryasev, S.},
  \bibinfo{year}{2013}.
\newblock \bibinfo{title}{{The fundamental risk quadrangle in risk management,
  optimization and statistical estimation}}.
\newblock \bibinfo{journal}{Surveys in Operations Research and Management
  Science} \bibinfo{volume}{18}, \bibinfo{pages}{33--53}.
\bibitem[{Rockafellar et~al.(2006)Rockafellar, Uryasev and
  Zabarankin}]{Rockafellar2006}
\bibinfo{author}{Rockafellar, R.}, \bibinfo{author}{Uryasev, S.},
  \bibinfo{author}{Zabarankin, M.}, \bibinfo{year}{2006}.
\newblock \bibinfo{title}{{Generalized deviations in risk analysis}}.
\newblock \bibinfo{journal}{Finance and Stochastics} \bibinfo{volume}{10},
  \bibinfo{pages}{51--74}.
\bibitem[{R\"{u}schendorf(2013)}]{Ruschendor2013}
\bibinfo{author}{R\"{u}schendorf, L.}, \bibinfo{year}{2013}.
\newblock \bibinfo{title}{Mathematical Risk Analysis}.
\newblock \bibinfo{publisher}{Springer}.
\bibitem[{Starr(2011)}]{Starr2011}
\bibinfo{author}{Starr, R.}, \bibinfo{year}{2011}.
\newblock \bibinfo{title}{General Equilibrium Theory: An Introduction}.
\newblock \bibinfo{edition}{2} ed., \bibinfo{publisher}{Cambridge University
  Press}.
\bibitem[{Svindland(2010)}]{Svindland2010}
\bibinfo{author}{Svindland, G.}, \bibinfo{year}{2010}.
\newblock \bibinfo{title}{{Continuity properties of law-invariant
  (quasi-)convex risk functions on $L^{\infty}$}}.
\newblock \bibinfo{journal}{Mathematics and Financial Economics}
  \bibinfo{volume}{3}, \bibinfo{pages}{39--43}.
\bibitem[{Tsanakas(2009)}]{Tsanakas2009}
\bibinfo{author}{Tsanakas, A.}, \bibinfo{year}{2009}.
\newblock \bibinfo{title}{To split or not to split: Capital allocation with
  convex risk measures}.
\newblock \bibinfo{journal}{Insurance: Mathematics and Economics}
  \bibinfo{volume}{44}, \bibinfo{pages}{268 -- 277}.
\bibitem[{Wang(2016)}]{Wang2016}
\bibinfo{author}{Wang, R.}, \bibinfo{year}{2016}.
\newblock \bibinfo{title}{Regulatory arbitrage of risk measures}.
\newblock \bibinfo{journal}{Quantitative Finance} \bibinfo{volume}{16},
  \bibinfo{pages}{337--347}.
\bibitem[{Wang et~al.(2020)Wang, Wei and Willmot}]{Wang2020}
\bibinfo{author}{Wang, R.}, \bibinfo{author}{Wei, Y.},
  \bibinfo{author}{Willmot, G.E.}, \bibinfo{year}{2020}.
\newblock \bibinfo{title}{Characterization, robustness, and aggregation of
  signed choquet integrals}.
\newblock \bibinfo{journal}{Mathematics of Operations Research}
  \bibinfo{volume}{45}, \bibinfo{pages}{993--1015}.
\bibitem[{Wang and Ziegel(2018)}]{Wang2018}
\bibinfo{author}{Wang, R.}, \bibinfo{author}{Ziegel, J.}, \bibinfo{year}{2018}.
\newblock \bibinfo{title}{Scenario-based risk evaluation}.
\newblock \bibinfo{journal}{Working Paper} .
\bibitem[{Wang and Ziegel(2021)}]{Wang2021}
\bibinfo{author}{Wang, R.}, \bibinfo{author}{Ziegel, J.F.},
  \bibinfo{year}{2021}.
\newblock \bibinfo{title}{Scenario-based risk evaluation}.
\newblock \bibinfo{journal}{Finance and Stochastics} \bibinfo{volume}{25},
  \bibinfo{pages}{725--756}.
\bibitem[{Weber(2018)}]{Weber2018}
\bibinfo{author}{Weber, S.}, \bibinfo{year}{2018}.
\newblock \bibinfo{title}{Solvency ii, or how to sweep the downside risk under
  the carpet}.
\newblock \bibinfo{journal}{Insurance: Mathematics and Economics}
  \bibinfo{volume}{82}, \bibinfo{pages}{191 -- 200}.

\end{thebibliography}
 	\bibliographystyle{elsarticle-harv}
 \end{document}